\documentclass[11pt]{article}
\usepackage{fullpage}
\usepackage[T1]{fontenc}
\usepackage{lmodern}
\usepackage[utf8]{inputenc}
\usepackage{bibentry}
\usepackage{amsmath}
\usepackage{amsfonts}
\usepackage{amsthm}
\usepackage{appendix}
\usepackage{color}
\usepackage{algorithm}
\usepackage{algpseudocode}
\usepackage{amssymb}
\usepackage{setspace}
\usepackage{xspace}
\usepackage{graphicx}
\usepackage{relsize}
\usepackage{dsfont}
\usepackage{bbm}
\usepackage{graphicx}
\usepackage{subcaption}
\usepackage{enumitem}
\usepackage[pagebackref]{hyperref}
\usepackage{tikz}
\usetikzlibrary{arrows.meta, positioning, decorations.pathreplacing}

\newcommand{\mcalc}{{\mathcal{C}}}

\newcommand{\mcald}{{\mathcal{D}}}
\newcommand{\mcale}{{\mathcal{E}}}

\newcommand{\mcalg}{{\mathcal{G}}}
\newcommand{\DKL}{{D_{\textnormal{KL}}}}

\newcommand{\mcalx}{{\mathcal{X}}}

\newcommand{\mcala}{{\mathcal{A}}}
\newcommand{\mcalp}{{\mathcal{P}}}

\newcommand{\mcals}{{\mathcal{S}}}
\newcommand{\mbbe}{{\mathbb{E}}}
\newcommand{\costg}{{\textnormal{cost}_G}}
\newcommand{\cutg}{{\textnormal{cut}_G}}

\newcommand{\mbbone}{{\mathbbm{1}}}

\newcommand{\ssuv}{{\mathcal{S}_{\sigma,u,v}}}
\newcommand{\ssuvp}{{\mathcal{S}'_{\sigma,u,v}}}

\newcommand{\simc}{{\sim_{\mcalc}}}
\newcommand{\nsimc}{{\not\sim_{\mcalc}}}
\newcommand{\simcp}{{\sim_{\mcalc'}}}
\newcommand{\nsimcp}{{\not\sim_{\mcalc'}}}

\newcommand{\veps}{\varepsilon}
\newcommand{\poly}{\mathrm{poly}}

\newtheorem{theorem}{Theorem}
\newtheorem{lemma}[theorem]{Lemma}
\newtheorem{claim}[theorem]{Claim}
\newtheorem{definition}[theorem]{Definition}
\newtheorem{corollary}[theorem]{Corollary} 
\newtheorem{fact}[theorem]{Fact}
\newtheorem{proposition}[theorem]{Proposition}

\numberwithin{theorem}{section}

\newtheoremstyle{named}{}{}{\itshape}{}{\bfseries}{.}{.5em}{\thmnote{#3}}
\theoremstyle{named}
\newtheorem*{namedtheorem}{Theorem}

\title{Query Lower Bounds for Correlation Clustering under Memory Constraints}
\author{Sumegha Garg\\Rutgers University\\sumegha.garg@rutgers.edu \and Songhua He\thanks{The author was partially supported by the Rutgers University startup grant of Sumegha Garg and by Karthik C. S. through the National Science Foundation under Grant CCF-2443697.}\\Rutgers University\\sh1511@scarletmail.rutgers.edu \and Periklis A. Papakonstantinou\\Rutgers University\\periklis.research@gmail.com}
\date{}

\begin{document}

\maketitle

\begin{abstract}
This work initiates the study of memory–query tradeoffs for graph problems, with a focus on correlation clustering. Correlation clustering asks for a partition of the vertices that minimizes disagreements: non‑edges inside clusters plus edges across clusters. Our first result is a tight query lower bound: to output a partition whose cost approximates the optimum up to an additive error of $\varepsilon n^2$, any algorithm requires $\Omega(n/\varepsilon^2)$ adjacency-matrix queries. Under memory constraints, we show that even for the seemingly easier task of approximating the optimal clustering cost (without producing a partition), any algorithm in the random query model must make $\gg n/\varepsilon^2$ adjacency-matrix queries. Finally, we prove the first general graph model query lower bound for correlation clustering, where algorithms are allowed adjacency-matrix, neighbor, and degree queries. The latter two bounds are not yet tight, leaving room for sharper results.
\end{abstract}

%The correlation clustering problem has been intensively studied in recent years in several computational models. The best-known polynomial time approximation factor is 1.73 by Cohen-Addad, Lee, Li, and Newman \cite{cohen2023handling}. For sublinear algorithms, Cohen-Addad, Lolck, Pilipczuk, Thorup, Yan, and Zhang \cite{cohen2024combinatorial} gave an $\tilde O(n)$ time algorithm with approximation factor 1.847. From the lower bound side, it is known by Charikar, Guruswami, and Wirth \cite{charikar2005clustering} that the minimum disagreement correlation clustering is APX-hard. However, the constructed graph in their reduction is sparse, where the number of edges is $O(n)$. It remains an open question whether it is still hard to approximate the correlation clustering to within a super-linear in $n$ additive error.

\section{Introduction}

Sublinear-time algorithms for processing massive graphs aim to infer global graph properties while examining only a small fraction of the input. Broadly, the goal is to determine how much information an algorithm must gather in order to solve a given problem; additionally, we ask whether this cost increases when the algorithm operates under memory constraints. In particular, this work investigates these questions for Correlation Clustering (CC), a cornerstone problem in machine learning and network analysis.
Given an input graph for the correlation clustering problem, we interpret edges as pairs of ``similar'' (+) items and non-edges as pairs of ``dissimilar'' ($-$) items. The goal is to partition the vertices into clusters to minimize the total number of ``disagreements'': the number of similar pairs that are cut apart plus the number of dissimilar pairs placed in the same cluster. This problem is closely related to other classic cut problems, such as Max-Cut and Minimum Bisection, and our first two hardness results also apply directly to these.

The study of Correlation Clustering was initiated by \cite{bansal2004correlation}. Since then, the focus has been design of efficient algorithms that produce clusterings with cost is close to optimal. A clustering can be evaluated under two natural objectives: minimizing disagreements or maximizing agreements. These objectives exhibit different behavior with respect to multiplicative approximation. While the maximization version admits a PTAS \cite{bansal2004correlation}, the minimization version is NP-hard to approximate within some constant factor $c>1$ \cite{charikar2005clustering}.
On the algorithmic side, there has been a long sequence of works progressively improving the approximation ratio for minimizing disagreements: from the original 8-approximation of \cite{bansal2004correlation}, to 4 \cite{charikar2005clustering}, 3 \cite{ailon2008aggregating}, 2.5 using LP-based methods \cite{ailon2008aggregating}, 2.06 \cite{chawla2015near}, $(1.994+\epsilon)$  \cite{cohen2022correlation}, 1.73 \cite{cohen2023handling}, and most recently $1.485+\epsilon$ \cite{cao2024understanding,cao2025solving}. Notably, the result of \cite{cao2025solving} achieves this approximation guarantee using a sublinear-time algorithm.
Efficient algorithms and lower bounds for this problem have been studied in a variety of computational models, including the Massively Parallel Computation (MPC) model \cite{behnezhad2022almost, cao2024breaking, cohen2024combinatorial}, the dynamic setting \cite{behnezhad2019fully}, the streaming model \cite{bhaskara2018sublinear, cohen2021correlation, assadi2022sublinear, assadi2023streaming, behnezhad2023single, makarychev2023single, cambus20243+}, and the query model with unbounded computational time \cite{bressan2019correlation, garcia2020query}.

We study the feasibility of approximately solving correlation clustering (with additive error) under various query models and memory constraints.
Our main result establishes a time–space (query–memory) trade-off for correlation clustering in the random query model (to be defined shortly). \ In addition, we strengthen the query lower bounds of \cite{bressan2019correlation}, proving a tight lower bound in the adjacency matrix model and a strong lower bound in the general graph model. We focus on clusterings that minimize disagreements, which is equivalent to maximizing agreements in the additive-error regime.  Throughout, an additive error of $\varepsilon n^2$ means the algorithm's solution cost may be at most $\varepsilon n^2$ worse than the true optimum.

%Our work makes three primary contributions towards understanding the query complexity of correlation clustering. We establish the precise information-theoretic requirements of solving CC in sublinear time (i.e., in terms of query complexity) and, importantly, initiate the study of query-space tradeoffs for graph problems. Throughout, an additive error of $\varepsilon n^2$ means the algorithm's solution cost may be at most $\varepsilon n^2$ worse than the true optimum. We remark that our main conceptual and technical contribution is the memory-query tradeoff for approximating the optimal clustering cost (in the random query model). However, to put things in proper context, we will present this as our second result in the introduction. 

\subsubsection*{A tight query bound in the standard model}

We first consider the classic setting for studying sublinear-time algorithms: an adjacency-matrix query model in which an algorithm has unbounded memory and may adaptively query any pair of vertices to determine whether an edge exists between them. In this model, we prove the following asymptotically tight query lower bound for finding a nearly optimal clustering.

\paragraph{Theorem 1 (Informal -- restated as Theorem~\ref{thm:strong_tight}).} \emph{Every randomized algorithm that finds a clustering within an additive error of $O(\varepsilon n^2)$ must make $\Omega(n/\varepsilon^2)$ adaptive adjacency-matrix queries.}

~

This result improves a previous $\Omega(n/\varepsilon)$ lower bound given in \cite{bressan2019correlation} and matches the known $O(n/\varepsilon^2)$ upper bound for the pure-additive guarantee $\mathrm{OPT}+\varepsilon n^2$ in the same model \cite{bressan2019correlation}, settling the query complexity of CC in the adjacency-matrix query model for this objective.\footnote{A related work \cite{garcia2020query} gives better query complexity for a weaker mixed multiplicative-additive guarantee, gives expected cost $3\mathrm{OPT}+O(n^3/Q)$ using $Q$ queries. Setting $Q=\Theta(n/\varepsilon)$ gives $3\mathrm{OPT}+O(\varepsilon n^2)$, but this is not a pure-additive $\mathrm{OPT}+\varepsilon n^2$ guarantee. Thus our lower bound is compared with the pure-additive upper bound of \cite{bressan2019correlation}.} A proof overview is provided in Section \ref{sec:poadjacency}.
Next, we show that for the seemingly easier problem of approximating the optimal clustering cost, the query complexity increases dramatically under memory constraints, albeit in the random query model.

\subsubsection*{A query-space tradeoff for approximating the optimal clustering cost}

What happens when memory is scarce? To understand the additional query cost of limited memory algorithms, we turn to a model where the algorithm receives a stream of uniformly random vertex pairs and indicators of whether these pairs connect or not in the underlying graph. This is the random-query model. 
For computing boolean functions, time-space tradeoffs in the random query model have been introduced and studied in~\cite{raz2020random, dinur2024time}.
%which were extremely successful in studying time-space tradeoffs \cite{raz2017time,raz2018fast,garg2019time,garg2020time,raz2020random,garg2021memory}.
%\shnote{Is the above accurate?}
In our second result, the task is to estimate the \emph{value} of the optimal clustering, a seemingly easier goal than finding the partition.
%{thm:loose_bound_main}

\paragraph{Theorem 2 (Informal -- restated as Theorem~\ref{thm:streaming_main}).} \emph{In the random-query model, an algorithm using only $\gamma\sqrt{n}$ bits of memory to estimate the optimal CC cost, within additive error of $\varepsilon n^2$, needs $q$ queries where $$q=\begin{cases}
    \Omega\left(\min\left\{\frac{n}{\varepsilon^2\sqrt{\gamma}},\frac{n^{3/2}}{\gamma}\right\}\right)&\textnormal{if }\gamma<1\\
    \Omega\left(\frac{n}{\varepsilon^2}\right)&\textnormal{if }\gamma\ge 1
\end{cases}$$}

~

To the best of our knowledge, this is the first non-trivial query–space trade-off for any approximation problem, and more broadly for any graph problem in the random query model. Our result demonstrates that memory is not free: when the available space is subpolynomial (i.e., $\gamma = o(1)$), the query complexity necessarily exceeds the $\tilde{O}(n/\varepsilon^2)$ baseline -- for example, becoming $\tilde{\Omega}\left(\min\left(\frac{n^{1+1/4}}{\varepsilon^2}, n^2\right)\right)$ for polylogarithmic-space algorithms. This shows that memory and query complexity are provably and intrinsically intertwined.\footnote{Here is a word of caution on interpretation. Relative to Theorem~1, in Theorem~2 we change three parameters of the setting. 
(i) We impose a memory cap. 
(ii) We switch to the random–query model~\cite{raz2020random}. 
(iii) We study the \emph{value} (cost) problem rather than the search problem. 
Our tradeoff should be read and contrasted with the other results in light of all three changes.}
The proof of the above theorem faces various technical challenges, which we elaborate on in Section \ref{sec:techniques_time_space}.
%Technically, our proof incorporates information theory and a boolean Fourier analytic framework for random-query, low-space algorithms, non-trivially adapting \cite{kapralov2014streaming}. Unlike \cite{kapralov2014streaming}, which gives only streaming lower bounds, our method yields query–space tradeoffs (see Section~\ref{sec:techniques_time_space} for a detailed discussion of our proof techniques). The first tradeoff of this flavor is \cite{raz2020random} for random-query Boolean functions (not graph problems) via branching programs. Theorem~2 is the graph analogue of \cite{raz2020random}, extending the Fourier spine of \cite{kapralov2014streaming} to dense graphs with repeated queries.

The setting of Theorem~2 differs from that of Theorem~1 in two ways: we use the random-query model instead of arbitrary queries, and we consider the cost (value) variant of the problem. We briefly explain these choices.
First, allowing arbitrary queries would yield lower bounds that translate into non-trivial space lower bounds for space-bounded Turing machines on a natural graph problem.  Thus, restricting how we access the input circumvents this difficulty and makes the problem amenable to analysis.
Second, we use the value formulation for two reasons: simplicity and, more importantly, because under $O(\sqrt n)$ space the search version becomes vacuous -- storing a near-optimal clustering already requires $\Omega(n)$ bits, even for two clusters. Thus any lower bound would be dominated by output size rather than computation. The value formulation removes this artifact and helps in exhibiting the first memory–query trade-off for graph problems. We view Correlation Clustering as a canonical testbed, and see this work as a first step toward a broader study of query–space trade-offs across graph primitives.

\subsubsection*{A lower bound in the general graph model}

Finally, we consider a stronger query model, the \emph{general graph model}. In this model an algorithm can make pair, degree, and neighbor queries. This model was introduced and studied in \cite{kaufman2004tight, alon2008testing}. This model is significantly stronger, as neighbor queries can reveal global graph structure and create complex statistical dependencies that are challenging to analyze. Specifically, it is known from \cite{kaufman2004tight} that testing bipartiteness in the general graph model is strictly stronger than algorithms in the adjacency list or (exclusively)  adjacency matrix models alone. However, proving lower bounds in the general model is conceptually difficult.

\paragraph{Theorem 3 (Informal -- restated as Theorem~\ref{thm:loose_bound_main}).} \emph{In the general graph model, any algorithm that finds a clustering with an additive error of $\varepsilon n^2$ requires $\Omega(n/\varepsilon)$ queries.}

~

While this bound is not as strong as Theorem~1, to the best of our knowledge it establishes the first query lower bounds for correlation clustering in the general graph model. In the proof, we deal with the challenges of this model by analyzing query interactions on carefully constructed regular graphs (graphs where symmetry in some sense turns neighbor queries useless).

\subsection{Our techniques}
\label{sec:techniques}

Below, we outline the main ideas behind our lower bound proofs. 

\subsubsection{Memory-query tradeoffs for approximating the clustering cost in the random query model}
\label{sec:techniques_time_space}

To illustrate the proof techniques, we fix $\veps = n^{-1/4}$ and consider $(\poly\log n)$-memory algorithms that approximate the clustering cost within an additive error of $\veps n^2$. Given a graph $G$ on $n$ vertices, in the random query model, at each time step $t$, the algorithm receives a uniformly random vertex pair $(u,v)$ along with the indicator bit specifying whether $(u,v)$ is an edge of $G$. Using a VC-dimension argument \cite{bressan2019correlation}, it follows that $O(n/\veps^2) = O(n^{1.5})$ random queries suffice to produce a partition whose cost is within an additive error of $\veps n^2=n^{7/4}$ from optimal. Although representing such a partition requires $\Omega(n)$ bits of memory, it is natural to expect that just approximating the optimal clustering cost might be possible with significantly less memory\footnote{Indeed, low-space streaming algorithms are known for certain approximation parameters~\cite{bhaskara2018sublinear,assadi2023streaming}.}. Before our work, it was not known whether even $O(\log n)$-memory algorithms could achieve an additive $O(n^{7/4})$-approximation using only $O(n^{1.5})$ random queries\footnote{While \cite{assadi2023streaming} established $\poly\log n$-space hardness for approximating the clustering cost within an additive error of $o(n^2)$ in the worst-case streaming model, proving hardness in the random query model introduces new technical challenges, even relative to the random-order streaming setting.}. We show that any $\poly\log n$-memory algorithm achieving this guarantee must in fact use $\tilde{\Omega}(n^{7/4})\gg n^{1.5}$ random queries.

The random-query model was introduced in \cite{raz2020random} to study memory-query tradeoffs for computing $n$-bit Boolean functions, when the algorithm receives a random input bit at each time-step. This and the subsequent work of~\cite{dinur2024time} established non-trivial tradeoffs for functions with high sensitivity and total influence, respectively. However, for promise or approximation problems, no small set of edges has high influence, and therefore these prior techniques do not apply to our setting. Instead, our result builds on the approach of \cite{kapralov2014streaming}, which established hardness of approximating MAX-CUT (and the clustering cost) within a multiplicative factor of $(2-\delta)$ over sparse graphs, in the one-pass streaming model. Specifically, the paper leveraged tight lower bounds for a two-player one-way communication problem -- called the Distributional Boolean Hidden Partition (D-BHP) problem -- to show that any (random-order) streaming algorithm that distinguishes between a bipartite graph and a random graph must use $\Omega(\sqrt{n})$ bits of memory. 

There are two main challenges in extending this approach to prove an $\tilde{\Omega}(n^{7/4})$ random-query lower bound for additive approximation of the optimal clustering cost (under memory constraints). First, \cite{kapralov2014streaming} establishes lower bounds only for sparse graphs with $O(n)$ edges; for such sparse graphs, an additive approximation of $O(n^{1.5})$ to the clustering cost is trivial. Second, in the random-query model, after $\Omega(n)$ queries, edge repetitions occur with high probability, unlike in the random-order one-pass streaming model, where each edge is guaranteed to appear exactly once. The ability to go over the same edge twice significantly increases the technical difficulties in  proving hardness results\footnote{Only recently, the breakthrough work of \cite{fei2025multi} showed that any constant-pass streaming algorithm for distinguishing between a sparse bipartite graph and a random graph requires $\Omega(n^{1/3})$ bits of space. Since multi-pass streaming algorithms read the stream in the same order, this model is incomparable to the random query model.}.

To overcome the first challenge, we introduce and study a noisy variant of D-BHP, which we call the Perturbed Distributional Boolean Hidden Partition (PD-BHP) problem. This formulation allows us to work with dense graphs that satisfy the bipartite partition ``approximately''. Starting with the Boolean Hidden Matching (BHM) problem introduced by \cite{gavinsky2007exponential}, BHM, D-BHP, and their variants have been extensively used to prove streaming and sketching lower bounds (see, e.g., \cite{verbin2011streaming, kapralov2014approximating, kapralov2014streaming, kogan2015sketching, guruswami2017streaming, kapralov20171+, kallaugher2018sketching, guruswami2019streaming, chou2020optimal, assadi2022hierarchical}). The tight bounds we establish for PD-BHP may therefore be of independent interest. Similar to \cite{kapralov2014streaming}, we use a hybrid argument to lift the two-player one-way communication lower bound to a multi-player one-way communication bound, where each player receives an independent graph obeying the hidden partition. In the random-query model, however, there is a single underlying graph $G$ dictating the inputs to all players; in particular, repeated queries to the same edge must be consistent with previous observations. This brings in the second challenge. To overcome it, we show that any algorithm capable of distinguishing between the case where all edge queries are consistent and the case where each queried edge is independently resampled must use either $\poly\log n$ bits of memory or $\Omega(n^2 / \poly\log n)$ random queries. We refer to this as the same vector problem, and we establish tight memory–query tradeoffs for it.

\paragraph{Tight communication complexity for PD-BHP problem:} Consider the following two-player communication problem. Alice receives a uniformly random $n$-bit vector $x\in\{0,1\}^n$. Bob receives a graph $G=(V,E)$ on $n$ vertices with uniformly random set of $r$ edges, and a vector $w \in \{0,1\}^r$. In the NO case, each $w_i$ is an independent uniformly random bit. In the YES case, for the $i$th edge $(u_i,v_i) \in E$, we have $$w_i=\begin{cases}
        x_{u_i} \oplus x_{v_i}, & \text{with probability } \tfrac{1}{2} - 10\varepsilon, \\[6pt]
        1 - (x_{u_i} \oplus x_{v_i}), & \text{with probability } \tfrac{1}{2} + 10\varepsilon.
    \end{cases}$$
In contrast, in the D-BHP problem studied by \cite{kapralov2014streaming}, the YES case is noiseless, that is $w_{i}=x_{u_i}\oplus x_{v_i}$. The goal is for Bob to distinguish between the two cases with as little communication from Alice as possible. We show that when Bob receives $r = \alpha n/\varepsilon^2$ edges (for some $0<\alpha<1$), and Alice sends at most $\gamma\sqrt{n}$ bits (for some $\gamma>0$), Bob’s distinguishing advantage is at most $O((\gamma+\alpha)\alpha^{1/2})$. This bound is tight: if Alice simply sends the first $O(\sqrt{n/\alpha})$ coordinates of $x$ to Bob, then with high probability, the graph $G$ will contain at least $\Omega(1/\varepsilon^2)$ edges whose endpoints lie entirely within these coordinates, enabling Bob to distinguish the two cases with constant advantage. To lift the two-player communication bound to the multi-player setting, we parametrize $0 < \alpha = \gamma < 1$. Our proof builds on the Fourier-analytic approach of \cite{kapralov2014streaming} for the D-BHP problem. A key challenge in extending their proof is that, since Bob now receives $\gg n$ edges, the graph $G$ necessarily contains many cycles (of length at least $3$). In contrast, the argument of \cite{kapralov2014streaming} relies crucially on the fact that when Bob receives $\ll n$ edges, the graph is cycle-free. By carefully balancing the number of cycles against the distinguishing advantage they provide, we are able to extend the Fourier-analytic proof to handle the noisy case on \emph{denser} graphs.

\paragraph{Same vector problem:} Let $N = \binom{n}{2}$. Consider the following distinguishing problem, where the goal is to decide whether the random queries are ``consistent''. At each time step $t$, the algorithm receives a random index $i_t \in [N]$ and a bit $b_t$. In the NO case, $b_t$ is uniformly random, whereas in the YES case, $b_t = x_{i_t}$, where $x \in \{0,1\}^N$ is fixed at the start. We show that any algorithm that solves the distinguishing problem, in $\sqrt{N} < T < N$ time-steps, must use at least $\Omega(N/T)$ bits of memory.\footnote{Recall that by the Birthday Paradox, the probability of sampling the same index twice is negligible when $T\ll\sqrt N$. Our result covers the entire non-trivial range of $T$ where collisions are possible but not guaranteed.} This bound is tight (up to log factors): by storing the first $s$ queries, an algorithm will, with high probability, encounter a repeated index within $O(N/s)$ time steps, and thus distinguish with constant advantage. To obtain the tight result, we use an information-theoretic potential function that measures the maximum progress any $s$-memory algorithm can make towards distinguishing at a time-step. For our main theorem, we generalize the same vector problem to non-uniform bits.

\subsubsection{Tight query lower bound for approximate clustering in the adjacency-matrix query model}\label{sec:poadjacency}
%\sgnote{change the heading to only correlation clustering?}

For every fixed parameter $\varepsilon\in\left(\omega\left(\sqrt{\frac{\log n}{n}}\right),0.001\right)$,
we establish an $\Omega(n/\varepsilon^{2})$ query lower bound for approximating the correlation clustering partition within additive error $\varepsilon n^2$ (Theorem~\ref{thm:strong_tight}), in the adjacency matrix model. Our lower bound improves the existing $\Omega(n/\varepsilon)$ lower bound \cite{bressan2019correlation}, and matches the upper bound given in the same paper \cite{bressan2019correlation}. The same proof structure also yields the same query lower bound to searching a max cut or a minimum bisection with $\varepsilon n^2$ additive error (Theorem~\ref{thm:strong_tight_mcut_mbis}).

The prior $\Omega(n/\varepsilon)$ lower bound is built on a structured hard distribution. Specifically, the vertex set is partitioned into two parts, $A$ and $B$, where $|A|=0.9n$ and $|B|=0.1n$. The subgraph induced by $A$ consists of $1/\varepsilon$ disjoint cliques, $A_1,\dots,A_{1/\varepsilon}$, of equal size, while the subgraph induced by $B$ is an empty graph. For every vertex $v$ in $B$, $v$ is connected to all vertices in a single clique $A_i$, with $i$ chosen uniformly at random. With high probability, the optimal clustering groups each $A_i$ with the vertices from $B$ connected to it. An algorithm must query $\Omega(1/\varepsilon)$ edges for each vertex in $B$ to find the clique it connects to, thus a lower bound of $\Omega(n/\varepsilon)$ follows. This lower bound is optimal for this distribution. To improve upon this bound, we design a noisy distribution with less structure, where each query yields less information.

Let $\rho = 100\varepsilon$. We sample a random \emph{underlying partition} $P \in \{0,1\}^n$ that divides the vertex set into two parts $V = V_0 \cup V_1$. For each pair of vertices $u, v$ from the same part, we connect them independently with probability $1/2 + \rho$; for pairs from different parts, with probability $1/2 - \rho$.

For a balanced partition $P$ (i.e., $||V_0| - |V_1|| \le \tilde O(\sqrt{n})$), concentration bounds imply that the cut size of $P$ is $(1/2+\rho)n^2/4\pm O(n^{1.5})=n^2/8+25\varepsilon n^2\pm \tilde O(n^{1.5})$ with high probability. In contrast, the expected cut size of a uniformly random partition $P'\in\{0,1\}^n$ is $n^2/8$, leaving a gap much larger than $\varepsilon n^2$. The key insight is that, unless the algorithm recovers $\Omega(n)$ bits of the partition $P$, it cannot produce a clustering with $\le \varepsilon n^2$ additive error. In addition, each adjacency matrix query can at most reveal $O(\varepsilon^{2})$ information about $P$.

To formalize the intuition, our lower bound proof consists of three steps
\begin{enumerate}
    \item We apply a generalized version of Fano’s inequality tailored for approximation problems. (See Lemma~\ref{lem:strong_fano}.)
    \item We show that, with high probability, every output clustering that is far from $P$ incurs high additive error. (See Lemma~\ref{lem:strong_hamming_dis_cc}.)
    \item We bound the information each query reveals about $P$, showing that each adjacency-matrix query contributes at most $O(\varepsilon^{2})$ bits of information. (See Lemma~\ref{lem:strong_mutual_information})
\end{enumerate}

Fano’s inequality connects the success probability of a randomized algorithm to the information its queries reveal about the answer. We extend it to handle approximate answers. Specifically, let $P \in \{0,1\}^n$ be the underlying partition, $\sigma_\Pi$ the query history of algorithm $\Pi$, $\mcala_G$ the set of approximately correct answers (i.e., those within additive error $\le \varepsilon n^2$), and $p_e$ the error probability. Then our generalized Fano's inequality implies:
$$p_e\cdot n+(1-p_e)\cdot \log|\mcala_G|+H(p_e)\ge H(P|\sigma_\Pi)$$

However, the number of approximate answers for correlation clustering could be as large as $\exp(\Theta(n\log n))$. No non-trivial bound to $p_e$ could be obtained from Fano's inequality. We resolve it by showing that clusterings with low errors must have their two largest clusters closely aligned with $P$. Smaller clusters can be safely ignored. Therefore, we reduce the correlation clustering from the the problem that only requires outputting the largest two clusters, where the number of approximately correct answers can be bounded by $\exp(O(n))$.

Finally, we upper bound the mutual information between $P$ and the query history of the algorithm. For each query to a pair of vertices $(u,v)$, depending on whether $u$ and $v$ are in the same part of $P$, their connectivity follows an independent distribution either from $\text{Bern}(1/2 - \rho)$ or $\text{Bern}(1/2 + \rho)$. Therefore, conditioned on the query history, the probability of $(u,v)\in E$ follows a mixture distribution of $\text{Bern}(1/2 - \rho)$ and $\text{Bern}(1/2 + \rho)$. By bounding the KL divergence one can see that each query reveals at most $O(\varepsilon^2)$ information about $P$.

\subsubsection{Query lower bounds for approximate clustering in the general graph model}\label{sec:pogeneral}

The previous result gives a tight lower bound in a weak model, the adjacency matrix model. Here we give a weaker lower bound in the general graph model, which is a query model introduced by \cite{kaufman2004tight} in property testing. In this model, algorithms have both adjacency-matrix access and adjacency-list access to the input graph. Specifically, we show that every algorithm that approximates the correlation clustering within additive error $\varepsilon n^2$ in the general graph model requires $\Omega(n/\varepsilon)$ queries, for every $\varepsilon\in\left(\omega\left(\frac{\log n}{n}\right),10^{-6}\right)$.

Our proof to the tight query lower bound in the adjacency-matrix model relies on the fact that each query is very ``local''. However, when adjacency-list query access is allowed, neighbor queries and degree queries create statistical dependencies between different parts of the input graph that are hard to bound. In addition, the hard input distribution used in \cite{bressan2019correlation} fails when neighbor queries are introduced. Therefore, we will use a different proof strategy.

First, we construct an adversarial input distribution on regular graphs of fixed degree. In this way, we can ignore the information brought by degree queries. The input graph is sampled together with an \emph{underlying clustering} $\mcalc$. Specifically, let $\mcalc=(C_1,\dots,C_k)$ be a random clustering to the vertex set $V$ that contains $k=0.01/\varepsilon$ clusters, each of size exactly $n/k=100\varepsilon n$. Given $\mcalc$, each cluster is a clique in $G=(V,E)$. For every pair of clusters $C_\alpha,C_\beta$ in $\mcalc$, their induced bipartite graph $G[C_\alpha,C_\beta]$ is a uniformly random $(\varepsilon n)$-regular bipartite graph. We construct the input graph such that every vertex is incident to redundant edges connecting it to vertices in different clusters. And it becomes hard to find a neighbor from the same cluster. Yet the optimal clustering is still dominated by the underlying clustering. In addition, bipartite subgraphs induced by different pairs of clusters are generated independently. This decorrelates the dependencies between parts of the graph, and simplified our proof. Intuitively, one needs to make $\Omega(k)=\Omega(\varepsilon^{-1})$ queries to each vertex to find its cluster. Our proof formalizes this intuition.

For proving a lower bound, we assume that the algorithm, when making pair queries and neighbor queries, also get to know whether the two vertices are from the same cluster in $\mcalc$ or not. We call a vertex \emph{revealed} if a pair/neighbor query finds an edge between this vertex and another vertex from the same cluster. Fix a desired query lower bound $t=C\cdot n/\varepsilon$ for a small constant $C>0$. We show that every algorithm making at most $t$ queries will reveal fewer than $0.001n$ vertices with high probability. This is achieved by showing that under certain conditions, every possible query will reveal a new vertex with probability $\le O(\varepsilon)$. (See Lemma~\ref{lem:loose_bound_core_prob} for details.)

Given a query history of length $\le t$ that reveals $\le 0.001n$ vertices, we show that every possible output clustering has a high cost with high probability. To achieve it, we first show that with high probability \emph{every} clustering with low cost must be \emph{close} to the underlying clustering $\mcalc$, where we capture the closeness of two clusterings $\mcalc,\mcalc'$ by the number of pairs $(u,v)$ where they belong to the same cluster in one clustering, but belong to different clusters in another. (See Lemma~\ref{lem:loose_bound_case_i} for details.)

Lastly, by reusing our probability bound in Lemma~\ref{lem:loose_bound_core_prob}, we show in Lemma~\ref{lem:loose_bound_conditionalbd} that when no more than $0.001n$ vertices are revealed, for every clustering $\mcalc'$, its symmetric difference to $\mcalc$ is large with high probability. Therefore, we conclude that for every possible output clustering, with high probability, it has a high cost.

\paragraph{Toward a tight lower bound in the general graph model.} The central technical challenge towards a tight lower bound lies in constructing a regular graph distribution with a hidden bi-partition while successfully decorrelating the complex statistical dependencies introduced by adaptive neighbor queries, which allow algorithms to efficiently explore local structures. We conjecture that the tight lower bound is achievable via a locally uninformative construction and the analytical toolkit established in this work.

%\subsubsection{Future Directions}
%\sgnote{Add that it is interesting to go beyond our lower bound as well superlinear query?}

\section{Preliminaries}

We always denote $[k]$ as the set $\{1,\dots, k\}$. For every binary string $S$, we use $|S|$ to denote its Hamming weight. For every two binary strings $S,S'$ of the same length, we use $d_H(S,S')$ to denote their Hamming distance.

\paragraph{Concentration bounds.} We use the following standard forms of concentration bounds in this work.

\begin{proposition}[Markov's inequality]
    \label{prop:markov}
    For every non-negative random variable $X$ and any $a>0$,
    $$\Pr[X\ge a]\le \frac{\mbbe[X]}{a}$$
\end{proposition}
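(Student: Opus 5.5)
The plan is to prove the inequality pointwise on the sample space and then take expectations. Fix $a>0$ and a non-negative random variable $X$, and let $\mbbone_{\{X\ge a\}}$ denote the indicator of the event $X\ge a$. The central claim is the deterministic inequality
$$a\cdot \mbbone_{\{X\ge a\}} \le X,$$
which should hold for every outcome $\omega$ of the underlying probability space.

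To verify it, split into two cases according to the value of $X(\omega)$. If $X(\omega)\ge a$, the left-hand side equals $a$, and $a\le X(\omega)$ by assumption. If $X(\omega)<a$, the left-hand side equals $0$, and $0\le X(\omega)$ because $X$ is non-negative. This second case is the only place non-negativity is used, and it is exactly what the argument needs.

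Next, take expectations of both sides. By monotonicity and linearity of expectation,
$$a\Pr[X\ge a]=\mbbe\bigl[a\,\mbbone_{\{X\ge a\}}\bigr]\le \mbbe[X].$$
This step is valid even when $\mbbe[X]=\infty$, since in that case the inequality is trivially true. Finally, divide by $a>0$ to obtain $\Pr[X\ge a]\le \mbbe[X]/a$, which is the statement.

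There is no substantive obstacle. The only points that need care are that non-negativity enters through the case $X<a$, and that $a>0$ is required for the final division.
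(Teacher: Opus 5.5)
Your proof is correct. The pointwise bound $a\cdot\mbbone_{\{X\ge a\}}\le X$, followed by taking expectations and dividing by $a>0$, is the standard argument. The paper gives no proof of its own here: it simply quotes Markov's inequality as a standard concentration fact in the preliminaries, so there is nothing further to compare.
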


\begin{proposition}[Hoeffding's inequality]
    \label{prop:hoeffding}
    Let $X_1,X_2,\dots, X_n$ be independent random variables such that $a_i\le X_i\le b_i$. Let $X=\sum_{i=1}^n X_i$. Then, for every $t>0$,
    $$\Pr[|X-\mbbe[X]|\ge t]\le 2\cdot \exp\left(-\frac{2t^2}{\sum_{i=1}^n(b_i-a_i)^2}\right)$$
\end{proposition}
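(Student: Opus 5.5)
This is the standard Hoeffding inequality, and I would prove it by the Chernoff exponential-moment method combined with Hoeffding's lemma. By symmetry it suffices to bound the upper tail $\Pr[X-\mbbe[X]\ge t]$: applying the same bound to the variables $-X_i$, which lie in $[-b_i,-a_i]$ and so have the same ranges $b_i-a_i$, gives the lower tail, and a union bound over the two tails produces the factor $2$.

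For the upper tail, I would fix $\lambda>0$ and apply Proposition~\ref{prop:markov} to the non-negative variable $e^{\lambda(X-\mbbe[X])}$. This gives
$$\Pr[X-\mbbe[X]\ge t]\le e^{-\lambda t}\,\mbbe\big[e^{\lambda(X-\mbbe[X])}\big].$$
Because the $X_i$ are independent, the expectation factors as
$$\mbbe\big[e^{\lambda(X-\mbbe[X])}\big]=\prod_{i=1}^n \mbbe\big[e^{\lambda(X_i-\mbbe[X_i])}\big].$$

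The key ingredient is Hoeffding's lemma. It states that if $Y$ has mean zero and takes values in an interval of length $c$, then $\mbbe[e^{\lambda Y}]\le e^{\lambda^2c^2/8}$. I expect this to be the only non-routine step, and I would prove it as follows. Write $Y\in[a,b]$ with $a\le 0\le b$. By convexity of $y\mapsto e^{\lambda y}$ on $[a,b]$,
$$e^{\lambda y}\le \frac{b-y}{b-a}e^{\lambda a}+\frac{y-a}{b-a}e^{\lambda b}.$$
Taking expectations and using $\mbbe[Y]=0$ gives $\mbbe[e^{\lambda Y}]\le e^{L(h)}$, where $h=\lambda(b-a)$, $p=-a/(b-a)$, and
$$L(h)=-hp+\ln(1-p+pe^h).$$
One checks that $L(0)=L'(0)=0$. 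The second derivative $L''(h)$ has the form $q(1-q)$ for some $q\in[0,1]$, so $L''(h)\le 1/4$ everywhere. Taylor's theorem with remainder then gives $L(h)\le h^2/8$, which is the claimed bound.

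Applying the lemma to each centered variable $Y_i=X_i-\mbbe[X_i]$, which lies in an interval of length $b_i-a_i$, yields
$$\Pr[X-\mbbe[X]\ge t]\le \exp\Big(-\lambda t+\tfrac{\lambda^2}{8}\textstyle\sum_{i=1}^n(b_i-a_i)^2\Big).$$
Finally, I would optimize over $\lambda$. Choosing $\lambda=4t/\sum_{i=1}^n(b_i-a_i)^2$ minimizes the exponent, which becomes $-2t^2/\sum_{i=1}^n(b_i-a_i)^2$. Combined with the symmetric argument for the lower tail, this gives the stated inequality.
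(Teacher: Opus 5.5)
Your proof is correct. It is the standard Chernoff-method argument: Markov's inequality applied to $e^{\lambda(X-\mathbb{E}[X])}$, factorization by independence, Hoeffding's lemma, then optimizing at $\lambda=4t/\sum_i(b_i-a_i)^2$. The key steps all check out: the convexity bound reduces to $L(h)=-hp+\ln(1-p+pe^h)$, and you have $L(0)=L'(0)=0$ and $L''=q(1-q)\le 1/4$.

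The paper gives no proof to compare against. It lists this proposition among standard concentration bounds in the preliminaries and uses it as a black box, and your argument is the usual textbook proof.

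One small nit: your parametrization $p=-a/(b-a)$ needs $b>a$. If some $b_i=a_i$, then $X_i$ is constant and can simply be dropped from the sum.
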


\begin{proposition}[Chernoff bound]
    \label{prop:chernoff}
    Let $X_1,X_2,\dots, X_n$ be independent random variables such that $X_i\in [0,1]$. Let $X=\sum_{i=1}^nX_i$. Then, for every $\delta>0$,
    $$\Pr[|X-\mbbe[X]|\ge \delta\cdot \mbbe[X]]\le 2\cdot \exp\left(-\frac{\delta^2}{2+\delta}\cdot \mbbe[X]\right)$$
\end{proposition}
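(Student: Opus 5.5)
This is the standard Chernoff bound, and I would prove it by the moment-generating-function (exponential Markov) method. I would bound the upper tail $\Pr[X\ge (1+\delta)\mu]$ and the lower tail $\Pr[X\le (1-\delta)\mu]$ separately, where $\mu=\mbbe[X]$. Each will be shown to be at most $\exp\!\left(-\frac{\delta^2}{2+\delta}\mu\right)$, and a union bound then gives the factor $2$.

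\textbf{Step 1 (MGF of a $[0,1]$ variable).} Write $p_i=\mbbe[X_i]$. For any real $\lambda$, the map $x\mapsto e^{\lambda x}$ is convex on $[0,1]$, so $e^{\lambda x}\le 1-x+xe^{\lambda}$ there. Taking expectations gives $\mbbe[e^{\lambda X_i}]\le 1+p_i(e^\lambda-1)\le \exp(p_i(e^\lambda-1))$. By independence, $\mbbe[e^{\lambda X}]\le \exp(\mu(e^\lambda-1))$.

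\textbf{Step 2 (upper tail).} For $\lambda>0$, Proposition~\ref{prop:markov} applied to $e^{\lambda X}$ gives
$$\Pr[X\ge(1+\delta)\mu]\le \exp\bigl(\mu(e^\lambda-1)-\lambda(1+\delta)\mu\bigr).$$
Choosing $\lambda=\ln(1+\delta)$ yields $\exp\bigl(-\mu[(1+\delta)\ln(1+\delta)-\delta]\bigr)$. It remains to show the elementary inequality $\ln(1+\delta)\ge \frac{2\delta}{2+\delta}$ for $\delta\ge 0$. Both sides vanish at $0$, and comparing derivatives reduces this to $(2+\delta)^2\ge 4(1+\delta)$, which is true. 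Then
$$(1+\delta)\ln(1+\delta)-\delta\;\ge\;\frac{2\delta(1+\delta)}{2+\delta}-\delta\;=\;\frac{\delta^2}{2+\delta}.$$

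\textbf{Step 3 (lower tail).} For $0<\delta<1$, take $\lambda=\ln(1-\delta)<0$ and apply Markov to $e^{\lambda X}$ on the event $X\le(1-\delta)\mu$. This gives the bound $\exp\bigl(-\mu[\delta+(1-\delta)\ln(1-\delta)]\bigr)$. The series expansion of $(1-\delta)\ln(1-\delta)$ shows $(1-\delta)\ln(1-\delta)\ge -\delta+\delta^2/2$, so the bound is at most $\exp(-\delta^2\mu/2)\le \exp\!\left(-\frac{\delta^2}{2+\delta}\mu\right)$.

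For $\delta>1$ the lower-tail event is empty, since $X\ge 0$. For $\delta=1$ the event is $X\le 0$. Here I would let $\lambda\to-\infty$ in Step 1, which gives $\Pr[X\le 0]\le e^{-\mu}\le e^{-\mu/3}$, matching $\frac{\delta^2}{2+\delta}=\frac13$.

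The only step needing care is the pair of elementary logarithmic inequalities. They are settled by the derivative comparison in Step 2 and the series expansion in Step 3; everything else is routine.
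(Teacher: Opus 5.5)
Your proof is correct. The paper does not prove this proposition; it quotes it as a standard concentration bound, and your argument is exactly the textbook moment-generating-function proof it relies on: the convexity bound on $\mbbe[e^{\lambda X_i}]$, the choice $\lambda=\ln(1\pm\delta)$, and the inequalities $\ln(1+\delta)\ge \frac{2\delta}{2+\delta}$ and $(1-\delta)\ln(1-\delta)\ge -\delta+\delta^2/2$. One small loose end: if $\mbbe[X]=0$, your claim that the lower-tail event is empty for $\delta>1$ is false, but the right-hand side then equals $2$, so the statement holds trivially.
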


\begin{proposition}[The generic Chernoff bound]
    \label{prop:generic_chernoff}
    Let $X$ be a random variable. For every $a$ and every $t>0$,
    $$\Pr[X\ge a]\le \mbbe[\exp(tX-ta)]$$
\end{proposition}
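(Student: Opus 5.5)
The statement is the generic Chernoff bound: for every random variable $X$, every $a$, and every $t>0$, $\Pr[X\ge a]\le \mbbe[\exp(tX-ta)]$. I will prove it by reducing to Markov's inequality (Proposition~\ref{prop:markov}), applied to an exponential transform of $X$.

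Fix $a$ and $t>0$, and set $Y=\exp(tX-ta)$. Then $Y$ is a non-negative (indeed strictly positive) random variable, so Markov's inequality may be applied to it. The map $x\mapsto \exp(t(x-a))$ is strictly increasing because $t>0$. Hence the events $\{X\ge a\}$ and $\{Y\ge 1\}$ coincide, since $X\ge a$ holds if and only if $t(X-a)\ge 0$, which holds if and only if $\exp(t(X-a))\ge 1$. Therefore
$$\Pr[X\ge a]=\Pr[Y\ge 1]\le \frac{\mbbe[Y]}{1}=\mbbe[\exp(tX-ta)],$$
where the inequality is Proposition~\ref{prop:markov} with threshold $1$.

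One edge case needs a remark. If $\mbbe[Y]=+\infty$, the claimed bound holds trivially. Markov's inequality as stated covers this case as well, since it requires only non-negativity and not finiteness of the expectation.

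There is no real obstacle. The only point requiring care is that $t>0$ is what makes the transformation monotone and keeps the event unchanged. Non-negativity of $Y$ is automatic from the exponential, so no further assumptions on $X$ are needed.
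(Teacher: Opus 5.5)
Your proof is correct: for $t>0$ the events $\{X\ge a\}$ and $\{\exp(t(X-a))\ge 1\}$ coincide, so Markov's inequality (Proposition~\ref{prop:markov}) with threshold $1$ gives the bound. The paper states this as a standard fact without proof, and your argument is the standard derivation that the paper implicitly relies on.
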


We also need the following Chernoff bound to random variables that are not independent, but with bounded conditional probability. The proof is deferred to Appendix~\ref{appendix:chernoff_bound}.

\begin{lemma}[Chernoff bound for conditionally bounded bits]
    \label{prop:adaptive}
    Fix $p\in(0,1)$ and $\delta>0$. Let $X_1,\dots, X_n\in \{0,1\}$ be random variables such that for every $i\in[n]$ and every $a_1,\dots, a_{i-1}\in\{0,1\}$ such that $\sum_{t=1}^{i-1}a_{t}<(1+\delta)\cdot pn$,
    $$\Pr[X_i=1|X_1=a_1,\dots,X_{i-1}=a_{i-1}]\le p$$
    Let $X=\sum_{i=1}^n X_i$. Then
    $$\Pr[X\ge (1+\delta)\cdot pn]\le 2\cdot \exp\left(-\frac{\delta^2}{2+\delta}\cdot pn\right)$$
\end{lemma}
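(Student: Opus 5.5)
We will prove Lemma~\ref{prop:adaptive} by a stopped-process coupling with independent Bernoulli$(p)$ variables, after which Proposition~\ref{prop:chernoff} applies directly. Let $m=(1+\delta)pn$.

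\textbf{Step 1: reduce to the truncated process.} Define modified variables $Y_i$ by $Y_i = X_i$ as long as the running sum $\sum_{t<i} X_t < m$, and $Y_i = 0$ once the running sum has reached $m$. The event $\{X\ge m\}$ coincides with the event that the running sum of the $Y_i$ reaches $m$, so $\Pr[X\ge m]=\Pr[\sum_i Y_i\ge m]$. By the hypothesis of Lemma~\ref{prop:adaptive}, for every prefix $y_1,\dots,y_{i-1}$ we have $\Pr[Y_i=1\mid Y_{<i}=y_{<i}]\le p$. When the prefix sum is below $m$, this follows because that prefix determines $X_{<i}$ and the hypothesis applies. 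When the prefix sum is at least $m$, the conditional probability is $0$.

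\textbf{Step 2: couple with independent Bernoulli$(p)$ variables.} Let $U_1,\dots,U_n$ be independent uniform variables on $[0,1]$. Sequentially set $Y_i=\mbbone[U_i\le q_i(Y_{<i})]$, where $q_i\le p$ is the true conditional probability; this reproduces the joint law of $(Y_i)$. Also set $Z_i=\mbbone[U_i\le p]$. Then the $Z_i$ are i.i.d.\ Bern$(p)$ and $Y_i\le Z_i$ pointwise, so $\sum_i Y_i\le Z:=\sum_i Z_i$.

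\textbf{Step 3: apply the standard bound.} We obtain $\Pr[X\ge m]\le \Pr[Z\ge (1+\delta)pn]$. Since $\mbbe[Z]=pn$, Proposition~\ref{prop:chernoff} gives $\Pr[Z\ge(1+\delta)pn]\le 2\exp(-\tfrac{\delta^2}{2+\delta}pn)$, which is exactly the claimed bound.

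As an alternative to the coupling, one can run the exponential-moment argument of Proposition~\ref{prop:generic_chernoff} directly on $\sum_i Y_i$. Using $\mbbe[e^{tY_i}\mid Y_{<i}]\le 1+p(e^t-1)\le e^{p(e^t-1)}$ and iterating the tower rule gives $\mbbe[e^{t\sum_i Y_i}]\le e^{pn(e^t-1)}$, and then the standard choice of $t$ recovers the bound.

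\textbf{Main obstacle.} The only delicate point is Step 1. The hypothesis controls $\Pr[X_i=1\mid\cdot]$ only on prefixes whose sum is below $m$, so one must make sure that truncating the process does not change the probability of the target event. It does not: the event $\{X\ge m\}$ is determined at the first time the running sum reaches $m$, and the truncated process agrees with the original up to that time.
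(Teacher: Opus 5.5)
Your proof is correct and takes essentially the same route as the paper: couple through i.i.d.\ uniforms $U_i$, stop the process at the threshold $(1+\delta)pn$, and apply the standard Chernoff bound to i.i.d.\ Bernoulli$(p)$ bits. The only difference is that you truncate the original process downward (forcing bits to $0$ past the threshold), whereas the paper truncates the dominating Bernoulli process upward (forcing bits to $1$) and must then check this leaves the tail probability unchanged; your version avoids that extra step.
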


\paragraph{Basics of information theory.} Given a random variable $Z$, we use $H(Z)$ to denote the Shannon entropy of $Z$, i.e., $H(Z)=\sum_z\Pr[Z=z]\log(1/\Pr[Z=z])$. For a scalar $p\in(0,1)$, $H(p)$ denotes the entropy of the Bernoulli random variable with $p$ probability to be $1$. We use $I(X;Y|Z)$ to denote the mutual information between $X$ and $Y$ conditioned on the random variable $Z$.  $I(X;Y|Z)=H(X|Z)-H(X|Y,Z)$, where $H(X|Y)=\mbbe_{y}H(X|Y=y)\le H(X)$. Next, we describe some of the properties of mutual information used in the paper.

\begin{enumerate}
\item \label{itemmi1} (Chain Rule) $I(X,Y;Z)=I(X;Z)+I(Y;Z|X)$.
\item \label{itemmi4} For discrete random variables, $X,Y,Z$, $I(X;Y|Z)=0\iff X\perp Y|Z$.
\item \label{itemmi2} If $I(Z_1;Z_2|X,Y) = 0$, then $I(X;Z_2|Y) \ge I(X;Z_2|Y,Z_1)$.
\item \label{itemmi3} If $I(Z_1;Z_2|Y) = 0$, then $I(X;Z_2|Y)\le I(X;Z_2|Y,Z_1)$.
\end{enumerate}

Property \ref{itemmi1} follows from the chain rule for Shannon entropy. For property \ref{itemmi4}, it is easy to see that if $X\perp Y|Z$, then $I(X;Y|Z)=0$; the other direction uses strict concavity of the $\log$ function. Properties \ref{itemmi2} and \ref{itemmi3} follow from the observation that 
\[I(X;Z_2|Y)+I(Z_1;Z_2|X,Y)=I(X,Z_1;Z_2|Y)=I(Z_1;Z_2|Y)+I(X;Z_2|Y,Z_1).\]
As mutual information is non-negative, if $I(Z_1;Z_2|X,Y)=0$, then $I(X;Z_2|Y)\ge I(X;Z_2|Y,Z_1)$ (because $I(Z_1;Z_2|Y)\ge 0$) and if $I(Z_1;Z_2|Y)=0$, then $I(X;Z_2|Y)\le I(X;Z_2|Y,Z_1)$.

\paragraph{Graph notations and problem definitions.}
Throughout this paper, we use $G=(V, E)$ to denote the input graph. We denote $n=|V|$ as the number of vertices. We use $\binom V2$ to denote the set of unordered pairs of vertices. Given two disjoint subsets $V_1,V_2$ of a graph $G$, we use $G[V_1,V_2]$ to denote the induced bipartite subgraph where $V_1$ and $V_2$ are the two parts. Given a vertex $u$ and a number $i$, we use $N(u,i)$ to denote the $i$-th neighbor of vertex $u$.

We study the following three problems in this work in both versions where the output is an approximately optimal cost or a partition that yields an approximately optimal cost: the correlation clustering problem, the max cut problem, and the minimum bisection problem.

\paragraph{Correlation clustering.} Given an undirected and unweighted graph $G=(V,E)$. A \emph{clustering} $\mcalc$ of the graph is a partition $C_1,C_2,\dots,C_k$ of its vertex set $V$ where $k$ is unfixed. A clustering also defines an equivalence between vertices: $u\simc v$ if $u$ and $v$ belong to the same cluster, and $u\nsimc v$ otherwise. The cost of a clustering is defined as the total number of missing edges inside each cluster plus the total number of edges crossing different clusters. Formally, we define it as
$$\costg(\mcalc):=|\{(u,v)\not\in E:u\simc v\}|+|\{(u,v)\in E:u\nsimc v\}|$$
Fix a parameter $\varepsilon\in(0,1/2)$, the \emph{correlation clustering cost} problem asks for an approximate value $A$ such that $|A-\min_{\mcalc}\costg(\mcalc)|\le \varepsilon n^2$ given $G$ as the input. The \emph{correlation clustering partition} problem asks for a clustering $\mcalc'$ such that $\costg(\mcalc')\le \min_{\mcalc}\costg(\mcalc)+\varepsilon n^2$. 

A popular and equivalent definition of correlation clustering is based on a signed graph, where each edge is associated with a positive ($+$) or a negative ($-$) sign. The goal is to find a clustering that minimizes the number of disagreements, which corresponds to our cost function $\costg(\mcalc)$. The problem can also be framed as maximizing the number of agreements. In the additive error setting, these two goals are equivalent, as the sum of agreements and disagreements is the fixed total number of pairs of vertices, $\binom{n}{2}$. In our paper, we model positive edges as existing edges and negative edges as non-edges, and we focus on the case of a complete signed input graph. Our lower bounds for this complete graph model also imply a lower bound for general graph cases.

\paragraph{Max cut.} Given an undirected graph $G=(V,E)$. A \emph{cut} of the graph is a partition of the vertex set $V$ into two disjoint parts $V_0,V_1$. We present the partition by a vector $P\in\{0,1\}^n$ where $P_v=0$ for $v\in V_0$ and $P_v=1$ for $v\in V_1$. The \emph{cut size} of $P$ in $G$ is defined as the number of edges crossing the two parts $\cutg(P):=\left|\{(u,v)\in E:u\in V_0,v\in V_1\}\right|$. Fix a parameter $\varepsilon\in(0,1/2)$, the \emph{max cut size} problem asks for an approximate value $A$ such that $|A-\max_P\cutg(P)|\le \varepsilon n^2$ given $G$ as the input. The \emph{max cut partition} problem asks for a partition $P$ such that $\cutg(P)\ge \max_{P'}\cutg(P')-\varepsilon n^2$. 

\paragraph{Minimum bisection.} Assume $n$ is even. Given an undirected and unweighted graph $G=(V,E)$. A \emph{bisection} of the graph is a partition of its vertex set into two disjoint parts of \emph{equal} size. We denote the partition as a vector $P\in\{0,1\}^n$ where $|P|=n/2$. Fix a parameter $\varepsilon\in(0,1/2)$, the \emph{minimum bisection size} problem asks for an approximate value $A$ such that $|A-\min_{P:|P|=n/2}\cutg(P)|\le \varepsilon n^2$ given $G$ as the input. The \emph{minimum bisection partition} problem asks for a bisection $P'$ such that $\cutg(P')\le \min_{P:|P|=n/2}\cutg(P)+\varepsilon n^2$.

\section{Memory-query tradeoffs in the random query model}

We study the memory-query tradeoffs of the correlation clustering cost problem in the \emph{random query model}. This model was introduced in \cite{raz2020random} and subsequently studied in \cite{dinur2024time}. Our main result is the first memory-query tradeoff lower bound for graph problems in the random query model. Different from the previous works, we adapt the proof structure from \cite{kapralov2014streaming}, which is based on Fourier analysis. 

\paragraph{The random query model.} 
In the random query model, an algorithm can make random queries to the input graph $G=(V,E)$. Each query returns an independent and identically distributed (i.i.d.) uniformly random pair of vertices $(u,v)$ from $\binom{V}{2}$ and an indicator $\mbbone_{(u,v)\in E}$ of whether or not $(u,v)\in E$. We denote the return of the $i$-th query as $(u_i,v_i,e_i)$, where $e_i=\mbbone_{(u_i,v_i)\in E}$.

\begin{theorem}
    \label{thm:streaming_main}
    Let $G=(V,E)$ be an undirected simple graph with $n$ vertices. Let $\Pi$ be any randomized algorithm that, in the random query model, approximates the correlation clustering cost of $G$ to within an additive error of $\varepsilon n^2$ with probability at least $99/100$. For this algorithm, if the worst-case query complexity is $q$ and the space used is at most $\gamma \sqrt n$ bits, then the following lower bound holds:
    $$q=\begin{cases}
        \Omega\left(\min\left(\frac{n}{\varepsilon^2\sqrt{\gamma}}, \frac{n\sqrt n}{\gamma}\right)\right)&\textnormal{if }\gamma<1\\
        \Omega\left(n/\varepsilon^2\right)&\textnormal{if }\gamma\ge 1
    \end{cases}$$
    for parameters $\varepsilon\in \left(\omega\left(\frac{1}{\sqrt n}\right),0.05\right)$ and $\gamma>\omega\left(\frac{\log n}{\sqrt n}\right)$.
\end{theorem}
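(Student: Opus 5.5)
We reduce to a distinguishing problem between two graph distributions and then argue through the two ingredients from the overview: the PD-BHP communication bound and the same vector problem.

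\textbf{Hard distributions.} In the YES distribution we sample a uniformly random $x\in\{0,1\}^n$. Each pair $(u,v)$ is an edge independently with probability $1/2+10\varepsilon$ if $x_u=x_v$, and with probability $1/2-10\varepsilon$ otherwise. In the NO distribution every pair is an edge independently with probability $1/2$. A concentration computation in the spirit of Lemma~\ref{lem:strong_hamming_dis_cc} should give the following. In the YES case, clustering along $x$ costs about $n^2/4-\Theta(\varepsilon n^2)$. In the NO case every clustering costs at least $n^2/4-o(\varepsilon n^2)$; this uses a union bound over partitions together with Hoeffding, and needs $\varepsilon=\omega(1/\sqrt n)$. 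Choosing constants so that the gap exceeds $2\varepsilon n^2$, any algorithm for the cost problem distinguishes YES from NO with advantage $\ge 0.9$.

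\textbf{From one graph to independent resampling.} Split the $q$ queries into $k=q/r$ blocks of $r=\alpha n/\varepsilon^2$ queries each, with $\alpha=\min(\gamma,1)$. Introduce an intermediate distribution $\mathcal{D}'$: $x$ is fixed once, but every queried pair gets a fresh bit drawn from the YES law given $x$, even when the pair repeats. By the generalized same vector problem, with non-uniform biases $1/2\pm 10\varepsilon$ determined by $x$, an $s$-bit algorithm over $q<N$ steps cannot tell true YES from $\mathcal{D}'$ unless $s\gtrsim N/q$, with $N=\binom n2$. With $s=\gamma\sqrt n$ this forces either $q\gtrsim n^{3/2}/\gamma$ or negligible advantage; this is exactly the $n\sqrt n/\gamma$ branch of the minimum. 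The same argument lets us replace NO by a fresh-resampling version, which is trivially identical in law.

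\textbf{Hybrid over blocks.} Under $\mathcal{D}'$, each block is exactly a PD-BHP instance. The memory state $M_{j-1}$ at the start of block $j$ is at most $\gamma\sqrt n$ bits about $x$ and plays Alice's message; the block's queries play Bob's graph and labels. I would run the hybrid of \cite{kapralov2014streaming}, where hybrid $j$ has blocks $1..j$ YES and the rest NO. Then the total advantage is at most $k$ times the per-block PD-BHP advantage $O((\gamma+\alpha)\sqrt\alpha)$. Two points need care. First, the message may depend on earlier blocks; this is handled by conditioning, since earlier NO blocks carry no information on $x$ and the message bound holds for any function of $x$ plus independent randomness. Second, Bob's graph is a multiset of $r$ random pairs rather than $r$ distinct edges, which should be harmless because collisions are rare in that count. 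For $\gamma\ge1$ the comparison is essentially trivial: a single block with $\alpha$ a small constant already gives $q=\Omega(n/\varepsilon^2)$ from the PD-BHP bound, since even full knowledge of $x$ needs $\Omega(n/\varepsilon^2)$ noisy samples. For $\gamma<1$, $k\cdot O(\gamma^{3/2})$ has to be $\Omega(1)$. This gives $k\gtrsim\gamma^{-3/2}$, i.e.\ $q\gtrsim n/(\varepsilon^2\sqrt\gamma)$, hence the stated minimum.

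\textbf{Main obstacle.} The crux is the hybrid step. The KMK-style hybrid loses linearly in $k$ only if each step is bounded in total variation conditional on the state. In addition, the state $M_{j-1}$ after YES blocks correlates with $x$ in a way that is not merely a $\gamma\sqrt n$-bit function of $x$. I would address this by bounding, for every fixed message function, the Fourier mass of the posterior of $x$ at levels up to roughly $\sqrt n$, using the PD-BHP level-$\ell$ bounds that account for cycles. I would check carefully that accumulating errors across hybrids stays $O(k\gamma^{3/2})$.
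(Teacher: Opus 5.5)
Your architecture matches the paper's: the same YES/NO graph distributions, the same vector lemma for the $n\sqrt n/\gamma$ branch, a hybrid over $k\approx\gamma^{-3/2}$ blocks of $\alpha n/\varepsilon^2$ queries with $\alpha\approx\gamma$, and Lemma~\ref{lem:streaming_pdbhp} applied per block. However, your intermediate distribution $\mathcal{D}'$ breaks the hybrid step. Every query in $\mathcal{D}'$, including a repeated pair, receives a fresh noisy bit, so a block is not a PD-BHP instance. The repeats are also neither rare nor harmless. A block has $r=\alpha n/\varepsilon^2$ queries, which exceeds the birthday threshold $\sqrt N=\Theta(n)$ whenever $\alpha\gg\varepsilon^2$. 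That range includes the headline regime $\varepsilon=n^{-1/4}$ with polylogarithmic memory. In that range a block contains about $r^2/(2N)\approx\alpha^2/\varepsilon^4$ repeated pairs. For each repeated pair, the XOR of its two labels is the XOR of two independent noise bits. It equals $0$ with probability $1/2+2\rho^2$ under YES and $1/2$ under NO, independently of $x$. Bob is unbounded on his block in the reduction, so with no message at all he gets advantage $\Theta(\rho^2\cdot\alpha/\varepsilon^2)=\Theta(\alpha)$. In the Fourier computation this shows up as the $2$-cycle term, of order $\rho^4\alpha^2/\varepsilon^4=\Theta(\alpha^2)$ in the $l=0$ case. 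Lemma~\ref{lem:streaming_counting} rules that term out only because Bob's graph is simple. You need per-block advantage $O(1/k)=O(\gamma^{3/2})$, but you get $\Theta(\gamma)$. Summed over blocks the hybrid bound is $k\cdot\Theta(\alpha)=\Theta(q\varepsilon^2/n)$, so in this regime your route cannot even beat the memoryless $n/\varepsilon^2$ bound.

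The missing idea is to make the intermediate distribution block-consistent, as the paper does with $\mcald_{t,k}$. There, block $j$ is answered from its own independent graph $G_j\leftarrow\mcalg^Y_P$, so repeats inside a block return the same bit. Each block is then exactly a PD-BHP instance on the simple graph of its distinct pairs. Comparing the true YES stream with $\mcald_{t,k}$, and the true NO stream with $\mcald_{t,0}$, is exactly the $k$-phase same vector problem, whose NO case is phase-wise consistent. Its bound $\Omega(N/(kq))$ depends only on the total number of queries, so you lose nothing there.

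By contrast, the obstacle you flag at the end is not real. In the step from hybrid $i-1$ to hybrid $i$, Alice holds $x$ and generates blocks $1,\dots,i-1$ herself with private coins. So the memory state is a $\gamma\sqrt n$-bit randomized function of $x$, independent of block $i$ given $x$. Fixing her coins and applying Lemma~\ref{lem:streaming_pdbhp} to each resulting deterministic message function bounds $|p_i-p_{i-1}|$. The triangle inequality then loses only the factor $k$, with no further Fourier analysis of posteriors.

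One smaller point: for $\gamma\ge1$, apply the zero-communication PD-BHP bound to the true consistent stream. Your stated reason is wrong, since knowing $x$ would let one distinguish from $O(1/\varepsilon^2)$ samples. The hardness comes from Bob having no information about $x$.
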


We note that our lower bound, which holds for deterministic algorithms, also applies to randomized algorithms by a standard application of Yao's minimax principle. In addition, the same lower bound applies to the max cut and minimum bisection problems. The three problems share the same hard distribution. We defer the lower bound statement and its proof to the latter two problems to Appendix~\ref{appendix:streaming_distinguishability}, and focus on the correlation clustering problem in this section.

%\sgnote{Mention that the lower bound is for deterministic and randomized can be converted into deterministic by hardcoding the randomness. In particular lemma's, explicitly mention deterministic protocols and algorithms.}

Our proof extends the streaming lower bound for max cut in \cite{kapralov2014streaming}. They showed an $\tilde\Omega(\sqrt n)$ space lower bound for approximating max cut with $(2-\epsilon)$-multiplicative error for any constant $\epsilon>0$. We adapt their proof to establish lower bounds for a wide range of additive errors.

The core of our proof strategy is a reduction to two auxiliary problems. The first is a noisy variant of the \emph{distributional Boolean Hidden Partition (D-BHP)} problem introduced in \cite{gavinsky2007exponential} and extended in \cite{verbin2011streaming, kapralov2014streaming}, and the second is what we call the \emph{same vector problem}. We note that the query-space lower bound for the case $\gamma\ge 1$ is a direct corollary of the lower bound to the first problem. And the desired lower bound for the case $\gamma<1$ follows from establishing lower bounds for both problems. We will define these auxiliary problems formally in the following subsections.

\subsection{Input distribution}
\label{sec:streaming_input_dis}

We use an input distribution for the correlation clustering that differs from the one in \cite{kapralov2014streaming}. While they studied the indistinguishability between sparse random graphs, we focus on the indistinguishability between two types of dense graphs: a random dense Erd\H{o}s-R\'enyi graph and a random dense Erd\H{o}s-R\'enyi graph with $\Theta(\varepsilon n^2)$ edges perturbed in expectation.

Let $\mathcal{G}_{n,p}$ denote the distribution of Erd\H{o}s-R\'enyi graphs with parameters $n$ and $p$. In a random graph from this distribution, each of the $\binom{n}{2}$ possible edges is included with an independent probability $p$.

We define two input graph distributions, $\mathcal{G}^Y$ and $\mathcal{G}^N$.
\begin{itemize}
    \item \textbf{The YES distribution ($\mathcal{G}^Y$):} A random graph from $\mcalg^Y$ is sampled as follows. We first uniformly generate a partition $P \in \{0,1\}^n$. For every pair of vertices $(v_i, v_j)$, we independently include the edge $(v_i, v_j)$ with probability $1/2 + (-1)^{P_i \oplus P_j}\rho$, where $\rho = 10\varepsilon$. For a fixed partition $P$, we denote this random graph distribution by $\mathcal{G}^Y_P$.
    \item \textbf{The NO distribution ($\mathcal{G}^N$):} This is simply the standard Erd\H{o}s-R\'enyi graph distribution $\mathcal{G}_{n,1/2}$.
\end{itemize}

We will show later that the correlation clustering costs of graphs from these two distributions are $2\varepsilon n^2$-far from each other with high probability. Therefore, the lower bound reduces to the indistinguishability of graphs from the two distributions $\mcalg^Y$ and $\mcalg^N$.

We will also use the graph distribution $\mcalg_{n,r'}$ to define the perturbed distributional boolean hidden partition (PD-BHP) problem: we uniformly sample $r'$ edges over all pairs of vertices, allowing for repetition; the graph is obtained by taking the union of the sampled edges, which is a graph of $r\le r'$ edges.

\paragraph{Distribution of Queries.} We describe the distributions to the random queries the algorithm makes and their query answers. We define the distributions over query streams. Let a query stream be a sequence of $t$ triples $((u_1, v_1, e_1), \dots, (u_t, v_t, e_t))$, where each pair $(u_i, v_i)$ is sampled uniformly from $\binom{V}{2}$.
\begin{itemize}
    \item $\mathcal{D}^Y_t$: The distribution of a stream of $t$ random queries where the graph $G = (V,E)$ is sampled from $\mathcal{G}^Y$, and each $e_i$ is set to $\mathbbm{1}_{(u_i, v_i) \in E}$.
    \item $\mathcal{D}^N_t$: The distribution of a stream of $t$ random queries where the graph $G$ is sampled from $\mathcal{G}^N$.
\end{itemize}
Our lower bound applies to the mixed distribution $\mathcal{D}_t = \frac{1}{2}\mathcal{D}^Y_t + \frac{1}{2}\mathcal{D}^N_t$.

The indistinguishability of the two distributions is established on the indistinguishability between the two distributions and intermediate distributions. Specifically, fix a number $k\ge 1$ where we assume $k$ divides $t$. For every $l\in\{0,\dots, k\}$, we let $\mcald_{t,l}$ to denote the following distribution of query streams: (i) uniformly sample a partition $P$ from the $2^{n}$ possible partitions of the graph; (ii) let $G_1,\dots, G_l\leftarrow \mcalg^Y_{P}$, and let $G_{l+1},\dots, G_k\leftarrow \mcalg^N$ independently; (iii) for each $i\in[k]$ and every $j=(i-1)t/k+1,\dots, it/k$, we let $e_{j}=\mbbone_{(u_j,v_j)\in E(G_{i})}$. Intuitively, we divide the input stream into $k$ phases, where the samples of each phase is from an independently sampled graph $G$ from the YES or NO distribution. Our proof idea is as follows.

\begin{figure}[ht]
\centering
\begin{tikzpicture}[
  node distance=1.5cm and 1.6cm,
  every node/.style={font=\small},
  dist/.style={draw, rectangle, rounded corners, minimum height=1.2em, minimum width=2.2em, inner sep=4pt},
  arrow/.style={-{Stealth}, thick},
  labelstyle/.style={font=\large, midway, above=-2pt}
]

% Nodes
\node[dist] (Dy)      {\( \mcald_t^Y \)};
\node[dist, right=of Dy] (Dk) {\( \mcald_{t,k} \)};
\node[dist, right=of Dk] (Dk-1) {\( \mcald_{t,k-1} \)};
\node[dist, right=of Dk-1] (Di) {\( \mcald_{t,l} \)};
\node[dist, right=of Di] (D0) {\( \mcald_{t,0} \)};
\node[dist, right=of D0] (Dn) {\( \mcald^N_t \)};

% Arrows
\draw[arrow, shorten >=2pt, shorten <=2pt] (Dy) -- (Dk) node[labelstyle] {(1)};
\draw[arrow, shorten >=2pt, shorten <=2pt] (Dk) -- (Dk-1) node[labelstyle] {(2)};
\draw[arrow, dotted, thick, shorten >=2pt, shorten <=2pt] (Dk-1) -- (Di) node[labelstyle] {(2)};
\draw[arrow, dotted, thick, shorten >=2pt, shorten <=2pt] (Di) -- (D0)node[labelstyle] {(2)};
\draw[arrow, shorten >=2pt, shorten <=2pt] (D0) -- (Dn) node[labelstyle] {(1)};

% Brace below the hybrid chain
\draw [decorate, decoration={brace, amplitude=5pt, mirror}, thick]
  ([yshift=-7pt]Dk.south) -- ([yshift=-7pt]D0.south)
  node[midway, yshift=-12pt, font=\scriptsize] {hybrid argument};

\end{tikzpicture}

{\footnotesize
\begin{tabular}{@{}ll}
(1): Indistinguishable by lower bounding the same-vector problem \\
(2): Indistinguishable by lower bounding the PD-BHP problem
\end{tabular}
}

\caption{Structure of distributions used in the lower bound proof}
\label{fig:streaming_structure_of_proofs}
\end{figure}

Given the distributions, we show that the correlation clustering cost of graphs in $\mcald^Y$ and $\mcald^N$ are $2\varepsilon n^2$-far apart from each other with high probability.

\begin{lemma}
    \label{lem:streaming_cc_cost}
    Let $\varepsilon\in\left(\omega\left(\frac1{\sqrt n}\right),0.05\right)$. Then with probability $\ge 1-n^{-\omega(1)}$ a random graph $G$ drawn from $\mcalg^Y$ has correlation clustering cost at most $\frac{n^2}{4}-5\varepsilon n^2+ \tilde O(n)$; and a random graph $G$ drawn from $\mcalg^N$ has correlation clustering cost at least $\frac{n^2}{4}-\tilde O(n\sqrt n)$.
\end{lemma}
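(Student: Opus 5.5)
For the YES part, we exhibit a good clustering: the two-cluster partition $\mcalc_P=(V_0,V_1)$ given by the hidden $P$. For the NO part, we prove a lower bound that holds simultaneously for every clustering, via a union bound combined with concentration.

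\emph{YES case.} Write $\costg(\mcalc_P)$ as a sum of independent indicators over pairs. A same-part pair contributes $1$ if it is a non-edge, which has probability $1/2-\rho$. A cross pair contributes $1$ if it is an edge, which has probability $1/2-\rho$. Hence $\mbbe[\costg(\mcalc_P)]=(1/2-\rho)\binom n2=\frac{n^2}{4}-5\varepsilon n^2+O(n)$, using $\rho=10\varepsilon$. Hoeffding's inequality (Proposition~\ref{prop:hoeffding}) over the $\binom n2$ independent pairs gives a deviation of at most $O(n\sqrt{\log n}\cdot\omega(1))$ with probability $1-n^{-\omega(1)}$, i.e.\ a $\tilde O(n)$ deviation. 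Since the optimum is at most $\costg(\mcalc_P)$, the upper bound follows. The balance of $P$ plays no role here.

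\emph{NO case.} Fix an arbitrary clustering $\mcalc$. Under $\mcalg_{n,1/2}$, each pair contributes to $\costg(\mcalc)$ independently with probability exactly $1/2$, regardless of $\mcalc$. Therefore $\mbbe[\costg(\mcalc)]=\binom n2/2=\frac{n^2}{4}-O(n)$. Hoeffding's inequality gives
\[
\Pr\bigl[\costg(\mcalc)\le \mbbe[\costg(\mcalc)]-s\bigr]\le \exp\bigl(-4s^2/n^2\bigr).
\]
The number of clusterings is at most $n^n=2^{n\log n}$. Taking $s=C n^{1.5}\log n$ makes each failure probability $\exp(-4C^2 n\log^2 n)$, so the union bound leaves total failure probability $2^{n\log n}\exp(-4C^2n\log^2 n)=n^{-\omega(1)}$. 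Consequently every clustering, and in particular the optimal one, has cost at least $\frac{n^2}{4}-\tilde O(n\sqrt n)$.

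The only delicate point is the union bound. It works only because the per-clustering tail must beat the $2^{\Theta(n\log n)}$ count of clusterings. That forces a deviation of order $n^{1.5}$ rather than $n$, which is exactly the asymmetry between the $\tilde O(n)$ and $\tilde O(n\sqrt n)$ error terms in the statement. Finally, the hypothesis $\varepsilon=\omega(1/\sqrt n)$ guarantees $5\varepsilon n^2\gg n^{1.5}$, so the two bounds are separated by more than $2\varepsilon n^2$, as needed downstream.
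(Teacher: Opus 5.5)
Your proof is correct and is essentially the paper's argument: the planted partition $P$ witnesses the YES upper bound, and the NO lower bound comes from a tail bound at deviation of order $n^{1.5}\log n$, union-bounded over the at most $n^n$ clusterings, with Hoeffding's inequality in place of the paper's multiplicative Chernoff bound (an immaterial difference). One quibble, outside the lemma itself: your NO-case error term carries a logarithmic factor, so $\varepsilon=\omega(1/\sqrt n)$ alone does not make $\varepsilon n^2$ dominate it as your closing remark claims; you would need roughly $\varepsilon=\omega(\sqrt{\log n/n})$, or a log-free bound such as one via $\|A-\mbbe A\|=O(\sqrt n)$ for the adjacency matrix $A$, and the paper has the same looseness.
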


\begin{proof}
    We first lower bound the optimal clustering cost of a graph from $\mcalg^N=\mcalg(n,1/2)$. Observe that for every clustering, every pair of vertices $(u,v)$ contribute $1$ cost with independent probability $1/2$. Let $X$ denote the cost of an arbitrary clustering. By Chernoff bound,
    $$\Pr_{G\leftarrow \mcalg^N}[|X-\mbbe[X]|\ge n\sqrt n\log n]\le 2\cdot \exp\left(-\frac {n^3\log ^2n}{3\cdot \mbbe[X]}\right)\le 2\cdot \exp\left(-\frac43n\log^2 n\right)$$
    Since there are at most $n^n< \exp(n\log n)$ clusterings, by union bound, a random graph $G$ drawn from $\mcalg^N$ has correlation clustering cost at least $\frac{n^2}{4}-O(n^{3/2}\log n)$ with $\ge 1-\exp(-n\log n)$ probability.

    For graphs $G$ drawn from $\mcalg^Y$. By our construction to $\mcalg^Y$, each pair of vertices $(u,v)$ contribute 1 cost to $\costg(P)$ with independent probability $1/2-\rho$. Let $X$ denote $\costg(P)$. Then $\mbbe[X]=\frac{n^2}{4}-5\varepsilon n^2-O(n)$. By Chernoff bound,
    $$\Pr_{ G\leftarrow \mcalg^Y}[|X-\mbbe[X]|\ge n\log n]\le 2\cdot \exp\left(-\frac{n^2\log ^2 n}{3\cdot \mbbe[X]}\right)\le 2\cdot \exp\left(-\frac{4}{3}\log ^2 n\right).$$
\end{proof}

\subsection{The boolean hidden partition problem}
\label{sec:PD-BHP}

We analyze the 2-party one-way communication complexity of the perturbed  distributional Boolean hidden partition (PD-BHP) problem, a variant of the distributional Boolean hidden partition problem introduced in \cite{kapralov2014streaming}. In this variant, the input vector is noisy.

\paragraph{Perturbed Distributional Boolean Hidden Partition Problem (PD-BHP).} In this problem, Alice receives a vector $x \in \{0,1\}^n$, and Bob receives a graph $G=(V,E)$ and a vector $w \in \{0,1\}^r$, where $r = |E|$. Let $M \in \{0,1\}^{r \times n}$ be the edge-vertex incidence matrix of $G$. The problem is to distinguish between two cases based on the relationship between $x$, $w$, and $G$:
\begin{enumerate}
    \item \textbf{YES case:} $w = Mx + \Delta$, where $\Delta$ is a random noise vector with each entry independently set to 1 with probability $1/2 + \rho$.
    \item \textbf{NO case:} The vector $w$ is uniformly random in $\{0,1\}^r$, independent of $x$. This is equivalent to $w = Mx + \Delta'$, where $\Delta'$ is a uniformly random vector.
\end{enumerate}
Alice sends a message to Bob, who must distinguish between the two cases.

We analyze this problem under a specific distribution where Alice's input $x$ is uniformly random in $\{0,1\}^n$, and Bob's graph $G$ is sampled from $\mathcal{G}_{n,r'}$, a graph distribution defined by taking a union of $r'$ edges sampled over all pairs of vertices, allowing for repetition. The final answer is YES or NO with probability $1/2$ each. As a result, the number of edges $r$ in the resulting simple graph is exactly the number of distinct edges from the $r'$ samples, and there do not exist two identical rows in the edge-vertex incidence matrix $M$.

\sloppy
\begin{lemma}
    \label{lem:streaming_pdbhp}
    Fix three parameters $\varepsilon\in\left(\omega\left (\frac1{\sqrt n}\right),0.05\right)$, $\gamma>\omega\left(\frac1{\sqrt n}\log n\right)$ and $\frac{\varepsilon^2}{n}\le \alpha\le \min(10^{-7}\gamma^{-2},10^{-4})$. Consider an instance of the PD-BHP problem where Alice receives a uniformly random string $x\in\{0,1\}^n$, and Bob receives a graph $G\in \mathcal{G}_{n,\alpha n/\varepsilon^2}$ together with the corresponding vector $w$. Any deterministic protocol for the PD-BHP problem that uses at most $\gamma\sqrt n$ bits of communication can achieve an advantage of at most $O((\gamma+\alpha)\sqrt \alpha)$ over random guessing.
\end{lemma}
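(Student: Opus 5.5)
The plan follows the Fourier-analytic approach of \cite{kapralov2014streaming}. A deterministic protocol with at most $\gamma\sqrt n$ bits partitions $\{0,1\}^n$ into at most $2^{\gamma\sqrt n}$ message sets $A$. Fix a message set $A$ and the graph $G$ (incidence matrix $M$). Let $f=\mbbone_A/|A|\cdot 2^n$ be the normalized indicator of $A$.

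\textbf{Step 1: reduce to the distance between two distributions of $w$.} Bob's advantage is at most $\mbbe_{A,G}\,\|p_{A,G}^Y-U\|_{TV}$. Here $p^Y_{A,G}$ is the law of $w=Mx+\Delta$ with $x$ uniform on $A$, and $U$ is uniform on $\{0,1\}^r$. Note that $\Delta$ has i.i.d.\ bits that are $1$ with probability $1/2+\rho$. By Cauchy--Schwarz, the distance is at most $\frac12\big(\sum_{\emptyset\ne s\subseteq[r]}\widehat{p}(s)^2\big)^{1/2}$, with Fourier coefficients normalized so that $\widehat p(\emptyset)=1$. Since the noise is independent, it multiplies each character by a factor, giving $\widehat{p}(s)=(-2\rho)^{|s|}\,\widehat f(M^Ts)$. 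Here $M^Ts\in\{0,1\}^n$ is the boundary vector of the edge set $s$, i.e.\ its odd-degree vertices. Hence
\[
\mbbe_G\sum_{s\neq\emptyset}\widehat p(s)^2=\sum_{v\in\{0,1\}^n}\widehat f(v)^2\,\mbbe_G\Big[\sum_{s\ne\emptyset:\,M^Ts=v}(2\rho)^{2|s|}\Big].
\]

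\textbf{Step 2: split by the weight of $v$.} The coefficient of $\widehat f(v)^2$ is a weighted count of edge subsets of $G$ whose boundary is $v$. For $|v|=2\ell$ I will bound it by considering the minimal such subsets, namely $\ell$ edge-disjoint paths pairing up $v$, together with arbitrary even subgraphs (cycle unions) attached. For paths, I will compute the expected number in $\mcalg_{n,\alpha n/\varepsilon^2}$. A path of length $m$ between fixed endpoints appears with expected count about $n^{m-1}(r/n^2)^m$ times $(2\rho)^{2m}=(20\varepsilon)^{2m}$. This gives per-path weight $\frac1n(400\alpha)^m$, which sums to $O(\alpha/n)$ over $m\ge1$. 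So the $|v|=2\ell$ contribution is at most roughly $(2\ell)!/(\ell!2^\ell)\cdot(C\alpha/n)^{\ell}$. The even-subgraph contribution multiplies this by $\mbbe\sum_{\text{cycle sets}}(2\rho)^{2|c|}$. The expected weight of cycles of length $m\ge3$ is about $(400\alpha)^m/m$, so this factor is $\exp(O(\alpha^3))=1+O(\alpha^3)$. It must be handled with care, since it is not fully independent of the paths; I would bound it by a union/Janson-type argument over the configurations. Also, $v=0$ contributes only through nonempty even subgraphs, i.e.\ $\widehat f(0)^2=1$ times $O(\alpha^{3})$. This is the new term absent in the sparse setting, and it is consistent with the target $O(\alpha^{3/2})$ advantage.

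\textbf{Step 3: use the KKL/hypercontractivity bound on Fourier mass.} I will invoke the standard lemma from \cite{gavinsky2007exponential,kapralov2014streaming}: for $|A|\ge 2^{n-c}$ and $\ell\le 4c$, we have $\sum_{|v|=\ell}\widehat f(v)^2\le (O(\sqrt c)/\ell)^{\ell}$. Message sets of size less than $2^{n-\gamma\sqrt n-\log(1/\delta)}$ carry total probability at most $\delta$, so we may assume $c\le 2\gamma\sqrt n$. Plugging in, the level-$2\ell$ term is at most $\big(C\alpha c/(n\ell)\big)^{\ell}\cdot\ell^{O(1)}\le (C'\alpha\gamma^2/\ell)^{\ell}$ for $\ell\le 4c$. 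Using $\alpha\le 10^{-7}\gamma^{-2}$, the sum over $\ell\ge1$ is $O(\alpha\gamma^2)$. High levels $\ell>4c$ are bounded using $\sum_v\widehat f(v)^2\le 2^{c}$ together with the factor $(C\alpha/n)^{\ell}\ell^\ell$, which is tiny since $\ell\le n$ and $\alpha$ is small. Taking square roots gives advantage $O(\gamma\sqrt\alpha+\alpha^{3/2})=O((\gamma+\alpha)\sqrt\alpha)$.

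\textbf{Main obstacle.} The hard step is Step 2: controlling $\mbbe_G$ of the weighted count of subsets with a given boundary when $G$ has $\gg n$ edges and hence many cycles. Two issues arise there: the combinatorics of decomposing $s$ into paths plus cycles, and the correlations between path and cycle counts, including the effect of repeated samples being merged. I would first try a direct bound that enumerates connected configurations by their number of vertices and edges, using that each edge appears with probability at most $r/\binom n2$ and that these events are negatively correlated. The goal is to show that excess edges (cycle rank) each cost a factor $O(\alpha)$.
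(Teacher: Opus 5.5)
Your plan is the paper's proof. You restrict to typical messages with $c'=\gamma\sqrt n+\log(1/\tau)$, use $\widehat{p}(s)=(-2\rho)^{|s|}\widehat f(M^Ts)$ and Cauchy--Schwarz, and bound the expected weighted number of edge sets with boundary $v$. The extra $O(\alpha^3)$ term at $v=0$ comes from cycles of length at least $3$, and you combine with Lemma~\ref{lem:streaming_coefficients_weight_bound} below level $4c'$ and Parseval above it. The only difference in Step 2 is cosmetic. The paper decomposes $s$ into closed walks and $l/2$ open walks ending in the support of $v$, while you use $\ell$ paths plus cycles. Your version keeps the pairing factor $(2\ell-1)!!$ explicit, which the paper's walk count does not show. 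This is harmless because the level bound decays like $\ell^{-2\ell}$.

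There is, however, an error in Step 3. In your normalization the level bound reads $\sum_{|v|=\ell}\widehat f(v)^2\le(4\sqrt2\,c/\ell)^{\ell}$ for $\ell\le 4c$, which is linear in $c$. The form $(O(\sqrt c)/\ell)^\ell$ you quote is false: the subcube $A=\{x:x_1=\dots=x_c=0\}$ has $\sum_{|v|=\ell}\widehat f(v)^2=\binom c\ell\ge(c/\ell)^\ell$. Accordingly, your intermediate expression $(C\alpha c/(n\ell))^\ell$ must be $(C\alpha c^2/(n\ell))^\ell$. With $c\le 2\gamma\sqrt n$ this is exactly the bound $(C'\alpha\gamma^2/\ell)^\ell$ you end with, so your conclusion survives. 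Note, though, that this is the one place where the hypothesis $\alpha\le 10^{-7}\gamma^{-2}$ is used. With your misquoted bound that hypothesis would never be needed, which is a sign something was off.

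Finally, what you call the main obstacle needs no Janson-type argument. Fix any $k$ distinct pairs. A union bound over injective assignments of the $r'=\alpha n/\varepsilon^2$ samples gives $\Pr[\text{all }k\text{ pairs are edges}]\le (r'/\binom n2)^k$, which also handles merged repeated samples. Every $s$ with $M^Ts=v$ splits into $\ell$ edge-disjoint paths pairing the support of $v$, plus edge-disjoint cycles. Since all terms are nonnegative, you may drop the disjointness constraints. The sum then factorizes into the path sums times $\prod_C\bigl(1+q^{|C|}\bigr)\le e^{O(\alpha^3)}$, where $q=(2\rho)^2r'/\binom n2=O(\alpha/n)$ is the per-edge weight. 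The paper gets the same effect by ordering the $t$ distinct edges and bounding, one edge at a time, the probability that the next edge extends the current walk.
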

%  \sgnote{Should $G^wr$ be without replacement? Or mention if Bob has two same edges, the value of $w$ is the same.}

This lower bound is tight when both $\alpha$ and $\gamma$ are constants. In this scenario, Alice can send the first $\Theta(\sqrt n)$ bits of her input to Bob. In expectation, Bob can then learn $\Theta(1/\varepsilon^{2})$ entries of the vector $Mx$, and thus $\Theta(1/\varepsilon^{2})$ entries of the noise vector $\Delta$. Since the noise vector in the YES case and the NO case correspond to Bernoulli distributions that are $\rho$-far apart, $\Theta(1/\rho^{2})=\Theta(1/\varepsilon^{2})$ independent samples are sufficient to distinguish them with constant advantage.
While the input to Bob are exactly a block of $\Theta(n/\varepsilon^2)$ random queries, Lemma~\ref{lem:streaming_pdbhp} also directly implies an $\Omega(n/\varepsilon^2)$ random query lower bound \emph{without} memory constraints.

\begin{corollary}
    \label{cor:lb_randomquery_benchmark}
    Fix a parameter $\varepsilon\in\left(\omega\left(\frac1{\sqrt n}\right),0.05\right)$.
    Any deterministic algorithm in the random query model for distinguishing input graphs from $\mcalg^Y$ and $\mcalg^N$ with advantage $\ge 0.1$ requires $\Omega(n/\varepsilon^2)$ random queries.
\end{corollary}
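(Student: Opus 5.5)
We reduce from the PD-BHP problem and invoke Lemma~\ref{lem:streaming_pdbhp} with zero communication, or more precisely with Alice's message carrying essentially no information. Suppose a deterministic algorithm $\Pi$ distinguishes $\mcalg^Y$ from $\mcalg^N$ with advantage $\ge 0.1$ using $q=\alpha n/\varepsilon^2$ random queries, where $\alpha$ is a sufficiently small constant. We build a PD-BHP protocol as follows. Bob receives $G\in\mcalg_{n,q}$, which is the union of $q$ uniformly sampled pairs with repetition, together with $w$. He keeps the ordered list of $q$ sampled pairs; this costs nothing extra, since the random ordering and the multiplicities can be regenerated by Bob using his own randomness, consistently with the distribution of $\mcalg_{n,q}$ conditioned on the edge set, and then fixed by averaging. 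Bob then simulates $\Pi$ on the stream $(u_i,v_i,e_i)$ with $e_i=w_{(u_i,v_i)}$, so repeated pairs automatically get consistent answers.

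The key step is to check that this stream has the correct distribution. In the YES case, $w=Mx+\Delta$ with $\Pr[\Delta_j=1]=1/2+\rho$. So the bit on pair $(u,v)$ equals $1$ with probability $1/2+\rho$ when $x_u\ne x_v$, and with probability $1/2-\rho$ otherwise. This is independent across distinct pairs and fixed per pair. It therefore matches $\mcalg^Y_P$ with $P$ the complement-flipped labeling, i.e.\ edge probability $1/2+(-1)^{P_u\oplus P_v}\rho$ up to swapping the roles of $\pm\rho$. I would check the sign convention against the definition of $\mcalg^Y$. If it disagrees, Bob simply complements $w$; this maps the YES case to the correct distribution and preserves the uniform NO case. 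Since $x$ is uniform, the sampled stream is exactly distributed as $\mcald^Y_q$. In the NO case, $w$ is uniform and independent, and repeated pairs reuse the same coordinate, so the stream is exactly $\mcald^N_q$.

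With this reduction, Bob distinguishes with advantage $\ge 0.1$ while Alice sends $0$ bits. Lemma~\ref{lem:streaming_pdbhp} allows any $\gamma$ in its range; take $\gamma=\omega(\log n/\sqrt n)$ tiny, say $\gamma=\log^2 n/\sqrt n$. Bob may ignore the message, so the bound applies, and it caps the advantage at $O((\gamma+\alpha)\sqrt\alpha)=O(\alpha^{3/2}+o(1))$. Choosing the constant $\alpha$ small enough makes this $<0.1$, a contradiction. The constraint $\alpha\le\min(10^{-7}\gamma^{-2},10^{-4})$ holds since $\gamma\to0$, and $\alpha\ge\varepsilon^2/n$ is trivial. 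Hence $q=\Omega(n/\varepsilon^2)$. Algorithms using more queries can be handled by monotonicity, since extra queries can be truncated or padded in the hybrid sense: an algorithm with fewer queries than $\alpha n/\varepsilon^2$ is simulated by ignoring the surplus stream.

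The main obstacle is ensuring the stream distribution matches exactly. This requires two things. First, reconstructing an ordered multiset of queries from $G\in\mcalg_{n,q}$ must be faithful; it is easiest if we view Bob's input as the sampled sequence itself, or note that the sequence given the union is samplable. Second, the noise sign convention of PD-BHP must align with $\mcalg^Y$.
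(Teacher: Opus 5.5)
Your proposal is correct and is essentially the paper's argument: Bob runs the query algorithm on his PD-BHP input without using Alice's message, and Lemma~\ref{lem:streaming_pdbhp} caps the advantage. The paper fixes $\gamma=1$ rather than a vanishing $\gamma$, which makes no difference, and you spell out the resampling of the ordered query sequence that the paper leaves implicit in this corollary. One small slip: since $w_{(u,v)}=x_u\oplus x_v\oplus\Delta_{(u,v)}$, the bit is $1$ with probability $1/2+\rho$ exactly when $x_u=x_v$, so the YES stream already matches $\mathcal{G}^Y_P$ with $P=x$ and no complementing of $w$ is needed (a global flip of $P$ could never swap the sign anyway, since it leaves $P_u\oplus P_v$ unchanged).
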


\begin{proof}
    Fix $\gamma=1$. By Lemma~\ref{lem:streaming_pdbhp}, there exists a constant $\alpha\le 10^{-7}$ such that any communication protocol with $\sqrt n$ bits of communication for PD-BHP that distinguishes the YES and the NO cases can achieve an advantage of at most $\le 0.1$. The same lower bound also implies to protocols \emph{without} communication. Therefore, if there exists an algorithm (without memory constraint and) with advantage $>0.1$ that makes $\le \alpha n/\varepsilon^2$ random queries, Bob for PD-BHP could also distinguish YES and NO cases with advantage $>0.1$, which is a contradiction.
\end{proof}

This subsection focuses on proving Lemma~\ref{lem:streaming_pdbhp}. Our proof follows the same framework as \cite{gavinsky2007exponential, verbin2011streaming, kapralov2014streaming}. Compared to previous works, our problem involves a key trade-off: while the noise vector $\Delta$ makes distinguishing the distributions harder, the denser graphs make it easier. We establish a trade-off between the graph's density, the noise probability, and the communication complexity.

Specifically, Alice's message induces a partition $A_1,A_2,\dots ,A_{2^c}$ of $\{0,1\}^n$, where $c$ is the number of bits sent by Alice. Let the protocol's advantage over random guessing be at least $\tau$, where $\tau := (\gamma+\alpha)\sqrt \alpha$. Notice that at least a $1 - \tau$ fraction of all input strings $x \in \{0,1\}^n$ are contained within sets in the partition that have a size of at least $\tau2^{n-c}$. We can therefore focus our analysis on one such ``typical'' set, which we denote by $A$. Define a parameter $c' = c + \log(1/\tau)$, the size of $A$ is $|A| \ge 2^{n-c'}$.

The central idea is to show that if Alice's input $x$ is drawn uniformly from such a typical set $A$, the resulting vector $Mx$ is statistically close to uniform. If this holds, Bob is unable to distinguish the YES case (where $\Delta = Mx + w$) from the NO case (where $\Delta$ is uniformly random).

Our main technical contribution is an extension of Fourier analysis techniques from prior work \cite{gavinsky2007exponential, verbin2011streaming, kapralov2014streaming}. While previous work dealt with sparse, cycle-free graphs \cite{kapralov2014streaming}, our analysis is adapted to handle the presence of cycles.

We use the following normalization of the Fourier transform:
$$\hat f(v)=\frac1{2^n}\sum_{x\in\{0,1\}^n}f(x)(-1)^{x\cdot v}$$
We use the following bounds on the Fourier mass of $f$ contributed by coefficients of various weight:

\begin{lemma}[Lemma 6 in \cite{gavinsky2007exponential}]
\label{lem:streaming_coefficients_weight_bound}
Let $A\subseteq \{0,1\}^n$ of size at least $2^{n-c'}$ and $f$ its indicator function. Then for every $l\in[4c']$
$$\frac{2^{2n}}{|A|^2}\sum_{v:|v|=l}\hat f(v)^2\le \left(\frac{4\sqrt 2c'}{l}\right)^l$$    
\end{lemma}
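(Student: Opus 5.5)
The statement is Lemma 6 of \cite{gavinsky2007exponential}, and I would reprove it with the standard hypercontractivity (Bonami--Beckner) argument applied to the indicator $f$ of $A$. The idea is to bound the level-$l$ Fourier weight by pairing $f$ with its own degree-$l$ part. Define
$$g(x)=\sum_{|v|=l}\hat f(v)(-1)^{x\cdot v}.$$
Then by Parseval, with our normalization,
$$W_l:=\sum_{|v|=l}\hat f(v)^2=\mbbe_x[f(x)g(x)].$$

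The first step is to apply H\"older's inequality with exponents $p$ and $q=p/(p-1)$. Since $f$ is a $0/1$ indicator, $\|f\|_q=(|A|/2^n)^{1/q}$. This gives
$$W_l\le (|A|/2^n)^{1-1/p}\,\|g\|_p .$$

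The second step bounds $\|g\|_p$ using hypercontractivity for degree-$l$ polynomials: for $p\ge 2$, $\|g\|_p\le (p-1)^{l/2}\|g\|_2$. Together with $\|g\|_2=\sqrt{W_l}$, the previous inequality becomes
$$\sqrt{W_l}\le (p-1)^{l/2}(|A|/2^n)^{1-1/p}.$$
Squaring and dividing by $(|A|/2^n)^2$ yields
$$\frac{2^{2n}}{|A|^2}W_l\le (p-1)^l\,(2^n/|A|)^{2/p}\le (p-1)^l\,2^{2c'/p},$$
where the last step uses $|A|\ge 2^{n-c'}$.

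The final step optimizes over $p$. Choosing $p-1\approx 2c'/l$ (more precisely $p=1+2c'/l$, which is at least $2$ when $l\le 2c'$) gives $2^{2c'/p}\le 2^{l}$ up to constants. The resulting bound is $(Cc'/l)^l$ with $C$ on the order of $4$, and matching the exact constant $4\sqrt2$ just requires a slightly tuned choice of $p$, e.g.\ $p=c'/l$ scaled appropriately. For $l$ between $2c'$ and $4c'$, take $p=2$ directly: the bound is then $2^{c'}\le (4\sqrt2 c'/l)^l$ when $l\le 4c'$, since $4\sqrt2c'/l\ge\sqrt2$ and $(\sqrt2)^l\ge 2^{c'}$ requires $l\ge 2c'$. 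Both regimes therefore fit under the stated constant.

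The main obstacle is bookkeeping the constants so that a single choice of $p$ covers all of $l\in[4c']$, which is what forces the case split at $l=2c'$. Since the paper cites this lemma directly from \cite{gavinsky2007exponential}, I would simply invoke it and only sketch the above.
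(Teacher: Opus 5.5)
Your argument is correct and is essentially the standard proof behind Lemma 6 of \cite{gavinsky2007exponential}, which the paper simply cites: H\"older plus the $L^p$ hypercontractive bound for degree-$l$ polynomials is the dual form of the Bonami--Beckner/KKL inequality $\sum_v \delta^{|v|}\hat f(v)^2\le (|A|/2^n)^{2/(1+\delta)}$ used there, with $\delta=1/(p-1)$. Two small remarks. First, your own choice $p=1+2c'/l$ already gives exactly $(4c'/l)^l$ for $l\le 2c'$, not merely ``up to constants''. Second, no case split is actually forced: taking $p=1+4c'/l$ (equivalently $\delta=l/(4c')$, the source's choice) is valid for every $l\le 4c'$ and gives $(4c'/l)^l\,2^{2c'l/(l+4c')}\le (4\sqrt2\,c'/l)^l$ directly, which is where the constant $4\sqrt2$ comes from.
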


Recall that $G$ is the graph Bob receives as input. The number of edges in $G$ is $r$ where $r\le \alpha n/\varepsilon^2$. $M\in\{0,1\}^{r\times n}$ denotes the incidence matrix of $G$ where $M_{eu}=1$ iff $u\in V$ is an endpoint of $e\in\binom{V}{2}$. We are interested in the distribution of $Mx+\Delta$ where $x$ is uniformly randomly selected from $A$ and $\Delta$ is a random noise vector where each entry is $1$ with probability $1/2+\rho$. For $z\in\{0,1\}^r$, let
$$p_M(z)=\frac{\sum_{x\in A}\Pr_{\Delta}[Mx+\Delta=z]}{|A|}$$
Note that $p_M$ is a function of the message $A$. We will prove that $p_M(z)$ is close to uniform by bounding the sum of squares of its Fourier coefficients for non-zero weight vectors. Specifically, let $U_r$ denote the uniform distribution over $r$-bit binary strings, then
$$\|p_M-U_r\|^2_{tvd}\le2^r\|p_M-U_r\|^2_2=2^{2r}\sum_{s\in\{0,1\}^r,s\ne 0}\widehat{p_M}(s)^2$$
By expanding the Fourier coefficients of $p_M$, we have
$$\begin{aligned}
\widehat{p_M}(s)=&\frac1{2^r}\sum_{z\in\{0,1\}^r}p_M(z)(-1)^{z\cdot s}\\
=&\frac{1}{|A|2^r}\sum_{\Delta'\in\{0,1\}^r}\Pr[\Delta=\Delta']\cdot (|\{x\in A:(Mx+\Delta')\cdot s=0\}|-|\{x\in A:(Mx+\Delta')\cdot s=1\}|)\\
=&\frac{1}{|A|2^r}\sum_{\Delta'\in\{0,1\}^r}\Pr[\Delta=\Delta']\cdot (|\{x\in A:x\cdot (M^Ts)=(\Delta')^Ts\}|-|\{x\in A:x\cdot (M^Ts)=(\Delta')^Ts+1\}|)\\
=&\frac{1}{|A|2^r}\sum_{\Delta'\in\{0,1\}^r}\Pr[\Delta=\Delta']\sum_{x}f(x)\cdot (-1)^{x\cdot (M^Ts)}\cdot (-1)^{(\Delta')^Ts}\\
=&\frac{2^n\hat f(M^Ts)}{|A|2^r}\sum_{\Delta'\in\{0,1\}^r}\Pr[\Delta=\Delta']\cdot (-1)^{(\Delta')^Ts}\\
=&\frac{2^n\hat f(M^Ts)}{|A|2^r}(-2\rho)^{|s|}
\end{aligned}$$
where the last equation is derived from the fact that the summation is exactly the expansion of the polynomial
$$((1/2+\rho)+(1/2-\rho))^{r-|s|}((1/2-\rho)-(1/2+\rho))^{|s|}=(-2\rho)^{|s|}$$

Therefore, we can rewrite the TV distance as

\begin{equation}
    \label{eq:streaming_statistical_distance}
    \begin{aligned}
    \|p_M-U_r\|^2_{tvd}\le&2^{2r}\sum_{s\in\{0,1\}^r,s\ne 0}\widehat{p_M}(s)^2\\
    =&\frac{2^{2n}}{|A|^2}\sum_{s\in\{0,1\}^r,s\ne 0}\hat f(M^Ts)^2\cdot (2\rho)^{2|s|}\\
    =&\frac{2^{2n}}{|A|^2}\sum_{v\in\{0,1\}^n}\hat f(v)^2\sum_{s\ne 0}(2\rho)^{2|s|}\mbbone_{v=M^Ts}\\
    =&\frac{2^{2n}}{|A|^2}\sum_{l\ge 0}\sum_{v\in \{0,1\}^n,|v|=l}\hat f(v)^2\sum_{s\ne 0} (2\rho)^{2|s|}\mbbone_{v=M^Ts}
    \end{aligned}
\end{equation}
where the first inequality is by Cauchy-Schwartz, the second is Parseval's equality. We will show that $\sum_{s\ne 0} (2\rho)^{2|s|}\mbbone_{v=M^Ts}$ is well-bounded for every $v$. Combined with Lemma~\ref{lem:streaming_coefficients_weight_bound}, we get our desired bound.

If we regard $s$ as an indicator of a subgraph of $G$, where each edge $e_i$ is selected if $s_i=1$, the summation represents the total weight of different vector $s$ such that the parity of degrees of vertices in the subgraph induced by $s$ is exactly the vector $v$.

\begin{lemma}
    \label{lem:streaming_counting}
    Let $M$ be the edge incidence matrix of a graph $G$ drawn from $\mathcal{G}_{n,\alpha n/\varepsilon^{2}}$. For any vector $v \in \{0,1\}^n$ with Hamming weight $l=|v|$, the following bound on the expected value holds for parameters $\varepsilon\in\left(\omega\left(\frac1{\sqrt n}\right),0.05\right)$ and $\alpha\in\left[\frac{\varepsilon^{2}}{n},10^{-4}\right]$:
    $$\mathbb{E}_{M}\left[\sum_{s\in\{0,1\}^r,s\ne 0} (2\rho)^{2|s|}\mathbbm{1}_{v=M^Ts}\right]\le \begin{cases}
        10^{10}\alpha^3&\textnormal{if }l=0\\
        \left({10^4\alpha}/{n}\right)^{l/2}&\textnormal{if }l>0
    \end{cases}$$
\end{lemma}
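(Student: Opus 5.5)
We read the sum combinatorially: a nonzero $s\in\{0,1\}^r$ is a nonempty edge subset $S\subseteq E(G)$ with $|S|=|s|$, and $M^Ts=v$ says that the odd-degree vertices of $S$ are exactly $\mathrm{supp}(v)$. By linearity of expectation,
\[
\mbbe_M\Big[\sum_{s\ne0}(2\rho)^{2|s|}\mbbone_{v=M^Ts}\Big]=\sum_{m\ge1}(2\rho)^{2m}\sum_{S}\Pr[S\subseteq E(G)],
\]
where the inner sum runs over simple edge sets $S\subseteq\binom V2$ of size $m$ whose odd-degree set is $\mathrm{supp}(v)$. Since $G$ is a union of $r'=\alpha n/\varepsilon^2$ i.i.d.\ uniform pairs, a union bound over which samples hit each edge gives $\Pr[S\subseteq E(G)]\le (r'/\binom n2)^m\le (3\alpha/(\varepsilon^2 n))^m$ for $n$ large. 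With $\rho=10\varepsilon$, each edge then carries weight at most $(2\rho)^2\cdot 3\alpha/(\varepsilon^2n)=1200\alpha/n=:\beta/n$, so the expectation is at most $\sum_{m\ge1}(\beta/n)^m N_m(v)$. Here $N_m(v)$ counts $m$-edge simple graphs on $V$ with odd-degree set $\mathrm{supp}(v)$. The key point is that each edge costs $\beta/n$ with $\beta\le 0.12$ tiny; the count $N_m$ is the only remaining issue.

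Every such $S$ decomposes into $l/2$ edge-disjoint paths pairing up the vertices of $\mathrm{supp}(v)$, plus edge-disjoint cycles. We bound $N_m(v)$ by counting such decompositions, which overcounts and is harmless.

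\textbf{Paths.} For a path of length $j$ between two fixed endpoints there are at most $n^{j-1}$ choices, contributing $(\beta/n)^j n^{j-1}=\beta^j/n$. Summing over $j\ge1$ gives at most $2\beta/n$ per path. The number of perfect pairings of the $l$ odd vertices is $(l-1)!!\le l^{l/2}$. Together the paths contribute $(l-1)!!\,(2\beta/n)^{l/2}$.

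\textbf{Cycles.} A cycle of length $j\ge3$ has at most $n^j/(2j)$ choices, contributing at most $\beta^j/(2j)$. Summing over all collections of cycles gives at most $\exp\big(\sum_{j\ge3}\beta^j\big)\le 1+O(\beta^3)$. The correction $O(\beta^3)=O(\alpha^3)$ is exactly the $l=0$ term, where $S$ nonempty forces at least one cycle. So for $l=0$ we get $\exp(O(\beta^3))-1\le 10^{10}\alpha^3$, since $1200^3\approx 1.7\cdot10^9$.

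For $l>0$ the two estimates combine to a bound of the form $(1+O(\beta^3))\,l^{l/2}(2\beta/n)^{l/2}$.

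\textbf{Main obstacle.} The factor $l^{l/2}$ from the pairings spoils the target $(10^4\alpha/n)^{l/2}$ once $l$ is large. The easy bound $(l-1)!!\le l^{l/2}$ is too crude, so I would instead keep the count honest by choosing the paths' internal vertices sequentially, so that each new path's endpoint is paid for by a factor $1/n$ already present. I expect to absorb $l^{l/2}$ only when $l\le \sqrt n$ or so, though perhaps with a different constant.

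Fortunately, the weights in Lemma~\ref{lem:streaming_coefficients_weight_bound} only matter for $l\le 4c'=O(\gamma\sqrt n)$. For larger $l$, a crude trivial bound combined with Parseval should suffice in the application. If the stated bound truly needs all $l$, I would refine the count by noting that a path of length $1$ is a single fixed edge with weight $\beta/n$ and no factor $n$, while longer paths pay $\beta^j/n$. This might let the pairing factor be controlled by a matching-count argument in expectation: the expected number of edges of $G$ inside $\mathrm{supp}(v)$ is small, so most endpoints cannot be paired cheaply. This is the step where I expect the real work to lie.
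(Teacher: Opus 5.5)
Your individual estimates are correct: the union bound $\Pr[S\subseteq E(G)]\le (r'/N)^{|S|}$ with $N=\binom{n}{2}$, the decomposition of $S$ into $l/2$ edge-disjoint paths pairing up $\mathrm{supp}(v)$ plus edge-disjoint cycles, and the cycle generating function. They prove the $l=0$ case with room to spare. For $l>0$ they give $(1+O(\alpha^3))\,(l-1)!!\,(2400\alpha/n)^{l/2}$, which meets the target only for small $l$ (with your constants, $l\le 10$).

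The step you defer as ``the real work'' cannot be done, because the statement is false for large $l$. Take $l$ even and keep only those $s$ that form a perfect matching of $\mathrm{supp}(v)$ using pairs inside $\mathrm{supp}(v)$. There are $(l-1)!!$ of them, and each has weight $(2\rho)^{l}=(400\varepsilon^2)^{l/2}$. Now suppose $r'=\alpha n/\varepsilon^2\to\infty$, e.g.\ $\alpha$ fixed. A fixed set of $l/2$ pairs is then entirely sampled with probability at least $(r')(r'-1)\cdots(r'-\tfrac l2+1)\,N^{-l/2}(1-\tfrac{l}{2N})^{r'}=(1-o(1))(r'/N)^{l/2}$; this is the event that one sample hits each pair and no other sample hits them. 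Hence the expectation is at least $(1-o(1))\,(l-1)!!\,\bigl(800\alpha/(n-1)\bigr)^{l/2}$. This exceeds $(10^4\alpha/n)^{l/2}$ as soon as $(l-1)!!>12.5^{l/2}$: already at $l=34$ (where $33!!/12.5^{17}\approx 1.4$), and by a superexponential factor for larger $l$. So your proposed matching-count refinement points the wrong way. Matchings inside $\mathrm{supp}(v)$ are exactly the terms that make the sum large, and no sequential choice of internal vertices can remove an endpoint multiplicity that is genuinely there.

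The paper's proof meets the same multiplicity and does not account for it. After conditioning on $r$, it bounds the probability that an ordered block of $k$ edges forms an open walk ``with fixed start and end vertices'' by $2.01^k/n^{k+1}$, and multiplies over the $l/2$ open walks. It never sums over which vertices of $\mathrm{supp}(v)$ are the endpoints of which walk. The factor $2^t t!$ enumerates orderings and block partitions of the chosen edges, not endpoint assignments. Restoring that sum produces exactly the extra $(l-1)!!$ you found. Apart from this, your route (a direct union bound in $\mathcal{G}_{n,r'}$ plus counting simple paths and cycles) is cleaner than the paper's ordered-walk count.

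What is true, and what your argument already proves, is the corrected bound $(1+O(\alpha^3))\,(l-1)!!\,(2400\alpha/n)^{l/2}\le(1+O(\alpha^3))\,(2400\,l\alpha/n)^{l/2}$ for $l>0$, together with the stated $l=0$ bound. This still suffices for Lemma~\ref{lem:streaming_tvd_dis}, since $c'\le 2\gamma\sqrt n$ for large $n$:
\begin{itemize}
\item For even $l<4c'$, the product with $(4\sqrt2\,c'/l)^l$ is at most $(1+O(\alpha^3))\,(3.1\cdot 10^5\,\gamma^2\alpha/l)^{l/2}$. This sums to $O(\gamma^2\alpha)$ under $\alpha\le 10^{-7}\gamma^{-2}$.
\item For $l\ge 4c'$, we have $(2400\,l\alpha/n)^{l/2}\le 0.24^{2c'}$, so the tail $2^{c'}\,0.24^{2c'}$ is still $n^{-\omega(1)}$.
\end{itemize}
Your instinct about the application was right. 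Rather than forcing the original inequality, you should state and prove the lemma in this corrected form.
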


The above expectation bound holds because the weight $\rho=10\varepsilon$ is small enough, and because $r$ is upper bounded by $\alpha n/\varepsilon^{2}$.

\begin{proof}
    We will first bound the expectation for every fixed $r$. Notice that the graph distribution $\mcalg_{n,\alpha n/\varepsilon^{2}}$ given the number of edges $r$ fixed is uniform over the set of all $r$-edge simple graphs.

    Now fix an arbitrary $r\le \alpha n/\varepsilon^{2}$.
    We bound the expectation of the summation by counting the number of matrices $M$ such that $v=M^Ts$ for every fixed vector $s\in\{0,1\}^r$. Assume $l$ to be an even number as otherwise the summation is always $0$. Denote by $G_{M,s}$ the graph induced by $M$ and $s$. Then the set of all edges in $G_{M,s}$ can be decomposed into $\ge l/2$ walks, each walk either starts and finishes at the same (arbitrary) vertex, or starts and finishes at vertices where $v_i=1$. There are exactly $l/2$ walks of the second type, which forms the vector $v$.

We first bound the probability that $k$ random edges form a walk. Either the walk has determined start/end points, or it is a closed walk. For closed walks, this probability that $k$ selected (and distinct) edges from $M$ of a fixed order form a closed walk is at most
$$n(2.01/n)^{k-1}(2.01/n^2)=2.01^k/n^{k}$$
The factor of $n$ denotes the number of start vertices of the walk. At each step of the walk, with probability at most $\frac{n-1}{\binom{n}{2}-r}\le 2.01/n$ the next sampled edge shares an endpoint with the previous edge. The last step of the walk must go back to the start point, hence with probability $\frac1{\binom n2}\le 2.01/n^2$ one samples this edge.

Analogously, for walks with fixed start and end vertices, the probability is at most
$$(2.01/n)^{k-1}(2.01/n^2)=2.01^k/n^{k+1}$$

Therefore, for every fixed $r$, we have
%\sgnote{$w$ is input to Bob. }

    $$\begin{aligned}
    &\mbbe_{M:M\in\{0,1\}^{r\times n}}\left[\sum_{s\in \{0,1\}^r,s\ne 0}(2\rho)^{2|s|}\mbbone_{v=M^Ts}\right]\\
    \le& \sum_{t=\begin{cases}
        3&\textnormal{if }l=0\\
        l/2&\textnormal{if }l>0
    \end{cases}}^r\binom rt(2\rho)^{2t}2^tt!2.01^t/n^{t+l/2}\\
    =&\sum_{t=\begin{cases}
        3&\textnormal{if }l=0\\
        l/2&\textnormal{if }l>0
    \end{cases}}^r\binom rtt!1608^t\varepsilon^{2t}/n^{t+l/2}\\
    \le&\sum_{t=\begin{cases}
        3&\textnormal{if }l=0\\
        l/2&\textnormal{if }l>0
    \end{cases}}^rr^t1608^t\varepsilon^{2t}/n^{t+l/2}
    \end{aligned}$$
    where the last inequality is due to the fact that $\binom rtt!\le r^t$.
    The factor of $2^tt!$ accounts for ordering the $t$ edges arbitrarily ($t!$ ways) and independently partitioning it to multiple walks ($2^{t-1}$ ways).

    Notice that for $l=0$, one needs to select at least $3$ distinct edges to form $v=0^n$ since $s\ne 0$. The number of edges cannot be $2$ because two edges do not form a cycle in a simple graph. For $l>0$, one needs at least $l/2$ edges. Note that the desired expectation is a weighted sum of the expectations for every fixed $r$. We have
\allowdisplaybreaks
   \begin{align*}
        &\mbbe_{M:G_M\leftarrow \mcalg_{n,\alpha n/\varepsilon^{2}}}\left[\sum_{s\in\{0,1\}^r,s\ne 0} (2\rho)^{2|s|}\mbbone_{v=M^Ts}\right]\\
        \le&\max_{r\le \alpha n/\varepsilon^{2}}\mbbe_{M:M\in\{0,1\}^{r\times n}}\left[\sum_{s\in \{0,1\}^r,s\ne 0}(2\rho)^{2|s|}\mbbone_{v=M^Ts}\right]\\
        \le&\max_{r\le \alpha n/\varepsilon^{2}}\sum_{t=\begin{cases}
            3&\textnormal{if }l=0\\
            l/2&\textnormal{if }l>0
        \end{cases}}^rr^t1608^t\varepsilon^{2t}/n^{t+l/2}\\
        \le&\sum_{t=\begin{cases}
            3&\textnormal{if }l=0\\
            l/2&\textnormal{if }l>0
        \end{cases}}(1608\alpha)^t/n^{l/2}\\
        \le& \begin{cases}
            10^{10}\alpha^3&\textnormal{if }l=0\\
            (10^4\alpha/n)^{l/2}&\textnormal{if }l>0
        \end{cases}
    \end{align*}
    The sum is dominated by small $t$ because it is a decreasing geometric sum when $\alpha\le 10^{-4}$.

\end{proof}

With Lemma~\ref{lem:streaming_counting}, we are able to bound the statistical distance between $p_M$ and the uniform distribution.

\begin{lemma}
    \label{lem:streaming_tvd_dis}
    Fix three parameters $\varepsilon\in\left(\omega\left (\frac1{\sqrt n}\right),0.05\right)$, $\gamma>\omega\left(\frac1{\sqrt n}\log n\right)$ and $\frac{\varepsilon^2}{n}\le \alpha\le \min(10^{-7}\gamma^{-2},10^{-4})$. Let $\tau=(\gamma+\alpha)\sqrt \alpha$ and $c'= \gamma\sqrt n+\log (1/\tau)$.
    Let $A\subseteq \{0,1\}^n$ of size at least $2^{n-c'}$, and let $f:\{0,1\}^n\rightarrow \{0,1\}$ be the indicator of $A$. Let $G$ be a random graph sampled from $\mcalg_{n,\alpha n/\varepsilon^2}$. Then
    $$\mbbe_M[\|p_M-U_r\|^2_{tvd}]=O((\alpha^2+\gamma^2)\alpha)$$
\end{lemma}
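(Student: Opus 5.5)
We take the expectation over $M$ in \eqref{eq:streaming_statistical_distance}. Linearity pushes the expectation inside the sum over $l$ and over $v$ with $|v|=l$, and Lemma~\ref{lem:streaming_counting} bounds the inner quantity uniformly over $v$ of a given weight. This gives
$$\mbbe_M[\|p_M-U_r\|^2_{tvd}]\le \frac{2^{2n}}{|A|^2}\hat f(0)^2\cdot 10^{10}\alpha^3+\sum_{l\ge 1}\left(\frac{10^4\alpha}{n}\right)^{l/2}\frac{2^{2n}}{|A|^2}\sum_{|v|=l}\hat f(v)^2 .$$
The $l=0$ term is exact: $\hat f(0)=|A|/2^n$, so it contributes $10^{10}\alpha^3=O(\alpha^3)$, which is the $\alpha^2\cdot\alpha$ part of the target. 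Odd $l$ contribute nothing, since $M^Ts$ always has even weight. We split the remaining $l\ge 2$ into a low range $2\le l\le 4c'$ and a high range $l>4c'$.

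\textbf{Low range.} Here we apply Lemma~\ref{lem:streaming_coefficients_weight_bound}, which bounds the $l$-th term by $\left(\frac{4\sqrt2 c'}{l}\sqrt{10^4\alpha/n}\right)^l=\left(\frac{400\sqrt2\,c'\sqrt\alpha}{l\sqrt n}\right)^l$. Since $\gamma>\omega(\log n/\sqrt n)$ and $\tau\ge \alpha^{3/2}\ge \varepsilon^3/n^{3/2}$, we have $\log(1/\tau)=O(\log n)=o(\gamma\sqrt n)$, so $c'\le 2\gamma\sqrt n$. The ratio $c'\sqrt{\alpha}/\sqrt n$ is therefore at most $2\gamma\sqrt\alpha\le 2\sqrt{10^{-7}}$, using $\alpha\le 10^{-7}\gamma^{-2}$. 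It follows that the term with $l$ is at most $(C\gamma\sqrt\alpha/l)^l$ with $C\gamma\sqrt\alpha<1/2$, and the sum over even $l\ge2$ is a geometric series dominated by $l=2$. This gives $O(\gamma^2\alpha)$, the second part of the target.

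\textbf{High range.} For $l>4c'$ we use Parseval, $\sum_v\hat f(v)^2=|A|/2^n$, so that $\frac{2^{2n}}{|A|^2}\sum_{|v|=l}\hat f(v)^2\le 2^n/|A|\le 2^{c'}$. The factor $(10^4\alpha/n)^{l/2}$ is at most $(10^4\alpha/n)^{2c'}$, and since $\alpha\le 10^{-4}$ it is even smaller than $n^{-2c'}$. Multiplying by $2^{c'}$ and summing over at most $n$ values of $l$ gives $n\cdot 2^{c'}n^{-2c'}$, which is negligible because $c'\ge\gamma\sqrt n=\omega(\log n)$. In any case it is dominated by $\gamma^2\alpha$: that requires $\alpha\ge\varepsilon^2/n$ to keep the bound polynomially small, and exponential smallness in $c'$ comfortably beats it.

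Adding the three ranges yields $O(\alpha^3+\gamma^2\alpha)=O((\alpha^2+\gamma^2)\alpha)$. The main obstacle is the low range. It needs the check that $c'=\Theta(\gamma\sqrt n)$, i.e.\ that $\log(1/\tau)$ is negligible under the stated parameter regime, together with the constant bookkeeping showing that the ratio $C\gamma\sqrt\alpha$ is below $1$ so the series is dominated by $l=2$. The high range and the $l=0$ term are routine.
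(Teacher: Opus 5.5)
Your proposal is correct and follows essentially the same route as the paper: split at $l=4c'$, use the Gavinsky level-$l$ bound with Lemma~\ref{lem:streaming_counting} for low weights, and Parseval ($\frac{2^{2n}}{|A|^2}\sum_v\hat f(v)^2\le 2^{c'}$) with the counting lemma for high weights. You also make explicit two steps the paper leaves implicit. First, $\log(1/\tau)=O(\log n)$, so $c'\le 2\gamma\sqrt n$, which is what justifies the paper's step $(4\sqrt2c'/l)^l(10^4\alpha/n)^{l/2}\le(10^6\gamma^2\alpha/l^2)^{l/2}$. Second, the superpolynomially small high-range term is indeed $O(\gamma^2\alpha)$.
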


\begin{proof}
    Recall that by inequality (\ref{eq:streaming_statistical_distance}),
    $$\|p_M-U_r\|^2_{tvd}\le \frac{2^{2n}}{|A|^2}\sum_{l\ge 0}\sum_{v\in\{0,1\}^n,|v|=l}\hat f(v)^2\sum_{s\ne 0}(2\rho)^{2|s|}\mbbone_{v=M^Ts}$$

    We break this summation into two parts: the part $l\in[0,4c')$ and the part $l\in[4c',n]$. For the first part, by Lemma~\ref{lem:streaming_coefficients_weight_bound} and Lemma~\ref{lem:streaming_counting}
    $$\begin{aligned}
        &\mbbe_{M}\left[\frac{2^{2n}}{|A|^2}\sum_{l=0,~l\textnormal{ is even}}^{4c'-1}\sum_{v\in\{0,1\}^n,|v|=l} \hat f(v)^2\sum_{s\ne 0}(2\rho)^{2|s|}\mbbone_{v=M^Ts}\right]\\
        \le& 10^{10}\alpha^3+\sum_{l=2,~l\textnormal{ is even}}^{4c'-1}\left(\frac{4\sqrt 2c'}{l}\right)^l\cdot (10^4\alpha /n)^{l/2}\\
        \le& O(\alpha^3)+\sum_{l=2,~l\textnormal{ is even}}(10^6\gamma^2 \alpha/l^2)^{l/2}\\
        =& O((\alpha^2+\gamma^2)\alpha)
    \end{aligned}$$
    where $\frac{2^{2n}}{|A|^2}\hat f(0^n)=1$ and $10^6\gamma^2\alpha<0.1$.

    For the part $l\in [4c',n]$, we have
    $$\begin{aligned}
        &\mbbe_M\left[\frac{2^{2n}}{|A|^2}\sum_{l\ge 4c'}\sum_{v\in\{0,1\}^n,|v|=l}\hat f(v)^2\sum_{s\ne 0}(2\rho)^{2|s|}\mbbone_{v=M^Ts}\right]\\
        \le&2^{c'}\cdot (10^4\alpha /n)^{2c'}\\
        =&(\sqrt 2\cdot10^4\alpha /n)^{2\gamma \sqrt n+O(\log n)}\\
        =&n^{-\omega(\log n)}
    \end{aligned}$$
    where we used the fact that
    $$\frac{2^{2n}}{|A|^2}\sum_{v\in\{0,1\}^n,|v|=l}\hat f(v)^2\le \frac{2^{2n}}{|A|^2}\sum_{v\in\{0,1\}^n}\hat f(v)^2=\frac{2^n}{|A|}\le 2^{c'}$$
    
    Together with the part $l\in [0,4c')$, we have
    $$\mbbe_M[\|p_M-U_r\|^2_{tvd}]=O((\alpha^2+\gamma^2)\alpha)$$
\end{proof}

We will need the following lemma, which is the Lemma 5.6 from \cite{kapralov2014streaming}.

\begin{lemma}[Lemma 5.6 of \cite{kapralov2014streaming}]
    \label{lem:streaming_lem56_kap}
    Let $(X,Y^1),(X,Y^2)$ be random variables taking values on finite sample space $\Omega=\Omega_1\times\Omega_2$. For any $x\in \Omega_1$ let $Y_x^i$, $i=1,2$ denote the conditional distribution of $Y^i$ given the event $\{X=x\}$. Then
    $$\|(X,Y^1)-(X,Y^2)\|_{tvd}=\mbbe_X[\|Y_X^1-Y_X^2\|_{tvd}]$$
\end{lemma}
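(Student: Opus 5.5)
The plan is to expand both sides using the definition of total variation distance on a finite space, $\|P-Q\|_{tvd}=\frac12\sum_{\omega}|P(\omega)-Q(\omega)|$, and factor each joint probability through the marginal of $X$. The key point is that both pairs share the same first coordinate $X$, so they have the \emph{same} marginal on $\Omega_1$. For every $(x,y)\in\Omega_1\times\Omega_2$ with $\Pr[X=x]>0$ we can write $\Pr[(X,Y^i)=(x,y)]=\Pr[X=x]\cdot\Pr[Y^i_x=y]$ for $i=1,2$.

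First, I would deal with the degenerate points. If $\Pr[X=x]=0$, both joint probabilities vanish for every $y$, so such $x$ contribute nothing to the left-hand side. They also carry zero weight in $\mbbe_X$, so the conditional distributions $Y^i_x$ may be defined arbitrarily there. Restricting the sum to $S=\{x:\Pr[X=x]>0\}$, I would then compute
$$\|(X,Y^1)-(X,Y^2)\|_{tvd}=\frac12\sum_{x\in S}\sum_{y\in\Omega_2}\Pr[X=x]\,\bigl|\Pr[Y^1_x=y]-\Pr[Y^2_x=y]\bigr|.$$
This uses only that the common nonnegative factor $\Pr[X=x]$ can be pulled out of the absolute value.

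Next, I would exchange the finite sums and recognize the inner sum: $\frac12\sum_y|\Pr[Y^1_x=y]-\Pr[Y^2_x=y]|=\|Y^1_x-Y^2_x\|_{tvd}$. Summing against the weights $\Pr[X=x]$ then gives exactly $\mbbe_X[\|Y^1_X-Y^2_X\|_{tvd}]$, which is the claimed equality.

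There is no real obstacle here. The only care needed is the bookkeeping for zero-probability values of $X$, together with noting that the identity relies on the $X$-marginals being identical. If the marginals differed, one would only obtain an inequality involving an extra term $\|X^1-X^2\|_{tvd}$.
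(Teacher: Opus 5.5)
Your proof is correct and complete, including the bookkeeping for zero-probability values of $X$. The paper gives no proof of its own here: it imports the statement verbatim as Lemma~5.6 of \cite{kapralov2014streaming}. Your direct computation is the standard argument. You factor both joint laws through the common marginal of $X$, pull the nonnegative weight $\Pr[X=x]$ out of the absolute value, and recognize the inner sum as $\|Y^1_x-Y^2_x\|_{tvd}$.

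The shared-marginal hypothesis you flag is exactly what the paper relies on when it applies the lemma in the proof of Lemma~\ref{lem:streaming_pdbhp}. There $X=(M,P(x))$, whose law is identical in the YES and NO cases. The identity also holds whether or not TV distance carries the factor $\frac12$, since both sides scale the same way.
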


We now prove Lemma~\ref{lem:streaming_pdbhp}.

\begin{proof}[Proof of Lemma~\ref{lem:streaming_pdbhp}]
Let $P(x)$ be Alice’s message function and let $Q(M, i, w)$ be Bob’s function, where $M$ is the edge incidence matrix of the input graph, $w$ is input vector of Bob, and $i$ is Alice’s message. Since we are analyzing the protocol under a fixed input distribution, we can assume $P$ and $Q$ are deterministic.

Let $D^1$ denote the distribution of $(M, P(x), w)$ under YES instances, and $D^2$ the same under NO instances. We aim to show that
$$
\|D^1 - D^2\|_{{tvd}} = O\left((\gamma + \alpha)\alpha^{1/2} \right),
$$
which implies that the protocol's distinguishing advantage is at most $O\left((\gamma + \alpha)\sqrt\alpha \right)$ on PD-BHP.

The function $P(x)$ partitions $\{0,1\}^n$ into sets $A_1, \dots, A_{2^c}$, where $c = \gamma\sqrt{n}$ is the number of bits communicated. Since there are $2^c$ such sets, at least a $1 - \tau$ fraction of $\{0,1\}^n$ is contained in those $A_i$ of size at least $\tau 2^{n - c}$, where we define $\tau := (\alpha + \gamma)\sqrt \alpha$. We call any message $i$ with $|A_i| \ge \tau 2^{n - c}$ a \emph{typical} message. Then the probability that $i = P(x)$ is not typical on a uniformly random input $x$ is at most $\tau$.

Let us write $D^1 = (M, i, p_{M,i})$, where $p_{M,i}$ is the conditional distribution of $w$ given $M$ and $i$ under the YES distribution, and write $D^2 = (M, i, U_r)$, where $U_r$ is the uniform distribution over $\{0,1\}^r$. For each $M$ and $i$, let $D^1_{(M,i)} := p_{M,i}$ and $D^2_{(M,i)} := U_r$ denote the respective conditional distributions of $w$.

Applying Lemma~\ref{lem:streaming_lem56_kap}, we get:
$$
\begin{aligned}
\|D^1 - D^2\|_{{tvd}} &= \mathbb{E}_i \left[ \mathbb{E}_M \left[ \|D^1_{(M,i)} - D^2_{(M,i)}\|_{{tvd}} \right] \right] \\
&\le \Pr[i \textnormal{ not typical}] + \mathbb{E}_M\left[ \|p_{M,i'} - U_r\|_{{tvd}} \right] \quad (i' \textnormal{ any typical message}) \\
&\le \tau + \mathbb{E}_M\left[ \|p_{M,i'} - U_r\|_{{tvd}} \right]
\end{aligned}
$$

Now fix any typical message $i'$. Then, by definition, $|A_{i'}| \ge 2^{n - c'}$ where $c' = \gamma \sqrt{n} + \log(1/\tau)$. Lemma~\ref{lem:streaming_tvd_dis} then implies
$$
\mathbb{E}_M\left[ \|p_{M,i'} - U_r\|_{\textnormal{tvd}}^2 \right] = O\left( (\gamma^2 + \alpha^2)\alpha  \right).
$$
Applying Jensen's inequality yields
$$
\mathbb{E}_M\left[ \|p_{M,i'} - U_r\|_{\textnormal{tvd}} \right] \le \sqrt{ \mathbb{E}_M\left[ \|p_{M,i'} - U_r\|_{\textnormal{tvd}}^2 \right] } = O\left((\gamma + \alpha)\sqrt\alpha \right).
$$

Combining this with the earlier bound, we conclude that
$$
\|D^1 - D^2\|_{\textnormal{tvd}} = O\left((\gamma + \alpha)\sqrt\alpha \right),
$$
as claimed.
\end{proof}

\subsection{Same vector problem}

We prove a query-space lower bound for the following problem in the random query model.

\paragraph{Same vector problem.} Let $N=\binom n2$, $p\in [0,1]$ and $k=o(N)$. Let $q< N$ divides $N$. Let $X,Y_1,\dots,Y_k$ be binary vectors drawn from $Bern(p)^N$ independently, i.e., they are length-$N$ vectors with each bit being $1$ with probability $p$. An algorithm for this problem has $k$ phases. At each phase $k'\in[k]$, the algorithm makes $q$ random queries. The return of each random query is a pair $(I_t, W_{I_t})$ where $I_t\in_R[N]$ independently. For $W_{I_t}$, (i) in the YES case, $W_{I_t}=X_{I_t}$; (ii) in the NO case, $W_{I_t}=Y_{k', I_t}$ at phase $k'$. The algorithm must distinguish between the two cases above.

When $k>1$, a trivial upper bound is to memorize $\tilde O(N/kq)$ queries from the first $kq/2$ queries, and check if there is any collision with the latter $kq/2$ samples. We show that the upper bound is tight using information theory. Formally, we obtain the following lower bound.

\begin{lemma}
    \label{lem:streaming_graph_main}
    Fix $k\ge 1$ and $q\ge 2n$. Fix $p\in[0,1]$ to be a function of $n$. For every algorithm that solves the same vector problem correctly with probability $\ge 2/3$, the algorithm must use $\Omega(N/kq)$ bits of memory. Specifically, this is true when the algorithm outputs correctly w.p. $\ge 2/3$ on an input distribution where the input is in the YES case and the NO case uniformly.
\end{lemma}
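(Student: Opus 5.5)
We plan an information-theoretic potential argument. Fix a deterministic algorithm with $s$ bits of memory, by Yao's principle, and let $M_t$ be its memory state after $t$ queries. Run it on the YES distribution, which has a hidden $X\sim Bern(p)^N$. The natural potential is $\Phi_t := I(X; M_t)$, or a slightly refined version $I(X_{S}; M_t)$ measured coordinatewise. Because $M_t$ has only $s$ bits, $\Phi_t\le s$. The core claim is that the total variation distance between the YES and NO transcripts, as seen through the output, is at most the sum over queries $t$ of the advantage obtained at step $t$. We will bound that per-step advantage in terms of how much $M_{t-1}$ "knows" about the coordinate $X_{I_t}$.

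The key steps, in order, are as follows.

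(i) Hybrid over time. Define hybrids $H_t$ in which queries $1,\dots,t$ answer consistently with $X$ and later queries are fresh. Within a phase the NO case is equivalent to a fresh vector, so we also handle the phase structure: in the NO case, queries inside a single phase are consistent with $Y_{k'}$. The distance between consecutive hybrids is at most $\mbbe\,\|(M_{t-1}, X_{I_t}) - (M_{t-1}, X'_{I_t})\|_{tvd}$, where $X'$ is an independent copy. By Lemma~\ref{lem:streaming_lem56_kap} and Pinsker's inequality, this is at most $\mbbe_{I_t}\sqrt{2 I(X_{I_t}; M_{t-1})}$.

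(ii) Average over the random index. Since $I_t$ is uniform and independent of $M_{t-1}$, we have $\mbbe_{i}[I(X_i;M_{t-1})]\le \frac1N\sum_i I(X_i;M_{t-1})\le \frac1N I(X;M_{t-1})\le s/N$. This uses independence of the coordinates of $X$ (superadditivity of mutual information for independent $X_i$). Jensen's inequality then gives a per-step distance of $O(\sqrt{s/N})$. Summed over $T=kq$ steps, this yields only $T\sqrt{s/N}$, which is too weak: it would need $s\ge N/T^2$ rather than $N/T$.

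(iii) Sharpen by using the fact that knowledge about $X_i$ can only come from earlier queries that hit $i$. Instead of Pinsker, we will bound the advantage at step $t$ directly by the squared-Hellinger or $\chi^2$-type quantity $\mbbe_i[\,\text{bias of }X_i\text{ given }M_{t-1}]^2$. Its sum over $i$ is controlled by $I(X;M_{t-1})$ up to constants, with the normalization depending on $p$ through $p(1-p)$. The accumulated squared distinguishing advantage over all steps is then $\sum_t O(s/N)=O(Ts/N)$, and requiring constant advantage forces $s=\Omega(N/T)=\Omega(N/kq)$. The condition $q\ge 2n$ (so $T\ge\sqrt N$ roughly) guarantees we are in the regime where this bound is meaningful. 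For the NO hybrids we will argue symmetrically within each phase, or note that in-phase consistency costs at most the same bound with $T$ replaced by $q$.

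The main obstacle is step (iii). Summing total variation linearly gives the wrong exponent, so we must make the advantages add in squares. To achieve this, we will work with KL divergence between the full YES and NO transcript distributions via the chain rule, and bound each term $\DKL$ of the $t$-th answer given $M_{t-1}$ and $I_t$ by $O\!\left(\frac{1}{N}\sum_i \frac{(\Pr[X_i=1|M_{t-1}]-p)^2}{p(1-p)}\right)$. We then relate this sum to $I(X;M_{t-1})\le s$, using the inequality $\DKL\ge$ $\chi^2$-like lower bounds for the posterior of $X$. A subtlety is that the chain rule conditions on the memory rather than the full transcript; we will handle this by the data-processing argument that the output depends only on the final memory, applying the chain rule to the sequence of memory states. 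The total KL is then $O(Ts/N)$, and Pinsker's inequality finishes the proof.
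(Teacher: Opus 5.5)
Your overall architecture, accumulating KL step by step through the memory states with each increment controlled by what the $s$-bit memory knows about the hidden vector and finishing with Pinsker, can be made to work. However, the key quantitative step (iii) is wrong as stated.

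You bound the per-step divergence by the $\chi^2$-type sum $\frac1N\sum_i(\Pr[X_i=1\mid M_{t-1}]-p)^2/(p(1-p))$ and then try to control that sum by $I(X;M_{t-1})\le s$. This needs $\chi^2\lesssim \DKL$, which is the wrong direction. In general $\DKL\le\chi^2$, and for Bernoulli laws the reverse holds only in Pinsker's form $(a-p)^2\le\frac12\DKL(\mathrm{Bern}(a)\|\mathrm{Bern}(p))$ (in nats), which loses a factor $1/(p(1-p))$. No $p$-free constant is possible. If the memory stores $X_i$ exactly, the expected $\chi^2$ term equals $1$ while $I(X_i;M)=H(p)\to 0$. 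So your plan proves only $s=\Omega(p(1-p)N/kq)$, whereas the lemma fixes an arbitrary $p\in[0,1]$.

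The detour is also unnecessary. Let $R$ be the law in which every answer is a fresh $\mathrm{Bern}(p)$ bit, and let $Z_t=(I_t,W_{I_t})$. Use the recursion $\DKL(P_{M_t}\|R_{M_t})\le \DKL(P_{M_{t-1}}\|R_{M_{t-1}})+\mbbe_{m\sim P_{M_{t-1}}}\DKL(P_{Z_t\mid M_{t-1}=m}\|R_{Z_t})$. Its increment is exactly $\frac1N\sum_i I(X_i;M_{t-1})\le \frac1N I(X;M_{t-1})\le s/N$, by superadditivity of mutual information over the independent coordinates of $X$, and no $p$-dependence enters. This must be the one-step data-processing recursion conditioning on $M_{t-1}$ only. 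The chain rule on the joint law of $(M_0,\dots,M_T)$ conditions on the whole memory history, which can carry far more than $s$ bits about $X$ (up to the full transcript), so the increments no longer sum to $O(Ts/N)$.

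A second gap is the NO side. Your per-step expression compares the YES posterior with $\mathrm{Bern}(p)$, i.e.\ with $R$, not with the phased NO law $Q$. Within a phase, repeated indices are answered consistently with $Y_{k'}$, so given $M_{t-1}$ the NO answer has its own posterior. A direct YES-versus-NO chain rule would then have to compare two different posteriors, which your sketch does not control; ``argue symmetrically within each phase'' is not yet an argument.

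The clean fix is to run the same recursion for $Q$ against $R$, with $Y_{k'}$ in place of $X$ and using $I(Y_{k'};M_{t-1})\le s$. This gives $\DKL(Q_{M_T}\|R_{M_T})\le Ts/N$ as well. Pinsker and the triangle inequality then give $\|P_{M_T}-Q_{M_T}\|_{tvd}=O(\sqrt{Ts/N})$, and success probability $2/3$ forces $s=\Omega(N/kq)$ uniformly in $p$.

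With these repairs your proof is a genuinely different route from the paper's. The paper tracks $I(D;M_{tq})$ for the YES/NO bit $D$ and increments it phase by phase. This way in-phase consistency never needs separate treatment, since the NO vector of phase $t$ is independent of $M_{(t-1)q}$. Each increment is at most $\frac12\mbbe_S I(M_{(t-1)q};X_S\mid D=1)$. Shearer's lemma bounds this by $\frac{q}{2N}I(M_{(t-1)q};X\mid D=1)\le\frac{cq}{2N}$, and Fano's inequality converts the total into $c\ge 0.16N/kq$. Your corrected route instead works query by query, uses plain superadditivity in place of Shearer, needs the phase structure only on the NO side, and ends with Pinsker rather than Fano.
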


We will need to use Fano's inequality and Shearer's lemma in the proof.

\begin{proposition}[Fano's inequality]
    Let $X\rightarrow Y\rightarrow \tilde X$ be a Markov chain, and let $p_e=\Pr[\tilde X\ne X]$. Let $H(x)=-x\log x-(1-x)\log (1-x)$ denote the binary entropy function. Let $\mcalx$ be the support of $X$. Then 
    $$H(p_e)+p_e\log(|\mcalx|-1)\ge H(X|Y)$$
\end{proposition}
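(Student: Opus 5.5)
This is the standard Fano's inequality, and I would prove it by expanding a joint conditional entropy in two ways. Introduce the error indicator $E=\mbbone_{\tilde X\ne X}$, so $\Pr[E=1]=p_e$. The first step is to reduce to conditioning on $\tilde X$. Since $X\rightarrow Y\rightarrow \tilde X$ is a Markov chain, the data processing inequality gives $I(X;\tilde X)\le I(X;Y)$, that is, $H(X|Y)\le H(X|\tilde X)$. It therefore suffices to show
$$H(p_e)+p_e\log(|\mcalx|-1)\ge H(X|\tilde X).$$
If one prefers to avoid citing data processing, it follows from the chain rule (property~\ref{itemmi1}) together with $I(X;\tilde X|Y)=0$.

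The second step expands $H(E,X|\tilde X)$ by the chain rule in two orders:
$$H(E,X|\tilde X)=H(X|\tilde X)+H(E|X,\tilde X)=H(E|\tilde X)+H(X|E,\tilde X).$$
Since $E$ is a deterministic function of $(X,\tilde X)$, we have $H(E|X,\tilde X)=0$. Since conditioning does not increase entropy, $H(E|\tilde X)\le H(E)=H(p_e)$. Hence
$$H(X|\tilde X)\le H(p_e)+H(X|E,\tilde X).$$

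The final step bounds $H(X|E,\tilde X)$ by splitting on the value of $E$:
$$H(X|E,\tilde X)=(1-p_e)\,H(X|E=0,\tilde X)+p_e\,H(X|E=1,\tilde X).$$
When $E=0$, $X=\tilde X$ is determined, so the first term is zero. When $E=1$, $X$ ranges over $\mcalx\setminus\{\tilde X\}$, a set of at most $|\mcalx|-1$ values, so $H(X|E=1,\tilde X)\le\log(|\mcalx|-1)$. Combining the three steps gives the proposition.

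There is no real obstacle; the only delicate point is making sure the Markov-chain (data processing) step is justified from the properties listed in the preliminaries. A minor technicality is that $\tilde X$ may take values outside $\mcalx$. In that case the conditional support under $E=1$ is still contained in $\mcalx$, and it has size at most $|\mcalx|-1$ whenever $\tilde X\in\mcalx$. If $\tilde X\notin\mcalx$, I would first replace $\tilde X$ by a fixed element of $\mcalx$; this does not increase $p_e$ and preserves the Markov chain.
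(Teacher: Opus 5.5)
Your three main steps (the error indicator $E$, expanding $H(E,X\mid\tilde X)$ by the chain rule in both orders, $H(X\mid\tilde X)\ge H(X\mid Y)$ from the Markov property, and splitting on $E$) are correct. They follow the route the paper uses. The paper states this proposition without proof, but its proof of the generalized version, Lemma~\ref{lem:strong_fano} in Appendix~\ref{appendix:strong_fano}, does exactly this; it derives $H(X\mid\tilde X)\ge H(X\mid Y)$ by a direct entropy computation rather than citing data processing. Under the standard hypothesis that $\tilde X$ takes values in $\mcalx$, your proof is complete.

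Your last paragraph is wrong, though. Replacing an out-of-range $\tilde X$ by a fixed element of $\mcalx$ gives a new error probability $p_e'\le p_e$, and your argument does prove $H(p_e')+p_e'\log(|\mcalx|-1)\ge H(X\mid Y)$. This does not transfer back to $p_e$. The function $f(p)=H(p)+p\log(|\mcalx|-1)$ increases only on $[0,1-1/|\mcalx|]$ and decreases after that.

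No fix exists, because the statement itself fails without the hypothesis. Take $X$ uniform on $\mcalx$ with $|\mcalx|=m\ge 2$, $Y$ constant, and $\tilde X$ a constant outside $\mcalx$. Then $p_e=1$, and the left-hand side is $\log(m-1)<\log m=H(X\mid Y)$.

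You should therefore assume $\tilde X\in\mcalx$ explicitly, as the standard form of Fano does. Alternatively, drop that assumption and settle for $\log|\mcalx|$ in place of $\log(|\mcalx|-1)$, by bounding $H(X\mid E=1,\tilde X)\le\log|\mcalx|$; this is what the paper's generalized version does. Your replacement trick does recover the $\log(|\mcalx|-1)$ form whenever $p_e\le 1-1/|\mcalx|$. That covers how the paper applies the proposition: $X=D$ is a bit and $p_e\le 1/3$.
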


\begin{proposition}[Shearer's lemma \cite{chung1986some}]
    \label{prop:shearer}
    Let $X_1,\dots,X_n$ be random variables, and let $S_1,\dots,S_m\subseteq [n]$ be subsets such that each $i\in[n]$ belongs to at least $k$ sets. Then
    $$k\cdot H(X_1,\dots,X_n)\le \sum_{j=1}^m H(X_i:i\in S_j).$$
\end{proposition}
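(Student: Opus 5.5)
We prove Shearer's lemma by the chain rule for entropy together with the fact that conditioning does not increase entropy. Fix the natural order on $[n]$, and for each $i$ write $X_{<i}=(X_1,\dots,X_{i-1})$. By the chain rule,
$$H(X_1,\dots,X_n)=\sum_{i=1}^n H(X_i\mid X_{<i}).$$

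For each set $S_j$, list its elements in increasing order as $i_1<i_2<\dots<i_{|S_j|}$. Applying the chain rule inside $S_j$ gives
$$H(X_i: i\in S_j)=\sum_{i\in S_j} H\bigl(X_i\mid X_{i'}: i'\in S_j,\ i'<i\bigr).$$
The conditioning set $\{i'\in S_j: i'<i\}$ is contained in $\{1,\dots,i-1\}$. Since conditioning on more variables cannot increase entropy,
$$H\bigl(X_i\mid X_{i'}: i'\in S_j,\ i'<i\bigr)\ge H(X_i\mid X_{<i})$$
for every $i\in S_j$.

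Now sum over $j$ and exchange the order of summation:
$$\sum_{j=1}^m H(X_i:i\in S_j)\ \ge\ \sum_{j=1}^m\sum_{i\in S_j}H(X_i\mid X_{<i})\ =\ \sum_{i=1}^n d_i\, H(X_i\mid X_{<i}),$$
where $d_i=|\{j: i\in S_j\}|$ is the number of sets containing $i$. By hypothesis $d_i\ge k$, and each conditional entropy is nonnegative, so $d_i\,H(X_i\mid X_{<i})\ge k\,H(X_i\mid X_{<i})$. Summing over $i$ and using the first display gives
$$\sum_{j=1}^m H(X_i:i\in S_j)\ \ge\ k\sum_{i=1}^n H(X_i\mid X_{<i})\ =\ k\cdot H(X_1,\dots,X_n).$$

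There is no substantial obstacle. The one point needing care is that both chain-rule expansions use the same global ordering of $[n]$. This consistency is what makes each local conditioning set a subset of the global prefix $\{1,\dots,i-1\}$, so the monotonicity of conditional entropy applies term by term. The argument works for arbitrary (not necessarily independent) discrete random variables, which is the generality stated.
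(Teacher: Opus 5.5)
Your proof is correct and complete. It is the standard chain-rule argument, and you handle both points that need care. First, you use a single global ordering, so each local conditioning set $S_j\cap\{1,\dots,i-1\}$ lies inside the global prefix and monotonicity of conditional entropy applies term by term. Second, you use nonnegativity of $H(X_i\mid X_{<i})$ to pass from $d_i\ge k$ to the uniform factor $k$. The paper gives no proof of this proposition; it only cites Chung, Graham, Frankl and Shearer, so there is no internal argument to compare with.

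One remark is worth making. The paper never uses the lemma in exactly the stated form. In the proof of Lemma~\ref{lem:streaming_graph_main} it needs a conditional, fractional version:
\[
\mathbb{E}_S\bigl[H(X_S\mid D=1,M_{(t-1)q})\bigr]\ \ge\ \mu\,H(X\mid D=1,M_{(t-1)q}).
\]
Here $S$ is a random index set, independent of $X$ and of the memory, that contains each index with probability exactly $\mu$. Your argument gives this almost verbatim:
\begin{itemize}
\item Condition every entropy on $Z=M_{(t-1)q}$ (under $D=1$).
\item For each fixed $s$, expand $H(X_s\mid Z)$ by the chain rule.
\item Lower bound each term by $H(X_i\mid X_{<i},Z)$.
\item When you exchange summation and expectation, the count $d_i$ becomes $\Pr[i\in S]=\mu$.
\end{itemize}
This yields $\mathbb{E}_S H(X_S\mid Z)\ge\mu\sum_i H(X_i\mid X_{<i},Z)=\mu\,H(X\mid Z)$.

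Deriving this from the integer statement as written would instead need a multiplicity trick, using that the probabilities $\Pr[S=s]$ are rational with denominator $N^q$, plus averaging over the values of $Z$. Your chain-rule route is therefore the more natural one for how the paper actually applies the lemma.
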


\begin{proof}[Proof to Lemma~\ref{lem:streaming_graph_main}]
    We will prove the lower bound to any algorithm over the distribution where the input is in the YES case and the NO case uniformly.

    Let $c$ denote the length of the memory.
    For every $j\in[kq]$, let $M_j\in\{0,1\}^c$ denote the memory after receiving the $j$-th random query. Specifically, $M_0=0^c$ denotes the initial memory.

    Without loss of generality assume that $M_{kq}$ is the output of the algorithm.
    Let $D$ be an indicator such that $D=1$ at the YES case, and $D=0$ at the NO case. Let $D\rightarrow M_{kq}\rightarrow M_{kq}$ be the Markov chain. By Fano's inequality and $p_e\le 1/3$, we get
    \begin{equation}
        \label{eq:streaming_graph_fanos}
        0.92\ge H(1/3)\ge H(p_e)\ge H(D|M_{kq})=1-I(D;M_{kq})
    \end{equation}

    We will decompose the mutual information $I(D;M_{kq})$ into the sum of increases of mutual information in each phase. Formally, we denote by $I_{(t)}$ the list of random variables $I_{(t-1)q+1},\dots, I_{(tq)}$ in phase $t$, and we use the same notation for other variables.
    \begin{equation}
    \label{eq:streaming_graph_potential}
    \begin{aligned}
        &I(D;M_{kq})\\
        =&\sum_{t=1}^{k}I(D;M_{tq})-I(D;M_{(t-1)q})\\
        \le &\sum_{t=1}^{k}I(D;M_{(t-1)q},I_{(t)},W_{I_{(t)}})-I(D;M_{(t-1)q})\\
        =&\sum_{t=1}^{k} I(D;I_{(t)},W_{I_{(t)}}|M_{(t-1)q})\\
        =&\sum_{t=1}^{k} I(D;W_{I_{(t)}}|I_{(t)},M_{(t-1)q})
    \end{aligned}
    \end{equation}
    by the data processing inequality.

    We will upper bound each $I(D;W_{I_{(t)}}|I_{(t)},M_{(t-1)q})$ by $I(M_{(t-1)q}; X_{I_{(t)}}|D,I_{(t)})$. For every $t,m$ and $i_{(t)}$, let $\gamma_{t,m,i_{(t)}}=\Pr[D=0|M_{t-1}=m,I_t=i_{(t)}]$. We can rewrite

    \begin{align*}
        &\sum_{t=1}^{k} I(D;W_{I_{(t)}}|I_{(t)},M_{(t-1)q})\\
        =&\sum_{t=1}^{k} H(W_{I_{(t)}}|I_{(t)},M_{(t-1)q})-H(W_{I_{(t)}}|D,I_{(t)},M_{(t-1)q})\\
        \le&\sum_{t=1}^{k} H(W_{I_{(t)}}|I_{(t)})-\mbbe_{m,i_{(t)}}[H(W_{I_{(t)}}|D,I_{(t)}=i_{(t)},M_{(t-1)q}=m)]\\
        =&\sum_{t=1}^{k} H(W_{I_{(t)}}|I_{(t)})-\mbbe_{m,i_{(t)}}[\gamma_{t,m,i}H(W_{I_{(t)}}|D=0,I_{(t)}=i_{(t)},M_{(t-1)q}=m)+\\
        &(1-\gamma_{t,m,i})H(W_{I_{(t)}}|D=1,I_{(t)}=i_{(t)},M_{(t-1)q}=m)]
    \end{align*}
    where the inequality follows from the fact that conditioning reduces entropy.
    \allowdisplaybreaks
    And we can rewrite the above as the summation of $I(M_{(t-1)q}; X_{I_{(t)}}|D,I_{(t)})$:
    \begin{align}
            \nonumber
        &\sum_{t=1}^{k}I(M_{(t-1)q}; X_{I_{(t)}}|D,I_{(t)})\\
                \nonumber
        =&\sum_{t=1}^{k} H(X_{I_{(t)}}|D,I_{(t)})-H(X_{I_{(t)}}|D,M_{(t-1)q},I_{(t)})\\
                \nonumber
        =&\sum_{t=1}^{k} H(X_{I_{(t)}}|I_{(t)})-\mbbe_{m,i_{(t)}}[H(X_{I_{(t)}}|D,M_{(t-1)q}=m,I_{(t)}=i_{(t)})]\\
                \nonumber
        =&\sum_{t=1}^{k} H(W_{I_{(t)}}|I_{(t)})-\mbbe_{m,i_{(t)}}[\gamma_{t,m,i}H(X_{I_{(t)}}|D=0,I_{(t)}=i_{(t)},M_{(t-1)q}=m)+\\
                \nonumber
        &(1-\gamma_{t,m,i})H(X_{I_{(t)}}|D=1,I_{(t)}=i_{(t)},M_{(t-1)q}=m)]\\
                \nonumber
        =&\sum_{t=1}^{k} H(W_{I_{(t)}}|I_{(t)})-\mbbe_{m,i_{(t)}}[\gamma_{t,m,i}H(W_{I_{(t)}}|D=0,I_{(t)}=i_{(t)},M_{(t-1)q}=m)+\\
        \nonumber
        &(1-\gamma_{t,m,i})H(W_{I_{(t)}}|D=1,I_{(t)}=i_{(t)},M_{(t-1)q}=m)]\\
\label{eq:streaming_graph_iden}
        \ge&\sum_{t=1}^{k} I(D;W_{I_{(t)}}|I_{(t)},M_{(t-1)q})
    \end{align}

    We are able to replace the conditional entropy of $X_{I_{(t)}}$ by $W_{I_{(t)}}$ regardless of whether the answer is YES or NO. When $D=1$ (the YES case), $W_{I_{(t)}}=X_{I_{(t)}}$ by our definition. When $D=0$ (the NO case), $W_{I_{(t)}}=Y_{t,I_{(t)}}$ is independent from previous samples, and follows the same (conditional) distribution as $X_{I_{(t)}}$.

    Let $S\leftarrow [N]^q$ denote a random subset $S\subseteq[N]$ to be the range of a random vector from $[N]^q$. We have
    \begin{align}
        \nonumber
        I(M_{(t-1)q}; X_{I_{(t)}}|D,I_{(t)})&=\frac12I(M_{(t-1)q}; X_{I_{(t)}}|D=1,I_{(t)})\\
        \nonumber
        &=\frac12\mbbe_{S\leftarrow[N]^q}\left[I(M_{(t-1)q}; X_{S}|D=1)\right]\\
        \nonumber
        &=\frac12\mbbe_{S\leftarrow[N]^q}\left[H(X_S|D=1)-H(X_S|D=1,M_{(t-1)q})\right]\\
        \nonumber
        &=\frac12\mbbe_{S\leftarrow[N]^q}\left[H(X_S|D=1)\right]-\frac12\mbbe_{S\leftarrow[N]^q}\left[H(X_S|D=1,M_{(t-1)q})\right]\\
        \label{eq:streaming_sl}
        &=\frac12\mbbe_{S\leftarrow[N]^q}\left[\sum_{i\in S}H(X_i|D=1)\right]-\frac12\mbbe_{S\leftarrow[N]^q}\left[H(X_S|D=1,M_{(t-1)q})\right]
    \end{align}
    Let $\mu=\Pr_{S\leftarrow [N]^q}[i\in S],\forall i\in[N]$.
    We have $\mu=1-(1-1/N)^q\le q/N$.
    Then, by Shearer's Lemma (Proposition~\ref{prop:shearer}),
    \begin{equation*}
        \mbbe_{S\sim I_{(t)}}\left[H(X_S|D=1,M_{(t-1)q})\right]\ge \mu\cdot H(X|D=1,M_{(t-1)q}).
    \end{equation*}
    And,
    \begin{align*}
        \mbbe_{S\sim I_{(t)}}\left[\sum_{i\in S}H(X_i|D=1)\right]&=\sum_{i\in[N]}\Pr_{S\leftarrow [N]^q}[i\in S]\cdot H(X_i|D=1)\\
        &=\mu\left(\sum_{i\in[N]}H(X_i|D=1)\right)\\
        &=\mu \cdot H(X|D=1).
    \end{align*}
    Substituting in Equation~\eqref{eq:streaming_sl}, we get 
    \begin{align*}
        I(M_{(t-1)q}; X_{I_{(t)}}|D=1,I_{(t)})&\le \frac\mu2 H(X|D=1) -\frac\mu2 H(X|D=1,M_{(t-1)q})\\
        &=\frac\mu 2 I(M_{(t-1)q},X|D=1)\\
        &\le \frac{cq}{2N}.
    \end{align*}

    Combined with inequalities (\ref{eq:streaming_graph_fanos}), (\ref{eq:streaming_graph_potential}), (\ref{eq:streaming_graph_iden}), we obtain 
    $$0.92\ge 1-I(D;M_{kq})\ge 1-\sum_{t=1}^{k} I(D;W_{I_{(t)}}|I_{(t)},M_{(t-1)q})\ge 1-\sum_{t=1}^{k}I(M_{(t-1)q}; X_{I_{(t)}}|D,I_{(t)})\ge 1-\frac{ckq}{2N}$$
    and we get a space lower bound $c\ge 0.16\frac{N}{kq}$.
\end{proof}

\subsection{Proof to the main theorem}

In this section, we prove the main theorem, which is a query-space lower bound to approximating the max cut in the random query model. We repeat the theorem here.

\begin{namedtheorem}[Theorem~\ref{thm:streaming_main}]
    Let $G=(V,E)$ be an undirected simple graph with $n$ vertices. Let $\Pi$ be any randomized algorithm that, in the random query model, approximates the correlation clustering cost of $G$ to within an additive error of $\varepsilon n^2$ with probability at least $99/100$. For this algorithm, if the worst-case query complexity is $q$ and the space used is at most $\gamma \sqrt n$ bits, then the following lower bound holds:
    $$q=\begin{cases}
        \Omega\left(\min\left(\frac{n}{\varepsilon^2\sqrt{\gamma}}, \frac{n\sqrt n}{\gamma}\right)\right)&\textnormal{if }\gamma<1\\
        \Omega\left(n/\varepsilon^2\right)&\textnormal{if }\gamma\ge 1
    \end{cases}$$
    for parameters $\varepsilon\in \left(\omega\left(\frac{1}{\sqrt n}\right),0.05\right)$ and $\gamma>\omega\left(\frac{\log n}{\sqrt n}\right)$.
\end{namedtheorem}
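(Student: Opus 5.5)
By Yao's principle we may take $\Pi$ to be deterministic. By Lemma~\ref{lem:streaming_cc_cost}, with probability $1-n^{-\omega(1)}$ the optimal clustering cost of $G\leftarrow\mcalg^Y$ is at most $n^2/4-5\varepsilon n^2+\tilde O(n)$, and that of $G\leftarrow\mcalg^N$ is at least $n^2/4-\tilde O(n^{3/2})$. Since $\varepsilon=\omega(1/\sqrt n)$, these two ranges are more than $2\varepsilon n^2$ apart. Thresholding an $\varepsilon n^2$-additive estimate therefore distinguishes $\mcald^Y_q$ from $\mcald^N_q$ with advantage at least $0.98-o(1)$. The rest of the proof bounds the distinguishing advantage of any $\gamma\sqrt n$-space, $q$-query algorithm along the chain of Figure~\ref{fig:streaming_structure_of_proofs}.

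\textbf{Case $\gamma\ge1$.} Corollary~\ref{cor:lb_randomquery_benchmark} applies to algorithms with no memory bound, so it gives $q=\Omega(n/\varepsilon^2)$ directly.

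\textbf{Case $\gamma<1$.} Split the $q$ queries into $k$ phases of $q/k=\alpha n/\varepsilon^2$ queries each, with $\alpha=c_0\gamma$ for a small constant $c_0$; this satisfies $\alpha\le\min(10^{-7}\gamma^{-2},10^{-4})$. Suppose $q\le c_1 n/(\varepsilon^2\sqrt\gamma)$, so $k=q\varepsilon^2/(\alpha n)=O(\gamma^{-3/2})$.

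\textbf{Step (2), the hybrid chain.} Compare $\mcald_{q,l}$ and $\mcald_{q,l-1}$. They differ only in phase $l$, whose graph is $\mcalg^Y_P$ in one and $\mcalg^N$ in the other. Build a PD-BHP protocol:
\begin{itemize}
\item Alice holds $x=P$. She simulates phases $1,\dots,l-1$ herself, since she knows $P$ and can sample fresh $\mcalg^Y_P$ graphs, and sends the memory state ($\gamma\sqrt n$ bits).
\item Bob holds the $\mcalg_{n,\alpha n/\varepsilon^2}$ edge multiset $M$ and the vector $w$. He runs phase $l$ on these samples in the order drawn, answering a repeated pair consistently from $w$.
\item For phases $l+1,\dots,k$ Bob samples fresh $\mcalg^N$ graphs himself.
\end{itemize}
In the PD-BHP encoding an edge is present with probability $1/2+\rho$ exactly when $w_e=(Mx)_e\oplus 1$ is the flip, which matches $\mcalg^Y_P$ up to relabeling. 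Lemma~\ref{lem:streaming_pdbhp} then bounds each hybrid gap by $O((\gamma+\alpha)\sqrt\alpha)=O(\gamma^{3/2})$, with an implicit constant depending on $c_0$. Summing over the $k$ steps gives total gap $O(k\gamma^{3/2})\le 0.1$ once $c_1$ is small.

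\textbf{Step (1), the endpoints.} This step needs the same vector lemma, and it is where the second term of the minimum comes from.

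For $\mcald^N_q$ versus $\mcald_{q,0}$: in $\mcald_{q,0}$ every phase sees an independent $\mcalg_{n,1/2}$. This is exactly the NO case of the same vector problem with $p=1/2$, while $\mcald^N_q$ is the YES case with $X$ the single graph.

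For $\mcald^Y_q$ versus $\mcald_{q,k}$: first condition on $P$. Each coordinate is then Bernoulli with mean $1/2\pm\rho$, so the bits are non-identically distributed. I would use the generalization of Lemma~\ref{lem:streaming_graph_main} to independent non-uniform bits that the paper alludes to. Its proof only used independence across coordinates, the subadditivity identity $H(X)=\sum_i H(X_i)$, and Shearer, all of which survive. Applying it conditionally on $P$ (giving $P$ to the algorithm for free only helps it) yields advantage $O(ckq/N)$ in place of the Fano-threshold statement. Concretely, this comes from $I(D;M)\le ckq/(2N)$ together with Pinsker.

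With $c=\gamma\sqrt n$ and $N=\Theta(n^2)$, each endpoint gap is $O(\gamma q/n^{3/2})$. Note that $kq$ in the lemma is the total number of queries, here $q$. So if $q\le c_2 n^{3/2}/\gamma$, both gaps are at most $0.1$.

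\textbf{Conclusion.} If $q$ is below both thresholds, the total advantage is at most $0.3$, contradicting the $0.98$ obtained above. Hence $q=\Omega(\min(n/(\varepsilon^2\sqrt\gamma),n^{3/2}/\gamma))$.

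\textbf{Main obstacle.} The delicate part is the endpoint step. I need to convert the same vector bound into an advantage bound rather than a $2/3$-success statement, which I would do by rereading its proof as bounding $I(D;M_{kq})$. I also need to justify the non-uniform, $P$-conditioned version.

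A second, smaller concern is the hybrid simulation. Bob's phase must be exactly $\mcalg_{n,r'}$ with repetitions answered consistently, and that is why PD-BHP was defined with $r'$ samples with repetition. The parameter side-conditions need checking too: $\alpha\ge\varepsilon^2/n$ holds because $\gamma=\omega(\log n/\sqrt n)$ and $\varepsilon<0.05$.
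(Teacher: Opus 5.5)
Your proposal is correct and follows the paper's proof essentially step for step: you threshold the estimate via Lemma~\ref{lem:streaming_cc_cost}, invoke Corollary~\ref{cor:lb_randomquery_benchmark} for $\gamma\ge 1$, and for $\gamma<1$ use the same-vector lemma at the two endpoints plus a PD-BHP hybrid over $k=\Theta(\gamma^{-3/2})$ phases of density $\alpha=\Theta(\gamma)$ (the paper fixes $k$ and derives $\alpha$; you do the reverse, which is equivalent). Conditioning explicitly on $P$ for the non-uniform same-vector bound and converting it to an additive advantage makes rigorous what the paper leaves terse; the one slip is that Pinsker gives advantage $O(\sqrt{ckq/N})$ rather than $O(ckq/N)$, which is still a small constant once $q\le c_2 n^{3/2}/\gamma$.
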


The lower bound is obtained through a reduction from both the same-vector problem and the PD-BHP problem to the max cut problem.
Our reduction follows a similar structure to \cite{kapralov2014streaming} despite adding a reduction to the same vector problem. This reduction is necessary in our case. We start from introducing the reduction used in \cite{kapralov2014streaming}. Then we outline how we adapt this reduction to obtain our result. After presenting a full map, we will show the formal reduction.

In \cite{kapralov2014streaming}, the main result is a $\tilde\Omega(\sqrt n)$ space lower bound for $(2-\varepsilon)$-approximating the max cut given a stream of edges of the input graph. To that end, they show that every algorithm with limited space cannot distinguish a bipartite graph from an Erd\"os-R\'enyi graph with the same expected number of edges. The lower bound is also achieved by a reduction to the D-BHP problem, which is similar to the PD-BHP problem we study in this work but (i) the noise vector is always $0^n$; (ii) the number of edges given to Bob is $\Theta(n/k)$ where $k=\Theta(\log n)$ is the number of phases. Besides, their total number of edges over all $k$ phases in the stream is only $\Theta(n)$.

Their reduction has two steps. First, they show that a stream of $k$ independently sampled random graphs of $\Theta(n/k)$ edges is close in total variation to a stream of the edges of a random graph of $\Theta(n)$ edge. This holds both when the graph is sampled from an Erd\"os-R\'enyi distribution or when the graph is a random bipartite graph given a fixed random partition. It is true because the total number of edges is $\Theta(n)$ and it is unlikely an edge will be sampled twice in two different phases. Given the first step, we only need to show that algorithms cannot distinguish between a stream of $k$ phases of random sparse bipartite graphs given and a stream of $k$ phases of random sparse Erdos-Renyi graphs.

The next step is the standard hybrid argument. For every $i\in\{0,\dots,k\}$, let $\mcald^Y_{(i)}$ denote the input distribution where the first $i$ phases are random sparse bipartite graphs given a fixed random partition, and the last $k-i$ phases are sparse Erd\"os-R\'enyi graphs. By the hybrid argument, if there is an algorithm that can distinguish the two cases with constant advantage, there also exists an index $i\in[k]$ such that the algorithm can distinguish the two distributions $\mcald^Y_{(i-1)}$ and $\mcald^Y_{(i)}$ with $\Omega(1/k)$ advantage. Such an algorithm $\Pi$ can be used to devise an algorithm for the D-BHP problem: Alice, given a partition as input, samples $i-1$ sparse graphs of $\Theta(n/k)$ edges and runs $\Pi$ on it locally. Then, Alice sends the content of the space of $\Pi$ to Bob. Bob sets the graph of the $i$-th phase to be its input graph, and the last $n-i$ phases to be Erd\"os-R\'enyi graphs. Bob runs $\Pi$ on the given space and the last $n-i+1$ phases, and outputs the output of $\Pi$. Finally, by the lower bound to D-BHP, the streaming algorithm also requires $\tilde\Omega(\sqrt n)$ space.

In our case, however, it is impossible to show that a stream of random samples of pairs of vertices to the graph is statistically indistinguishable from a stream of $k$ phases of independent random graphs of $1/k$ size. This is because each phase has $\omega(n)$ pairs of vertices in our case. Therefore, almost certainly many pairs of vertices will be sampled twice and cause a conflict (i.e., a pair of vertices is connected in one phase, but disconnected in another). To that end, we use our lower bound established above for the same-vector problem to show that every algorithm in the random query model with bounded time and space cannot distinguish a stream of consistent random queries of a fixed random graph from a stream of random queries that consists of $k$ phases of independent random graphs. Given that, we still use the hybrid argument to reduce the streaming lower bound to the communication lower bound for the PD-BHP problem studied in Section~\ref{sec:PD-BHP}. We refer readers to Figure~\ref{fig:streaming_structure_of_proofs} for an illustration to our reductions.

\begin{proof}[Proof to Theorem~\ref{thm:streaming_main}]
    The case $\gamma\ge 1$ directly follows Corollary~\ref{cor:lb_randomquery_benchmark}. Below, we focus on the case $\gamma<1$.
    
    Suppose for the sake of contradiction that there is a randomized algorithm in the random query model that makes $t$ queries to the input graph, uses $\gamma\sqrt n$ bits of memory, and approximates the correlation clustering cost to within additive error $\varepsilon n^2$ with probability at least $99/100$. Here we let
    $$t:=C\cdot \min\left(\frac{n}{\varepsilon^2\sqrt{\gamma}}, \frac{n\sqrt n}{\gamma}\right)$$
    where $C>0$ is a small enough constant. By Yao's minimax principle, there also exists a deterministic algorithm $\Pi$ with the same query complexity and space complexity such that $\Pi$ approximates the correlation clustering cost given a random input from $\mcald_t=\frac12\mcald_t^Y+\frac12\mcald_t^N$ with $\ge 99/100$ probability. Let $\Pi'$ be the algorithm that outputs $1$ if and only if the output of $\Pi$ is less than $\frac{n^2}{4}-2.5\varepsilon n^2$. By Lemma~\ref{lem:streaming_cc_cost}, $\Pi'$ is a distinguisher for input distributions $\mcald_t^Y$ and $\mcald_t^N$ with high advantage. Formally, we have
    \begin{equation}
        \label{eq:streaming_main_proof_1}
        \Pr_{S\leftarrow \mcald_t^Y}[\Pi'(S)=1]\ge 99/100-n^{-\omega(1)}
    \end{equation}
    and
    \begin{equation}
        \label{eq:streaming_main_proof_2}
        \Pr_{S\leftarrow \mcald_t^N}[\Pi'(S)=1]\le 1/100+n^{-\omega(1)}
    \end{equation}
    where $S$ denotes the input stream.

    Let $k=\frac1{\gamma\sqrt \gamma}$. Without loss of generality, we assume $k$ divides $t$.
    We obtain that
        $$\Pr_{S\leftarrow \mcald_{t,k}}[\Pi'(S)=1]\ge 0.6$$
    Otherwise, by (\ref{eq:streaming_main_proof_1}), $\Pi'$ solves the same-vector problem correctly with probability $\ge 0.695-n^{-\omega(1)}$ on the distribution $\frac12\mcald_t^Y+\frac12\mcald_{t,k}$, which contradicts  Lemma~\ref{lem:streaming_graph_main} since $\frac N t\le \gamma \sqrt n/C$. Analogously, by (\ref{eq:streaming_main_proof_2}) and Lemma~\ref{lem:streaming_graph_main}, we have
        $$\Pr_{S\leftarrow \mcald_{t,0}}[\Pi'(S)=1]\le 0.4$$

    By the standard hybrid argument, there exists a number $i\in[k]$ such that $\Pi'$ distinguishes between $\mcald_{t,i-1}$ and $\mcald_{t,i}$. Formally, if we let $p_i:=\Pr_{S\leftarrow \mcald_{t,i}}[\Pi'(S)=1]$ for every $i=0,\dots, k$, we have
    $$\max_{i\in[k]}|p_i-p_{i-1}|\ge \frac1k\sum_{i\in [k]}|p_i-p_{i-1}|\ge \frac1k|p_k-p_0|\ge (0.6-0.4)/k=\frac1{5k}$$

    Given such an algorithm $\Pi'$ and index $i$, we can obtain an algorithm for the PD-BHP problem with $\ge \frac{1}{10k}$ advantage in the communication model.

    Formally, the algorithm works as follows: when Alice receives a partition $P\in\{0,1\}^n$ as input, she independently samples a sequence of $i-1$ graphs $G_1,\dots, G_{i-1}$ drawn from $\mcalg_P^Y$ locally. Alice then runs the algorithm $\Pi'$ on $\frac{i-1}{k}t$ random queries to these graphs. After that, she sends the content of the space of $\Pi'$ to Bob. Bob receives the space of $\Pi'$, the input graph $G$ and the input vector $w$. He samples a sequence of $t/k$ triples $((u_1,v_1,e_1),\dots,(u_{t/k},v_{t/k},e_{t/k}))$ on the marginal distribution that a random graph from $\mcalg_{n,t/k}^{\textnormal{}}$ is exactly $G$, where each $e_i$ is the the corresponding bit in $w$. After this, Bob samples a sequence of $k-i$ graphs $G_{i+1},\dots,G_{k}$ from $\mcalg^N$ and correspondingly $\frac{k-i}{k}t$ random queries. Bob then simulates $\Pi'$ on these random queries with its initial content of the space set to the message of Alice. Finally, Bob outputs the output of $\Pi'$.

    To see its correctness, notice that the input distribution of Bob is exactly the distribution of subgraphs induced by the $i$-th phase of the random queries. On the YES case, Alice and Bob together simulates $\Pi'$ on the input distribution $\mcald_{t,k}$, whereas on the NO case, they simulates $\Pi'$ on $\mcald_{t,k-1}$. Hence the above algorithm distinguishes YES cases from NO cases with advantage $\ge \frac1{10k}$. However, given $\gamma<1$, by applying Lemma~\ref{lem:streaming_pdbhp} and setting $\alpha=\frac{t}{kn/\varepsilon^2} \le C\cdot \gamma$, the advantage should be $\le (\gamma+\alpha)\cdot \sqrt{\alpha}\le C'\cdot \gamma\sqrt{\gamma}$, which is smaller than $\frac{1}{10k}=0.1\gamma\sqrt\gamma$ when $C'$ is small enough. Therefore, we get a contradiction.

\end{proof}

\section{Tight query lower bound for the correlation clustering}

In this section, we prove an $\Omega(n/\varepsilon^2)$ query lower bound for the correlation clustering partition to within $\varepsilon n^2$ additive error, where the output is the partition itself, for $\varepsilon\in\left(\omega\left(\sqrt{\frac{\log n}{n}}\right),0.001\right)$. Here, algorithms have adaptive adjacency-matrix query access to the input graph. A matching upper bound was given in \cite{bressan2019correlation}.

In addition, our lower bound and a similar proof structure also applies to the max cut partition and the minimum bisection partition problems. We defer the formal statement and the proof to Appendix~\ref{appendix:strong_mcut_mbis}.

\begin{theorem}[tight query lower bound]
    \label{thm:strong_tight}
    Let $\varepsilon\in\left(\omega\left(\sqrt{\frac{\log n}{n}}\right),0.001\right)$, for every randomized adaptive algorithm $\Pi$ in the adjacency-matrix query model, if the algorithm outputs a clustering with additive error $\le \varepsilon n^2$ compared to the optimal correlation clustering cost with $> 1/100$ probability, then the worst-case query complexity of $\Pi$ is $\Omega(n/\varepsilon^2)$.
\end{theorem}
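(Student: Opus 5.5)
We follow the three-step outline of Section~\ref{sec:poadjacency}. By Yao's principle it suffices to fix a deterministic adaptive algorithm $\Pi$ making $T = c\,n/\varepsilon^2$ queries, for a small constant $c$, and to show that it fails with probability $\ge 99/100$ on a hard distribution.

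\textbf{Hard distribution.} Set $\rho=100\varepsilon$. Sample $P\in\{0,1\}^n$ uniformly. Independently include each pair $(u,v)$ as an edge with probability $1/2+\rho$ if $P_u=P_v$, and with probability $1/2-\rho$ otherwise. Let $\sigma_\Pi$ denote the query transcript.

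\textbf{Step 1 (information per query).} We show $I(P;\sigma_\Pi)=O(\varepsilon^2 T)$ via the chain rule over queries. Conditioned on the history, the $t$-th queried pair $(u_t,v_t)$ is a deterministic function of the history. The answer $e_t$ is distributed as a mixture of $\mathrm{Bern}(1/2+\rho)$ and $\mathrm{Bern}(1/2-\rho)$, mixed according to the posterior of $\mathbbm{1}[P_{u_t}=P_{v_t}]$. The pair was not previously queried, and given $P$ the edges are independent. Hence
\[
I(P;e_t\mid \text{history})\le \max_{a}\DKL\bigl(\mathrm{Bern}(1/2\pm\rho)\,\|\,\mathrm{Bern}(a)\bigr)=O(\rho^2)=O(\varepsilon^2).
\]
Repeated queries contribute zero information. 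Therefore $H(P\mid\sigma_\Pi)\ge n-O(c\,n)$.

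\textbf{Step 2 (structure of good clusterings).} We show that, with probability $1-o(1)$ over $G$, every clustering with cost at most $\mathrm{OPT}+\varepsilon n^2$ has two clusters whose union covers all but $\delta n$ vertices, and this two-part partition is within Hamming distance $\delta n$ of $P$ or of $\bar P$, for a small constant $\delta$. The expected cost of a clustering $\mcalc$ is $\binom n2/2$ plus $\rho$ times the signed count of disagreements with $P$'s pattern. The planted partition gains about $\rho n^2/4=25\varepsilon n^2$. By Hoeffding and a union bound over $n^n$ clusterings, every clustering's cost is within $O(n^{1.5}\sqrt{\log n})=o(\varepsilon n^2)$ of its expectation; this is where $\varepsilon=\omega(\sqrt{\log n/n})$ is used. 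A direct computation shows that splitting into many small clusters, or having Hamming distance $\ge\delta n$ from $P$, loses $\Omega(\rho\,\delta n^2)\gg\varepsilon n^2$ relative to $P$; we choose $\delta$ accordingly. This is the step I expect to be the main obstacle. Making this computation rigorous for arbitrary clusterings means expressing the expected gain as $\rho\sum_{u\simc v}(-1)^{P_u\oplus P_v}$ and bounding it via cluster-wise counts $(a_i,b_i)$ of $V_0$ and $V_1$ vertices. The bound must be $\le \rho\bigl(\text{planted gain}-\Omega(\delta n^2)\bigr)$ unless two clusters nearly coincide with $V_0,V_1$. Since $\sum_i (a_i-b_i)^2$ is maximized by a two-cluster split, the calculation should go through.

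\textbf{Step 3 (generalized Fano).} Map each output clustering to the indicator vector of its largest cluster, which lies in $\{0,1\}^n$. When the output is good, this vector is within distance $2\delta n$ of $P$ or $\bar P$. So the number of acceptable answers per $P$ is $|\mcala_G|\le 2\cdot 2^{H(2\delta)n}$. Lemma~\ref{lem:strong_fano} then gives
\[
p_e n+(1-p_e)\bigl(H(2\delta)n+1\bigr)+1\ge H(P\mid\sigma_\Pi)\ge(1-O(c))n.
\]
Absorbing the $o(1)$ failure probability of Step 2 into $p_e$, we obtain $p_e\ge 99/100$ for $\delta,c$ small. This contradicts a success probability $>1/100$, so $T=\Omega(n/\varepsilon^2)$.
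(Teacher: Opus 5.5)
Your proposal is correct and follows essentially the same route as the paper: the same planted-partition distribution with $\rho=100\varepsilon$, the same per-query bound of $O(\rho^2)$ via the chain rule and KL divergence, and the same stability step via $\rho\bigl((n-s)^2+s^2-\sum_b(a_b-b_b)^2\bigr)$, which the paper completes by showing the two largest values of $|a_b-b_b|$ exceed $0.48n$ and $0.46n$. It also uses the same generalized Fano bound, and decoding to the indicator of the largest cluster in $\{0,1\}^n$ is a slight simplification of the paper's $\{-1,0,1\}^n$ encoding, letting you use the plain form of Lemma~\ref{lem:strong_fano} rather than the support-restricted one. One caution: since $\rho\delta n^2=100\delta\varepsilon n^2$, you cannot take $\delta$ arbitrarily small (the paper's $0.9n$ threshold corresponds to $\delta=0.1$), but this is harmless because Step~3 only needs $2\delta<1/2$ with $H(2\delta)$ bounded away from $1$, and then $c$ small.
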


Observe that when $\varepsilon\le O\left(\sqrt{\frac1n}\right)$, $n/\varepsilon^2=\Omega(n^2)$. No sublinear algorithm is possible for approximating the correlation clustering to within an additive error smaller than $O(n\sqrt n)$.

First, we introduce our input distribution.

\begin{definition}
    \label{def:strong_input_distribution}
    Let $\rho=100\varepsilon$. Let $\mu$ be a distribution over $\{(P,G)\}$ where $P\in\{0,1\}^n$ denotes a partition, and $G=(V,E)$ denotes a graph where $|V|=n$. We call $P$ the \emph{underlying partition}.
    
    In $\mu$, $P$ follows a uniform distribution over $\{0,1\}^n$. $G$ is constructed in a way that, for each pair of vertices $v_i,v_j\in V$, $(v_i,v_j)$ is included in $E$ with independent probability
    $$\Pr[(v_i,v_j)\in E]=\begin{cases}1/2-\rho& \textnormal{ if }P_i\ne P_j\\1/2+\rho&\textnormal{ if }P_i= P_j\end{cases}$$
\end{definition}

We will prove a distributional query lower bound using $\mu$. Intuitively, the optimal solution for a random graph from $\mu$ will be dominated by the underlying partition $P$.

Our proof has two steps. First, we show that, given our input distribution, with high probability without learning $\Omega(n)$ bits of the underlying partition one cannot approximate the correlation clustering to within $\varepsilon n^2$ error. This is formalized by showing that only clusterings that are close to the underlying partition have low additive error. (See Lemma~\ref{lem:strong_hamming_dis_cc} for the formal statement.) We capture the closeness of two clusterings by comparing the largest two clusters of them. Our input distribution ensures that we can ignore the rest of small clusters.

Given the above reduction, we show that every single query can at most reveal $O(\varepsilon^2)$ information of the optimal answer. This is because the KL-divergence between $Bern(1/2-\rho)$ and $Bern(1/2+\rho)$ is $O(\rho^2)=O(\varepsilon^2)$.

The above proof steps formalize as the following lemmas.
We will use a generalized version of Fano's inequality, which transforms the error probability in the approximation to mutual information bounds.

\begin{lemma}[Generalized Fano's inequality]
    \label{lem:strong_fano}
    Let $X\rightarrow Y\rightarrow \tilde X$ be a Markov chain, where $X,\tilde X\in \mathcal X$. Let $\mcala_X\subseteq \mcalx$ be the set of approximations of $X$ where (i) $X\in \mcala_X$, for each $X\in \mcalx$; and (ii) $\tilde X\in \mcala_X$ iff $X\in \mcala_{\tilde X}$, for every $X,\tilde X\in \mcalx$. Let $p_e=\Pr[\tilde X\not\in \mcala_X]$, and $H(p_e)$ be the binary entropy function evaluated at $p_e$. Then
    \begin{equation}
        \label{eq:strong_fano_1}
        p_e\cdot \log(|\mcalx|)+(1-p_e)\cdot \sup_{X}\log(|\mcala_X|)+H(p_e)\ge H(X|Y)
    \end{equation}

    Specifically, let $\mcalx'\subseteq \mcalx$ denote the support of $X$. Then
    \begin{equation}
        \label{eq:strong_fano_2}
        p_e\cdot \log(|\mcalx'|)+(1-p_e)\cdot \sup_{X\in \mcalx}\log(|\mcala_X\cap \mcalx'|)+H(p_e)\ge H(X|Y)
    \end{equation}
\end{lemma}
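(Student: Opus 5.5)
The plan is to mimic the standard proof of Fano's inequality via an error indicator and the chain rule for entropy. Let $E=\mbbone[\tilde X\notin \mcala_X]$, so $\Pr[E=1]=p_e$. Since $\tilde X$ is a function of $Y$ (or, more generally, $X\to Y\to\tilde X$ is a Markov chain), the data processing inequality gives $H(X|Y)\le H(X|\tilde X)$. It therefore suffices to bound $H(X|\tilde X)$. Expanding $H(E,X|\tilde X)$ in two ways gives
\[
H(X|\tilde X)+H(E|X,\tilde X)=H(E|\tilde X)+H(X|E,\tilde X).
\]
Here $H(E|X,\tilde X)=0$, because $E$ is determined by $(X,\tilde X)$. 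Also $H(E|\tilde X)\le H(E)=H(p_e)$.

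Next we split $H(X|E,\tilde X)=p_e\,H(X|E=1,\tilde X)+(1-p_e)\,H(X|E=0,\tilde X)$. On the event $E=1$, $X$ still ranges over $\mcalx$, so $H(X|E=1,\tilde X)\le\log|\mcalx|$. On the event $E=0$ and for a fixed value $\tilde x$, we have $\tilde x\in\mcala_X$. By the symmetry hypothesis (ii), this is equivalent to $X\in\mcala_{\tilde x}$. So $X$ is supported on $\mcala_{\tilde x}$, and $H(X|E=0,\tilde X=\tilde x)\le\log|\mcala_{\tilde x}|\le\sup_{X}\log|\mcala_X|$. Averaging over $\tilde x$ and combining these bounds yields \eqref{eq:strong_fano_1}.

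For \eqref{eq:strong_fano_2}, we repeat the argument, noting that $X$ always lies in its support $\mcalx'$. On $E=1$ this gives the bound $\log|\mcalx'|$. On $E=0$ with $\tilde X=\tilde x$, $X$ lies in $\mcala_{\tilde x}\cap\mcalx'$, where $\tilde x$ ranges over $\mcalx$, which explains why the supremum is over $X\in\mcalx$.

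The only delicate point is the $E=0$ step. It needs the symmetry condition (ii) to turn ``$\tilde x$ approximates $X$'' into ``$X$ lies in a set of size at most $\sup|\mcala|$ determined by $\tilde x$.'' We also need to handle conditioning on zero-probability events: terms with $p_e\in\{0,1\}$ are simply dropped. Condition (i) is not needed for the inequality itself; it only ensures the notion is sensible.
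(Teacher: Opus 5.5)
Your proof is correct and follows essentially the same route as the paper. Both use the error indicator $E$, expand $H(X,E|\tilde X)$ two ways with $H(E|X,\tilde X)=0$ and $H(E|\tilde X)\le H(p_e)$, and split on $E$; the only cosmetic difference is that you cite the data processing inequality for $H(X|Y)\le H(X|\tilde X)$, whereas the paper derives it through an explicit chain of entropy identities using the Markov property. You also make explicit the use of the symmetry condition (ii) in the $E=0$ case, which the paper's proof leaves implicit.
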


In our case, $X$ represents the underlying partition, $Y$ represents the history of queries the algorithm makes to the input graph, and $\tilde X$ is the output of the algorithm. Given the above inequality, lower bounds to $H(X|Y)$ yield lower bounds to the error probability $p_e$ of the algorithm.
Notice that there are $\ge \exp(\Theta(n\log n))$ many different clusterings that approximates the underlying clustering, and inequality (\ref{eq:strong_fano_1}) does not obviously lead to a non-trivial bound to $p_e$. We resolve this problem by strengthening Fano's inequality by intersecting $\mcala_X$ with the support $\mcalx'=\{0,1\}^n$ of underlying partitions (inequality (\ref{eq:strong_fano_2})). We defer the proof to Lemma~\ref{lem:strong_fano} to Appendix~\ref{appendix:strong_fano}.

Next, we will show that only clusterings close to the underlying partition $P$ have low errors. We use $\costg(\mcalc)$ to denote the cost of a clustering $\mcalc$.

\begin{lemma}
    \label{lem:strong_hamming_dis_cc}
    Let $(P,G)\leftarrow \mu$. With probability $\ge1-n^{-\Omega(1)}$, for every clustering $\mcalc=(C_1,\dots, C_k)$, either of the following two events happen
    $$\max_{i,j\in [k]:i\ne j}|C_i \cap \{t:P_{t}=0\}|+|C_j\cap \{t:P_t=1\}|>0.9n$$
    or
    $$\costg(\mcalc)\ge \costg(P)+\varepsilon n^2$$
    where $\costg(P)$ is the cost of the clustering partitioning the vertex set into two parts by $P$.
\end{lemma}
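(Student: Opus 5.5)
The plan is to compare $\costg(\mcalc)$ with $\costg(P)$ pair by pair. This turns the cost difference into a sum of independent $\pm1$ terms whose mean is $\Theta(\rho)$ times the number of pairs on which $\mcalc$ and $P$ disagree. Then I show that violating the first event forces $\Omega(n^2)$ disagreeing pairs, and finish with Hoeffding and a union bound over all clusterings.

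\textbf{Step 1 (pairwise decomposition).} Fix a clustering $\mcalc$ and let $D(\mcalc)$ be the set of pairs $(u,v)$ on which $\mcalc$ and $P$ disagree, i.e.\ exactly one of them puts $u,v$ together. Pairs outside $D(\mcalc)$ contribute identically to both costs, so
$$\costg(\mcalc)-\costg(P)=\sum_{(u,v)\in D(\mcalc)} Z_{uv}.$$
The terms are as follows.
\begin{itemize}
\item If $P_u=P_v$ but $u\nsimc v$, then $Z_{uv}=2\mbbone_{(u,v)\in E}-1$, with mean $2\rho$.
\item If $P_u\ne P_v$ but $u\simc v$, then $Z_{uv}=1-2\mbbone_{(u,v)\in E}$, again with mean $2\rho$.
\end{itemize}
Conditioned on $P$, the $Z_{uv}$ are independent and lie in $[-1,1]$, so $\mbbe[\costg(\mcalc)-\costg(P)\mid P]=2\rho|D(\mcalc)|=200\varepsilon|D(\mcalc)|$.

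\textbf{Step 2 (concentration and union bound).} By Hoeffding (Proposition~\ref{prop:hoeffding}), the sum deviates from its mean by more than $\lambda=\sqrt{4|D(\mcalc)|\,n\ln n}\le 2n^{1.5}\sqrt{\ln n}$ with probability at most $2n^{-2n}$. There are at most $n^n$ clusterings, so a union bound shows that with probability $\ge 1-2n^{-n}$ this holds simultaneously for every $\mcalc$. Suppose $|D(\mcalc)|\ge n^2/50$. Then $200\varepsilon|D(\mcalc)|\ge 4\varepsilon n^2$, while $\lambda=o(\varepsilon n^2)$ because $\varepsilon=\omega(\sqrt{\log n/n})$. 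Hence $\costg(\mcalc)\ge\costg(P)+\varepsilon n^2$. Separately, Chernoff gives that with probability $1-n^{-\Omega(1)}$ the part sizes $A=|\{t:P_t=0\}|$ and $B=|\{t:P_t=1\}|$ both lie in $n/2\pm O(\sqrt{n\log n})$. This balance event is what limits the failure probability to $1-n^{-\Omega(1)}$.

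\textbf{Step 3 (combinatorial core; the main obstacle).} It remains to show that on the balanced event, if $\max_{i\ne j}\big(|C_i\cap P^{-1}(0)|+|C_j\cap P^{-1}(1)|\big)\le 0.9n$, then $|D(\mcalc)|\ge n^2/50$. Write $a_i=|C_i\cap P^{-1}(0)|$ and $b_i=|C_i\cap P^{-1}(1)|$. Then
$$|D(\mcalc)|=\tfrac12\Big(A^2-\textstyle\sum_i a_i^2\Big)+\tfrac12\Big(B^2-\sum_i b_i^2\Big)+\sum_i a_ib_i.$$
Using $\sum_i a_i^2\le A\,a_{\max}$ and $\sum_i b_i^2\le B\,b_{\max}$, the first two terms are at least $\tfrac12 A(A-a_{\max})+\tfrac12 B(B-b_{\max})$. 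I split on whether the maxima are attained at the same index.
\begin{itemize}
\item If $a_{\max}$ and $b_{\max}$ are attained at different indices, the hypothesis gives $a_{\max}+b_{\max}\le 0.9n$. Then $|D(\mcalc)|\gtrsim \tfrac{n}{4}\cdot 0.1n=0.025n^2$.
\item If both maxima sit on the same cluster $C_{i^*}$ and $\min(a_{i^*},b_{i^*})\ge 0.15n$, then the cross term alone gives $|D(\mcalc)|\ge a_{i^*}b_{i^*}\ge 0.0225n^2$.
\item Otherwise, say $b_{i^*}=b_{\max}<0.15n$. Then $\tfrac12 B(B-b_{\max})\gtrsim\tfrac12\cdot 0.5n\cdot 0.35n>n^2/50$; the case $a_{i^*}<0.15n$ is symmetric.
\end{itemize}
The $O(\sqrt{n\log n})$ imbalance only perturbs these constants by $o(n^2)$. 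I expect this case analysis to be the only delicate point, specifically checking that the thresholds ($0.9n$, $0.15n$, $n^2/50$) fit together. If the margins are too tight, I would adjust the $0.15n$ threshold or lower the target $n^2/50$, since Step 2 has ample slack ($4\varepsilon n^2$ versus the needed $\varepsilon n^2$).
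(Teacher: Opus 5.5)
Your proof is correct. Its skeleton matches the paper's: you decompose $\costg(\mcalc)-\costg(P)$ into independent terms of mean $2\rho$ on the disagreement set, then apply Hoeffding with a union bound over at most $n^n$ clusterings, together with a Chernoff balance event for $P$. The paper writes the mean as $\rho\bigl(s^2+(n-s)^2-\sum_b(|S_{0,b}|-|S_{1,b}|)^2\bigr)$, which equals your $2\rho|D(\mcalc)|$. Scaling the deviation to $|D(\mcalc)|$ rather than using the paper's fixed $\varepsilon n^2$ is an inessential variation.

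The real difference is the combinatorial core. The paper argues by contradiction on signed imbalances. If $\sum_b(|S_{0,b}|-|S_{1,b}|)^2>0.48n^2$, the largest imbalance exceeds $0.48n$, and a small optimization over $x\in[0,0.5n]$ shows the second exceeds $0.46n$. These two imbalances must belong to the two largest clusters $C_1,C_2$ and have opposite signs, which forces $|S_{0,1}|+|S_{1,2}|>0.94n$ or the symmetric sum to exceed $0.94n$.

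You instead bound $\sum_i a_i^2\le A\,a_{\max}$ and split on whether the maximizers of $a_i$ and $b_i$ coincide. This is more direct and needs no optimization. It also only requires the cruder balance condition $\min(A,B)\ge 0.4n$, which would even make the failure probability exponentially small.

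What the paper's route buys is that it speaks directly about the two largest clusters. That is the form the proof of Theorem~\ref{thm:strong_tight} uses when it outputs the two largest clusters of $\mcalc$. Your max-over-pairs conclusion is equivalent on the balanced event. Any pair with $a_i+b_j>0.9n$ has $a_i,b_j>0.4n-o(n)$, while all other clusters together contain fewer than $0.1n$ vertices, so $C_i$ and $C_j$ are necessarily the two largest.
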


This lemma ensures that an approximately good correlation clustering must have 2 clusters that are close to $P$.

\begin{proof}
    Without loss of generality, we assume that $C_1$ and $C_2$ are the two largest clusters among the $k$ clusters.
    
    For every $a\in \{0,1\},b\in[k]$, we still use $S_{a,b}$ to denote the set of vertices $\{v_i:P_i=a\textnormal{ and }v_i\in C_b\}$. We have
    \begin{equation}
    \label{eq:strong_hamming_dis_cc_1}
    \begin{aligned}
        &\costg(\mcalc)-\costg(P)\\
        =&\sum_{\substack{a\in\{0,1\},\\1\le b_1<b_2\le k}} \left(|\{(v_i,v_j)\in E:v_i\in S_{a,b_1},v_{j}\in S_{a,b_2}\}|-|\{(v_i,v_j)\not\in E:v_i\in S_{a,b_1},v_{j}\in S_{a,b_2}\}|\right)\\
        &-\sum_{b\in[k]} (|\{(v_i,v_j)\in E:v_i\in S_{0,b},v_{j}\in S_{1,b}\}|-|\{(v_i,v_j)\not\in E:v_i\in S_{0,b},v_{j}\in S_{1,b}\}|)
    \end{aligned}
    \end{equation}
    where the costs of pairs $(v_i,v_j)$ are dismissed if they belong to the same cluster in $\mcalc$ and $P$, or if they belong to different clusters in $\mcalc$ and $P$.

    By the Chernoff bound, with high probability the Hamming weight of $P$ falls in $[n/2-\sqrt n\log n,n/2+\sqrt n\log n]$. We fix such a partition $P$ and a clustering $\mcalc$ such that 
    \begin{equation}
        \label{eq:strong_hamming_dis_cc}
        |C_1\cap \{v_i:P_i=a\}|+|C_2\cap \{v_i:P_i=1-a\}|=|S_{a,1}|+|S_{1-a,2}|\le 0.9n
    \end{equation}
    for every $a\in\{0,1\}$. We are going to show that $\costg(\mcalc)-\costg (P)<\varepsilon n^2$ with exponentially small probability.

    For simplicity, let $s=|P|$ denote the Hamming weight of $P$. We have
    $$\begin{aligned}
        &\mbbe_{G}[\costg(\mcalc)-\costg(P)]\\
        =&\left(\binom s 2+\binom{n-s}{2}-\sum_{b=1}^k\sum_{a=0}^1 \binom {|S_{a,b}|}2\right)\cdot 2\rho+\left(\sum_{b=1}^k|S_{0,b}||S_{1,b}|\right)\cdot 2\rho\\
        =&\rho\left((n-s)^2+s^2-\sum_{b=1}^k (|S_{0,b}|-|S_{1,b}|)^2\right)
    \end{aligned}$$

    Here $(n-s)^2+s^2\ge n^2/2$. We will show that $\sum_{b=1}^k (|S_{0,b}|-|S_{1,b}|)^2\le n^2/2-0.02n^2$ to give a $0.02\rho n^2=2\varepsilon n^2$ lower bound to the expectation. 

    Suppose for the sake of contradiction that $\sum_{b=1}^k (|S_{0,b}|-|S_{1,b}|)^2> n^2/2-0.02n^2$. Then $\max_{b\in [k]}||S_{0,b}|-|S_{1,b}||> 0.48n$. Otherwise
    $$\sum_{b=1}^k (|S_{0,b}|-|S_{1,b}|)^2\le \left(\sum_{b=1}^k ||S_{0,b}|-|S_{1,b}||\right)\cdot \max_b ||S_{0,b}|-|S_{1,b}||\le n\cdot \max_b ||S_{0,b}|-|S_{1,b}||\le 0.48n^2$$

    Besides, the second-largest $||S_{0,b}|-|S_{1,b}||$ is larger than $0.46n$. This is obtained by minimizing $\frac{0.48n^2-x^2}{n-x}$, where $x\in[0,0.5n]$ denotes the largest $||S_{0,b}|-|S_{1,b}||$. Therefore, the largest two $||S_{0,b}|-|S_{1,b}||$ comes from $C_1,C_2$. Because the size of the two clusters are respectively larger than $0.48n$ and $0.46n$. In addition, $|S_{0,b}|-|S_{1,b}|$ is positive in exactly one of $C_1,C_2$. Or otherwise the Hamming weight of $P$ is larger than $0.9n$ or smaller than $0.1n$. This implies that one of the following two cases must occur
    $$|S_{0,1}|+|S_{1,2}|> 0.94n$$
    or
    $$|S_{1,1}|+|S_{0,2}|> 0.94n$$
    However, this contradicts inequality (\ref{eq:strong_hamming_dis_cc}).

    Therefore, $\sum_{b=1}^k (|S_{0,b}|-|S_{1,b}|)^2\le n^2/2-0.02n^2$. We get
    $$\mbbe_G[\costg(\mcalc)-\costg(P)]\ge 2\varepsilon n^2$$
    Observe from equation (\ref{eq:strong_hamming_dis_cc_1}) that $\costg(\mcalc)-\costg(P)$ is a sum of at most $\binom n2$ independent random variables ranging from $\{-1,0,1\}$. By applying Hoeffding's inequality, we get

    $$\begin{aligned}
        &\Pr_G[\costg(\mcalc)-\costg(P)\le \varepsilon n^2]\\
        \le&\Pr_G[|\costg(\mcalc)-\costg(P)-\mbbe[\costg(\mcalc)-\costg(P)]|\ge \varepsilon n^2]\\
        \le&2\exp(-\varepsilon^2 n^2)=\exp(-\omega(n\log n))
    \end{aligned}$$
    where we used the fact that $\varepsilon=\omega\left(\sqrt{\frac{\log n}{ n}}\right)$.
    
    Applying the union bound over all $\le n^n$ clusterings $\mcalc$, the probability that all the clusterings with large distance to $P$ has high cost is exponentially close to $1$ for $1-n^{-\Omega(1)}$ proportion of partitions $P$.
\end{proof}

Next, we prove the mutual information bound.

\begin{lemma}
    \label{lem:strong_mutual_information}
    Fix $\varepsilon\in\left(\omega\left(\sqrt{\frac{\log n}{n}}\right),0.001\right)$ and $q=10^{-8} n/\varepsilon^2$. Let $(P,G=(V,E))\leftarrow \mu$.
    Let~$\sigma_q=((v_{i_1},v_{j_1},a_1),\dots,(v_{i_q},v_{j_q},a_q))$ be a list of random variables, where each $v_{i_t},v_{j_t}\in V$ are deterministic given $v_{i_{<t}},v_{j_{<t}},a_{<t}$, for every $t\in[q]$. Each $a_t$ is the indicator of whether $(v_{i_t},v_{j_t})\in E$. Then
    $$I(P;\sigma_q)< 0.0025n.$$
\end{lemma}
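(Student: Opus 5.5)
The plan is to apply the chain rule for mutual information along the query history and bound each step's contribution by $O(\rho^2)=O(\varepsilon^2)$. Write $\sigma_t$ for the first $t$ triples. Then
$$I(P;\sigma_q)=\sum_{t=1}^q I(P;(v_{i_t},v_{j_t},a_t)\mid \sigma_{t-1}).$$
Since the queried pair $(v_{i_t},v_{j_t})$ is a deterministic function of $\sigma_{t-1}$, each summand equals $I(P;a_t\mid\sigma_{t-1})$. We may assume without loss of generality that the algorithm never repeats a query: a repeated query returns a bit already determined by $\sigma_{t-1}$ and contributes zero information.

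Fix a history $\sigma_{t-1}=h$ and a new pair $(u,v)$. Conditioned on $P$ and $h$, the bit $a_t$ is independent of $h$, because edges are sampled independently given $P$ and $(u,v)$ has not been queried before. Its law is $\mathrm{Bern}(1/2+\rho)$ if $P_u=P_v$ and $\mathrm{Bern}(1/2-\rho)$ otherwise. Hence $a_t\mid h$ is a mixture $\mathrm{Bern}(1/2+\rho(2\pi_h-1))$, where $\pi_h=\Pr[P_u=P_v\mid h]$.

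I then bound the conditional mutual information by
$$I(P;a_t\mid \sigma_{t-1}=h)=H(a_t\mid h)-H(a_t\mid P,h)\le 1-H(1/2+\rho).$$
This holds because $H(a_t\mid h)\le 1$ and $H(a_t\mid P,h)=H(1/2\pm\rho)=H(1/2+\rho)$ in both cases. Equivalently, it is bounded by the KL divergence of $\mathrm{Bern}(1/2\pm\rho)$ from $\mathrm{Bern}(1/2)$. A Taylor expansion gives
$$1-H(1/2+\rho)=\frac{2}{\ln 2}\rho^2+O(\rho^4)\le 3\rho^2$$
for $\rho=100\varepsilon\le 0.1$.

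Summing over $t$ gives
$$I(P;\sigma_q)\le 3\rho^2q=3\cdot 10^4\varepsilon^2\cdot 10^{-8}n/\varepsilon^2=3\cdot10^{-4}n<0.0025n.$$

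The main point needing care is the conditional independence of $a_t$ from the past given $P$ under adaptivity. This requires that the new pair be distinct from all earlier ones, which is handled by the no-repetition reduction, and that the query choice be a function of $\sigma_{t-1}$ only, which holds because the algorithm is deterministic after Yao. Everything else is routine.
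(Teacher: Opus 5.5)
Your proof is correct and has the same structure as the paper's: the chain rule over queries, zero contribution from repeated pairs, and an $O(\rho^2)$ bound for each fresh query. The only difference is the per-step estimate. The paper bounds the expected KL divergence from $\mathrm{Bern}(1/2\pm\rho)$ to the mixture $\mathrm{Bern}(1/2+\eta)$ by linearizing the logarithm, which gives $25\rho^2$ and lands exactly on the borderline $0.0025n$. Your bound $H(a_t\mid h)-H(a_t\mid P,h)\le 1-H(1/2+\rho)\le 3\rho^2$ is simpler, needs no mixture parameter, and gives the comfortably smaller $3\cdot 10^{-4}n$.
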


\begin{proof}

    By chain rule, we have
    $$\begin{aligned}
        I(P;\sigma_q)=&\sum_{t=1}^q I(P;v_{i_t},v_{j_t},a_t|v_{i_{<t}},v_{j_{<t}},a_{<t})\\
        =&\sum_{t=1}^q I(P;a_t|v_{i_{\le t}},v_{j_{\le t}},a_{<t})
    \end{aligned}$$

    Let $\mcale_{t}$ be the event ``for every $t'<t$, $(v_{i_{t'}},v_{j_{t'}})\ne (v_{i_t},v_{j_t})$''. Then
    $$I(P;a_t|v_{i_{\le t}},v_{j_{\le t}},a_{<t})=\Pr[\mcale_t]I(P;a_t|v_{i_{\le t}},v_{j_{\le t}},a_{<t},\mcale_t)$$
    since when $\mcale_t$ is false, $a_t$ is determined.

    Let $\mcale_t'$ be the event ``$P_{i_t}=P_{j_t}$''. By the fact that $I(X;Y)=\mbbe_Y[\DKL(P_{X|Y}\|P_X)]$, we have 
    $$\begin{aligned}
        &I(P;a_t|v_{i_{\le t}},v_{j_{\le t}},a_{<t},\mcale_t)\\
        =&\Pr[\mcale_t'|v_{i_{\le t}},v_{j_{\le t}},a_{<t},\mcale_t]\mbbe_{P|v_{i_{\le t}},v_{j_{\le t}},a_{<t},\mcale_t,\mcale_t'}[\DKL(Bern(1/2-\rho)\|Bern(1/2+\eta))]+\\
        &(1-\Pr[\mcale_t'|v_{i_{\le t}},v_{j_{\le t}},a_{<t},\mcale_t])\mbbe_{P|v_{i_{\le t}},v_{j_{\le t}},a_{<t},\mcale_t,\neg\mcale_t'}[\DKL(Bern(1/2+\rho)\|Bern(1/2+\eta))]
    \end{aligned}$$
    for some $\eta\in[-\rho,\rho]$. This is because $a_t$ follows a mixed distribution of $Bern(1/2+\rho)$ and $Bern(1/2-\rho)$ conditioned on $v_{i_{\le t}},v_{j_{\le t}},a_{<t}$.
    $$\begin{aligned}
        &\DKL(Bern(1/2\pm\rho)\|Bern(1/2+\eta))\\
        =&(1/2\pm\rho)\log(\frac{1/2\pm \rho}{1/2+\eta})+(1/2\mp \rho)\log (\frac{1/2\mp \rho}{1/2-\eta})\\
        <&(1/2\pm \rho)\frac{\pm \rho-\eta}{(1/2+\eta)\ln 2}+(1/2\mp \rho)\frac{\mp\rho+\eta}{(1/2-\eta)\ln 2}\\
        =&\frac1{\ln 2}((1+\frac{-\eta\pm \rho}{1/2+\eta})(-\eta\pm \rho)+(1+\frac{\eta\mp \rho}{1/2-\eta})(\eta\mp \rho))\\
        = &\frac{1}{\ln 2}\frac{(\eta\mp\rho)^2}{1/4-\eta^2}\\
        < &25\rho^2
    \end{aligned}$$
    since $\rho=100\varepsilon<0.1$.
    Therefore,
    $$I(P;\sigma_q)= \sum_{t=1}^q\Pr[\mcale_t]I(P;a_t|v_{i_{\le t}},v_{j_{\le t}},a_{<t},\mcale_t)\le 25q\rho^2=2.5\cdot 10^{-3}n$$
\end{proof}

Now, we are ready to prove the main result of this section.

\begin{proof}[Proof to Theorem~\ref{thm:strong_tight}]
    We will prove an $\Omega(n/\varepsilon^2)$ bound for the correlation clustering partition given the distribution $\mu$. Let $(P,G)\leftarrow \mu$.
    Let $\mcala\subseteq \{0,1\}^n$ be the set of partitions with correlation clustering cost at most $\costg(P)-\varepsilon n^2$.
    For every $P\in\{-1,0,1\}^n$, let     
    $$\mcala_P:=\{P'\in \{-1,0,1\}^n:|\{i:P_i=P'_i\in\{0,1\}\}|>0.9n\textnormal{ or }|\{i:P_i=1-P'_i\}|>0.9n\}$$
    We claim that the problem of approximating the correlation clustering can be reduced from the problem of outputting an element in $\mcala_P$. Formally, given an algorithm $\Pi$ for the correlation clustering and the input $(P,G)$. We simulate $\Pi$ on $(P,G)$ and obtain a clustering $\mcalc$. Then we output the two largest clusters of $\mcalc$. This output can be written as a vector from $\{-1,0,1\}^n$, where elements from the two largest clusters are respectively marked as $0,1$, other elements are marked as $-1$. By Lemma~\ref{lem:strong_hamming_dis_cc}, with high probability every clustering of cost smaller than $\costg(P)+\varepsilon n^2$ has a large intersection with $P$, which is captured by our definition to $\mcala_P$.

    We lower bound the correlation clustering by lower bounding the problem of outputting an element from $\mcala_P$. By Yao's minimax principle, it suffices to prove a $10^{-11}n/\varepsilon^2$ average query lower bound given $\mu$ for deterministic algorithms. Fix a deterministic algorithm $\Pi$ of average query complexity $\le 10^{-11}n/\varepsilon^2$. By Markov's inequality, with probability $\ge 0.999$ the number of queries used by $\Pi$ is at most $q:=10^{-8}n/\varepsilon^2$. Thus, we can instead prove that every deterministic algorithm $\Pi'$ of worst-case query complexity $\le10^{-8}n/\varepsilon^2$ has error probability $\ge 0.991$.
    
    Let $(P,G)\leftarrow \mu$. To apply Fano's inequality, observe that for every $P',P''\in\{-1,0,1\}^n$,
    $$P'\in \mcala_{P''}\Longleftrightarrow P''\in \mcala_{P'}$$
    Besides, for every $P'\in\{-1,0,1\}^n$, let $k:=|\{i: P'_i=-1\}|$ where $k<0.1n$. We have
    $$|\mcala_{P'}\cap \{0,1\}^n|\le \sum_{i=0}^{0.1n-k}2\cdot \binom{n-k}{i}\cdot 2^k\le2^{k+1+\log n+(0.1n-k)\log\frac{(n-k)e}{0.1n-k}}\le  2^{0.477n}$$
    where the exponent takes its maximum at $k=0$, by calculating its partial derivative.
    
    So we get
    $$p_e\cdot n+(1-p_e)\cdot 0.477n+H(p_e)\ge n-I(P;\sigma_q)\ge 0.9975n$$
    where $p_e$ is the probability the algorithm outputs a wrong answer.
    Solving this inequality, we get
    $$p_e\ge 0.995>0.991$$

    Therefore, every algorithm for approximating the correlation clustering with high probability requires $\Omega(n/\varepsilon^{2})$ queries.
\end{proof}

\section{Lower bound in the general graph model}
\label{sec:general_lb}

In this section, we prove a lower bound for the correlation clustering partition problem in the general graph model, which enables both adjacency-list query access and adjacency-matrix query access to the input graph. The study of property testing on the general graph model was initiated by \cite{kaufman2004tight}. As both types of query access are allowed, proving lower bounds in the general graph model becomes hard.

\paragraph{The general graph model.} Fix an input graph $G=(V,E)$, an algorithm in the general graph model is given unit-cost query access to the following information:

\begin{itemize}
    \item Degree queries: Given a vertex $v$, return its degree $\deg(v)$.
    \item Neighbor queries: Given a vertex $v$ and an index $i$, return the $i$-th neighbor of $v$ if $i\le \deg(v)$, and return $\perp$ otherwise.
    \item Pair queries: Given two vertices $u$ and $v$, return whether they are connected by an edge in $G$.
\end{itemize}

It is important to note that we assume the neighbors of each vertex to be presented in a uniformly random order, independent of other vertices. This is a standard assumption in query lower bound analyses (e.g., see \cite{behnezhad2024approximating}), as it simplifies the analysis and removes any bias due to the order in which queries are made.

We give the following lower bound, which is the main result of this section.

\begin{theorem}[Main theorem]
    \label{thm:loose_bound_main}
    Fix a parameter $\varepsilon\in\left(\omega\left(\frac{\log n}{n}\right),10^{-6}\right)$, any (randomized) algorithm that approximates the correlation clustering partition to within additive error $\varepsilon n^2$ with probability $\ge 1/3$ costs $\Omega(n/\varepsilon)$ worst-case query complexity.
\end{theorem}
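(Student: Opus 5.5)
We follow the strategy outlined in Section~\ref{sec:pogeneral}. By Yao's minimax principle it suffices to give a hard distribution and bound deterministic algorithms. Set $k=0.01/\varepsilon$. Sample a uniformly random clustering $\mcalc=(C_1,\dots,C_k)$ of $V$ into clusters of size exactly $n/k=100\varepsilon n$, and make each $C_\alpha$ a clique. For every pair $\alpha<\beta$, let $G[C_\alpha,C_\beta]$ be an independent uniformly random $(\varepsilon n)$-regular bipartite graph. Neighbor lists are presented in uniformly random order. Every vertex then has degree exactly $100\varepsilon n-1+(k-1)\varepsilon n$, the same for all vertices. So degree queries carry no information and can be ignored.

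\textbf{Step 1: the optimum is near $\mcalc$.} First we show that with high probability every clustering $\mcalc'$ with $\costg(\mcalc')\le\costg(\mcalc)+\varepsilon n^2$ is close to $\mcalc$. Closeness is measured by the number $d(\mcalc,\mcalc')$ of pairs that are together in one clustering and separated in the other.
\begin{itemize}
\item Merging or partially merging parts of two different underlying clusters places pairs with edge density $\varepsilon n/(100\varepsilon n)=0.01$ together. This incurs about $0.98$ cost per such pair.
\item Splitting an underlying clique cuts edges at cost $1$ per pair.
\end{itemize}
Hence $\mbbe[\costg(\mcalc')-\costg(\mcalc)]\ge c\cdot d(\mcalc,\mcalc')$ for a constant $c>0$. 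We need concentration simultaneously over all $\le n^n$ clusterings. Regular bipartite graphs are not independent edges, so we cannot apply Hoeffding directly. Instead we use a switching/martingale argument or a comparison with $\mcalg$-type random bipartite graphs via edge-count concentration on $G[S,T]$ for all $S\subseteq C_\alpha,T\subseteq C_\beta$. The conclusion is that $d(\mcalc,\mcalc')\ge\Omega(\varepsilon n^2)$ forces cost excess $>\varepsilon n^2$ (Lemma~\ref{lem:loose_bound_case_i}). The hypothesis $\varepsilon=\omega(\log n/n)$ makes the deviation $\sqrt{n^2\cdot n\log n}$-type terms negligible relative to $\varepsilon n^2$ only after restricting to the relevant pairs. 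We expect this to need care.

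\textbf{Step 2: few vertices are revealed.} We give the algorithm extra power: every pair or neighbor query also reveals whether the two endpoints lie in the same cluster. Call a vertex \emph{revealed} once a query exhibits an intra-cluster edge incident to it. Fix $t=Cn/\varepsilon$ queries. Consider the next query, conditioned on any history in which fewer than $0.001n$ vertices are revealed. We show it reveals a new vertex with probability $O(\varepsilon)$ (Lemma~\ref{lem:loose_bound_core_prob}).
\begin{itemize}
\item For a neighbor query on $v$, the returned neighbor is a uniformly random unseen neighbor. Only a $\approx 100\varepsilon n/(k\varepsilon n)=O(\varepsilon)$ fraction of $v$'s neighbors are in its cluster, even after conditioning on the $O(t/n)$ previously seen neighbors. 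This holds by symmetry of the regular bipartite pieces, and because $v$'s cluster is nearly uniform over the unrevealed vertices.
\item For a pair query $(u,v)$ with $u$ unrevealed, the posterior probability that $u,v$ share a cluster is $O(1/k)=O(\varepsilon)$. This again uses that the history constrains only few vertices.
\end{itemize}
Lemma~\ref{prop:adaptive} then shows that at most $O(Ct)\cdot\varepsilon\le 0.001n$ vertices are revealed with high probability.

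\textbf{Step 3: conclude.} Condition on a history revealing $\le0.001n$ vertices. The unrevealed vertices' cluster memberships are nearly uniform subject to consistency. So for any fixed output $\mcalc'$, with high probability $d(\mcalc,\mcalc')=\Omega(\varepsilon n^2)$ (Lemma~\ref{lem:loose_bound_conditionalbd}): $\mcalc'$ must correctly co-cluster $\Omega(n)$ unrevealed vertices' $\Theta(\varepsilon n)$ partners, each guess succeeding with probability $O(\varepsilon)$. Combined with Step 1, the output has additive error $>\varepsilon n^2$ with probability $>2/3$. The main obstacle is Step 2: bounding posterior dependencies created by adaptive neighbor queries through the regular bipartite structure. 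We will handle it by the independence across cluster pairs and a configuration-model exposure argument.
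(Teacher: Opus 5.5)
Your plan has the same skeleton as the paper: the same hard distribution, the same notion of revealed vertices, a lemma that low-cost clusterings are close to $\mathcal{C}$, and a conditional bound for a fixed output. However, Step 2 as you state it is false in the relaxed model you define, and Step 3 rests on the same claim.

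\textbf{The counterexample.} In your model, a pair or neighbor query additionally tells only whether its endpoints share a cluster. Since $t=Cn/\varepsilon\gg n$ once $\varepsilon$ is small, an adaptive algorithm can afford to walk through essentially the whole neighbor list of a single vertex $v$, or to pair-query one vertex against almost all others. Take a history in which all but one neighbor of $v$ has been seen and exactly $100\varepsilon n-2$ cluster-mates have been found. The next neighbor query then reveals a new vertex with probability $1$. Yet at most $100\varepsilon n+1<0.001n$ vertices are revealed, so this history is good. Similarly, if an unrevealed $u$ has been paired with $m\approx n-100\varepsilon n$ vertices without a hit, the next pair query succeeds with probability $(100\varepsilon n-1)/(n-1-m)$, which is far from $O(\varepsilon)$.

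\textbf{Where your justification breaks.} Your phrase ``conditioning on the $O(t/n)$ previously seen neighbors'' uses an \emph{average} load per vertex that an adversary need not respect. The claim that unrevealed memberships are ``nearly uniform'' fails for heavily probed vertices and clusters. Lemma~\ref{prop:adaptive} needs the conditional bound for every prefix below the threshold, so your Chernoff step does not go through. Step 3 needs the same pairwise posterior bound, so it is unsupported as well.

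\textbf{The missing idea.} The paper hands out extra information precisely to block this. Once a vertex is involved in more than $0.001k$ queries, its label is revealed. Once more than $0.005n$ queries touch a cluster, the whole cluster is revealed. A simple count shows these free reveals total at most $2t/(0.001k)+2t(n/k)/(0.005n)=2.4\cdot 10^{-4}n$ vertices for $t=10^{-9}n/\varepsilon$, so they are harmless. In return, every unrevealed $v$ and every unrevealed cluster has been touched by few queries.

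\textbf{The switching argument.} Those load bounds are what make the posterior bound provable by switching. Take $(G,\mathcal{C})$ consistent with $\sigma$ with $u\sim_{\mathcal{C}}v$ in $C_\alpha$. Swap $v$ with a vertex $w\in C_\beta$, together with their edges to $C_\alpha$ and $C_\beta$; this keeps every bipartite piece $(\varepsilon n)$-regular. Require that no query in $\sigma$ joins $w$ to $C_\alpha$ or $v$ to $C_\beta$, that $w$ is unrevealed, and that $(u,w)\in E$. Then the swapped pair is still consistent with $\sigma$. The thresholds leave at least $0.003n$ valid $w$, and each image has at most $n/k$ preimages. This gives $\Pr[u\sim_{\mathcal{C}}v\mid \sigma,(u,v)\in E]\le 334/k$. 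Neighbor queries then reduce to this bound by averaging over which vertex is returned.

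\textbf{Step 1 is also underspecified.} Edge-density concentration over all products $S\times T$ cannot hold for small pieces: one vertex together with its $\varepsilon n$ neighbors in $C_\beta$ has density $1$. You need something like the paper's split. For large pieces, regularity caps the density at $1/5$ deterministically. For small pieces, use a switching-based conditional Chernoff bound. This should give failure probability $\exp(-\Omega(\varepsilon n^2))$ per clustering, which is what survives the $n^n$ union bound when $\varepsilon=\omega(\log n/n)$.
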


This gives a weaker lower bound compared to the $\Omega(n/\varepsilon^2)$ lower bound given in Theorem~\ref{thm:strong_tight} for the adjacency matrix model, yet it is proven in a stronger computational model. The key challenge is that, while pair queries only reveal local information about the connectivity of given pairs of vertices, degree queries and neighbor queries reveal global information of each vertex. And it becomes hard to apply similar proof ideas as in Theorem~\ref{thm:strong_tight}. In addition, we are forced to use regular graphs instead of Erd\H{o}s-R\'enyi graphs as the hard distribution unless one can show that the degree queries will not reveal too much information about the underlying partition.

We construct a well-structured input distribution on regular graphs that is hard for the correlation clustering problem.

\paragraph{Input distribution.} Let $\mu$ be the following distribution on the input graph $G$ and an \emph{underlying clustering} $\mcalc$: 
\begin{enumerate}
    \item Let $k:= 0.01/\varepsilon$ be the number of clusters. Let $\mcalc=(C_1,\dots,C_k)$ be a uniformly random partition of $V$ such that each cluster $C_\alpha$ has size exactly $100\varepsilon n$. (We ignore rounding issues.)
    \item Connect each pair of vertices from the same cluster. So the induced subgraph of each cluster is a clique of size $100\varepsilon n$.
    \item For every pair of clusters $C_\alpha,C_\beta$ in $\mcalc$, their induced bipartite graph $G[C_\alpha,C_\beta]$ is an independent and uniformly random $(\varepsilon n)$-regular bipartite graph. That is, each vertex in $C_\alpha$ is connected to exactly $0.01$ proportion of vertices in $C_\beta$, and vice versa.
\end{enumerate}

Specifically, we use $\mu_\mcalc$ to denote the distribution of $G$ when $\mcalc$ is the underlying clustering.

Intuitively, the optimal clustering is dominated by the underlying clustering $\mcalc$ with high probability, as each cluster in $\mcalc$ is a clique, and the edges between different clusters are sparse. Nevertheless, the algorithm is expected to pay $\Omega(1/\varepsilon)$ queries to find the cluster each vertex belongs to. Because the neighbors of a vertex is dominated by inter-cluster edges. Only $\Theta(\varepsilon n)$ out of its $\Theta(n)$ neighbors are belonging to the same cluster. In addition, the randomness of our input graph construction is ``local'' enough, so queries from different clusters do not have statistical dependencies. This largely simplifies our analysis.

\paragraph{Relaxation to the query model}

We assume a relaxed model of computation where the algorithm is granted extra information about the underlying clustering $\mcalc$ with each query. Even in this more powerful model, we prove the same lower bound, which simplifies our analysis while making the result more robust. Specifically, we assume the following information is given with each query:

\label{pg:loose_bound_relaxation}
\begin{itemize}
    \item Degree queries (unchanged): Given a vertex $v$, return its degree $\deg(v)$.
    \item Neighbor queries: Given a vertex $v$ and an index $i$, return the $i$-th neighbor $u$ of $v$ if $i\le \deg(v)$, and return $\perp$ otherwise. \emph{In addition, if $u\simc v$, it returns the label $\alpha$ of the cluster $C_\alpha\ni u,v$.}
    \item Pair queries: Given two vertices $u$ and $v$, return whether they are connected by an edge in $G$. \emph{In addition, if $u\simc v$, it returns the label $\alpha$ of the cluster $C_\alpha\ni u,v$.}
    \item \emph{By the end of each neighbor/pair query, if more than $0.001k$ queries have involved $u$ (including neighbor queries whose $i$-th neighbors are $u$), also return the label $\alpha$ of the cluster $C_\alpha\ni u$. Check the same for $v$ as well;}
    \item \emph{By the end of each neighbor/pair query, if there are more than $0.005n$ queries involving vertices in a cluster $C_\alpha$, also return the whole set of vertices in $C_\alpha$ and its label $\alpha$.}
\end{itemize}

We also define some notions that we will use in the proof.

\begin{definition}
    A \emph{query history} $\sigma=((Q_1,A_1),\dots,(Q_t,A_t))$ is a list of queries and answers from the input $(G,\mathcal{C})$. It includes the direct return values of degree, neighbor, and pair queries, as well as any extra information provided by the relaxed model.

    A vertex $u$ is \emph{revealed} (resp., \emph{unrevealed}) given a query history $\sigma$ if it is (resp. is not) explicit from the query history which cluster $C_\alpha$ $u$ belongs to.
    
    A \emph{direct query answer reveals a vertex $u$ given a query history $\sigma$} if $u$ is unrevealed given $\sigma$, but the query answer directly provides the cluster label of $u$. This occurs if a neighbor or pair query returns an edge $(u,v)$ where both vertices are in the same cluster ($u \sim_\mathcal{C} v$), and the query answer directly provides their cluster label. In such a case, both $u$ and $v$ are revealed by the query answer.
\end{definition}

With the above relaxation, it becomes possible to show that, given a ``good'' query history of bounded length, the probability of $u\simc v$ for every pair of vertices $u,v$ where $v$ is unrevealed is always bounded by $O(1/k)=O(\varepsilon)$.

\paragraph{Main lemmas and the proof to the main result.}

In high level, our proof has three steps. First, we show that for every fixed algorithm, it cannot reveal more than $0.001n$ vertices with high probability. Then, we show that with high probability over the input distribution, every clustering with low cost must be close to the underlying clustering $\mcalc$. We call it the \emph{closeness condition}. Lastly, we show that when the closeness condition holds, for every query history of bounded length that reveals no more than $0.001n$ vertices, every possible output clustering has a high additive error with high probability. We formalize the above proof steps in the following three lemmas.

\begin{definition}
    A query history $\sigma$ is \emph{good} if no more than $0.001n$ vertices are revealed, and is \emph{bad} otherwise.
\end{definition}

We use $t:= 10^{-9}\cdot n/\varepsilon$ to denote the desired worst-case query lower bound.

\begin{lemma}
    \label{lem:loose_bound_prune}
    Fix a parameter $\varepsilon\in\left(\omega\left(\frac{\log n}n\right),10^{-6}\right)$.
    For every deterministic algorithm with worst-case query complexity $\le t$, its query history is good with probability $\ge 1-o(1)$ given input from distribution $\mu$.
\end{lemma}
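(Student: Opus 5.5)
We bound the number of revealed vertices through the sources that can reveal them, and control the main source with the conditional Chernoff bound of Lemma~\ref{prop:adaptive}. A vertex becomes revealed in one of three ways. It can be revealed by a direct query answer, i.e.\ a neighbor or pair query returns an intra-cluster edge. It can be revealed by the threshold rule after more than $0.001k$ queries involve it. Or it can be revealed by the cluster rule after more than $0.005n$ queries involve its cluster.

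The two rule-based sources are handled by counting. Each query involves at most two vertices. With $t=10^{-9}n/\varepsilon$ queries, the number of vertices involved in more than $0.001k=10^{-5}/\varepsilon$ queries is at most $2t/(0.001k)=2\cdot 10^{-4}n$, deterministically. Similarly, at most $2t/(0.005n)=4\cdot 10^{-7}/\varepsilon=4\cdot 10^{-5}k$ clusters can cross the cluster threshold. Each cluster has $100\varepsilon n$ vertices, so these clusters contain at most $4\cdot 10^{-3}\varepsilon n\cdot k\cdot 10^{-2}\ldots$ vertices in total. More cleanly, this is $4\cdot 10^{-5}k\cdot 100\varepsilon n=4\cdot 10^{-5}n$ vertices. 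So together these sources contribute at most about $2.4\cdot 10^{-4}n$ revealed vertices.

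The main step is to show that, as long as the current history is good and these threshold conditions have not yet triggered for the relevant vertex or cluster, each query reveals a new vertex by a direct answer with probability at most $O(1/k)=O(\varepsilon)$, conditioned on the entire history. This is the core probability bound, later stated as Lemma~\ref{lem:loose_bound_core_prob}.

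For a pair query $(u,v)$ with $v$ unrevealed, the posterior probability that $v\simc u$ should be at most about $2/k$. The history restricted to $v$ consists of at most $0.001k$ queries, all answered ``not same cluster''. Each such answer excludes at most one cluster per query, so at least $0.999k$ clusters remain candidates for $v$. By symmetry of the uniform balanced partition and the independent regular bipartite pieces, these candidates are nearly equally likely. Two corrections are needed: the unequal remaining capacities of clusters (at most $0.005n$ vertices used per unrevealed cluster, which is negligible against $100\varepsilon n$), and the conditioning on observed inter-cluster edges.

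For a neighbor query $(v,i)$, the $i$-th neighbor is uniform among the not-yet-exposed neighbors of $v$. Of the $\approx 100\varepsilon n+(k-1)\varepsilon n$ neighbors, only $100\varepsilon n$ lie in the same cluster, while about $(k-1)\varepsilon n\approx 0.01n$ lie outside. This ratio gives probability $O(100\varepsilon n/0.01n)=O(\varepsilon k\cdot\ldots)$, which needs care: the cluster of size $100\varepsilon n$ against $0.01n$ cross neighbors gives $\approx 10^4\varepsilon$, still $O(\varepsilon)$. Exposed neighbors number at most $0.001k\ll$ degree, so they do not disturb this.

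Degree queries reveal nothing, since the graph is $(100\varepsilon n-1+(k-1)\varepsilon n)$-regular.

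The hardest part is justifying the near-uniform posterior despite conditioning on observed non-intra-cluster edges. These observations bias the regular bipartite graphs, so the likelihood ratio between assigning $v$ to two candidate clusters must be bounded by a constant. I would try a switching or coupling argument: swap $v$ with a fresh unrevealed vertex of another candidate cluster, and show the probability of the observed history changes by at most a $(1\pm o(1))$ factor, using that each observed vertex has only $\le 0.001k$ observed edges, out of degree $\varepsilon n$ per bipartite piece.

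Given the per-query bound $p=C\varepsilon$, let $X_j$ indicate that query $j$ newly reveals vertices; each such query reveals at most $2$. Lemma~\ref{prop:adaptive} applies as long as the revealed count stays below $0.001n$. It gives $\sum X_j\le 2pt=2C\cdot 10^{-9}n$ with probability $1-\exp(-\Omega(\varepsilon t))=1-\exp(-\Omega(n))$. This contributes at most $4C\cdot 10^{-9}n\ll 0.0007n$ revealed vertices. Adding the threshold-based contributions from the counting step, the total stays below $0.001n$, so the history is good with probability $1-o(1)$.
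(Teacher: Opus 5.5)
Your proposal is correct and follows the paper's proof. It uses the same deterministic count of at most $2.4\cdot 10^{-4}n$ vertices revealed by the two threshold rules, the per-query $O(1/k)$ direct-reveal bound from Lemma~\ref{lem:loose_bound_core_prob} and Lemma~\ref{lem:loose_bound_core_prob_neighbor}, and the conditional Chernoff bound of Lemma~\ref{prop:adaptive}. Your use of indicator bits with a factor of $2$ is slightly cleaner than the paper's direct application to $X_i\in\{0,1,2\}$.

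One caution about your heuristic for the core bound. The cluster threshold $0.005n$ is not negligible against the cluster size $100\varepsilon n$; since $\varepsilon<10^{-6}$, it is more than $50$ times larger. Also, a $(1\pm o(1))$ near-uniform posterior is neither needed nor what the paper proves. The paper swaps $v$ with an eligible unrevealed neighbor $w$ of $u$ in another cluster, so each $(G,\mcalc)$ with $u\simc v$ gets at least $0.003n$ partners and each partner has at most $n/k$ preimages. This only gives the constant-factor bound $334/k$, which suffices.
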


\begin{definition}
    For every clustering $\mcalc'=(C_1,\dots, C_{k'})$, we use $\mcalp(\mcalc'):=\bigcup_{i=1}^{k'} \binom {C_i}{2}$ to denote the set of unordered pairs of vertices $(u,v)$ such that $u\simcp v$.

    For every two clusterings $\mcalc$ and $\mcalc'$, their symmetric difference is defined as the following subset of pairs of vertices
    $$\Delta(\mcalp(\mcalc),\mcalp(\mcalc')):= \{(u,v): u\simc v,u\nsimcp v\}\cup \{(u,v):u\nsimc v,u\simcp v\}$$
    For simplicity, we also use $\Delta(\mcalc,\mcalc')$ to denote the symmetric difference.
    
    We say \emph{$\mcalc'$ is close to $\mcalc$} if $|\Delta(\mcalc,\mcalc')|\le 10\varepsilon n^2$. We use $\mcale_{\textnormal{close}}$ to denote the event that ``for every clustering $\mcalc'$, either $\costg(\mcalc')>\costg(\mcalc)+\varepsilon n^2$ or $\mcalc'$ is close to $\mcalc$''.
\end{definition}

\begin{lemma}
    \label{lem:loose_bound_case_i}
    $$\Pr_{(G,\mcalc)\leftarrow \mu}[\mcale_{\textnormal{close}}]\ge 1-o(1).$$
\end{lemma}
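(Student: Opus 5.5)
We prove Lemma~\ref{lem:loose_bound_case_i} by the route used in Lemma~\ref{lem:strong_hamming_dis_cc}: first compute the cost difference exactly, then concentrate and take a union bound. Fix the underlying clustering $\mcalc$ and condition on it. For any clustering $\mcalc'$, only pairs in $\Delta(\mcalc,\mcalc')$ contribute to $\costg(\mcalc')-\costg(\mcalc)$.

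\emph{Pairs with $u\simc v$, $u\nsimcp v$.} These are clique edges now cut by $\mcalc'$. Each costs $+1$ deterministically.

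\emph{Pairs with $u\nsimc v$, $u\simcp v$.} Each contributes $+1$ if $(u,v)\notin E$ and $-1$ if $(u,v)\in E$. Since $G[C_\alpha,C_\beta]$ is $(\varepsilon n)$-regular on sides of size $100\varepsilon n$, each inter-cluster pair is an edge with marginal probability exactly $0.01$. The expected contribution is therefore $0.98$ per pair.

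Hence $\mbbe[\costg(\mcalc')-\costg(\mcalc)]\ge 0.98|\Delta(\mcalc,\mcalc')|$. If $\mcalc'$ is not close, this is at least $9.8\varepsilon n^2$. It then suffices to show that, simultaneously for all such $\mcalc'$, the number of $G$-edges among the inter-cluster pairs merged by $\mcalc'$ does not exceed its expectation by more than about $4\varepsilon n^2$.

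\emph{Concentration within one pair of clusters.} The bipartite graphs $G[C_\alpha,C_\beta]$ are independent across the $\binom k2$ cluster pairs. For each fixed $\mcalc'$, let $F_{\alpha\beta}$ be the set of merged pairs inside $C_\alpha\times C_\beta$, and let $X_{\alpha\beta}$ be the number of edges of $G[C_\alpha,C_\beta]$ in $F_{\alpha\beta}$. A uniformly random $d$-regular bipartite graph is not a product measure. I would handle this in one of two ways:
\begin{itemize}
\item a switching/martingale argument (exposing the graph as a random permutation model or via McDiarmid on a uniform $d$-regular bipartite graph generated by a union of $d$ random perfect matchings, conditioned on simplicity, which has probability $e^{-O(d^2)}$ — too small here);
\item more robustly, the crude bound $X_{\alpha\beta}\le |E(G[C_\alpha,C_\beta])|=\varepsilon n\cdot 100\varepsilon n$.
\end{itemize}
The crude bound only helps when $F_{\alpha\beta}$ is a large fraction of $C_\alpha\times C_\beta$, so the main tool should be a tail bound of the form
$$\Pr[X_{\alpha\beta}\ge 0.01|F_{\alpha\beta}|+\lambda]\le \exp(-\Omega(\lambda^2/(\varepsilon n)^2))$$
or a Chernoff-type bound $\exp(-\Omega(\lambda))$ for $\lambda\ge |F_{\alpha\beta}|$. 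I would derive it from negative association of edge indicators in a uniform regular bipartite graph. Concretely, I would use the known fact that a random perfect matching has negatively associated indicators, decompose the regular graph approximately into matchings, and apply the conditional Chernoff bound Lemma~\ref{prop:adaptive} by revealing the edges of $F_{\alpha\beta}$ row by row. At each step, conditional on earlier revelations, the probability that the next pair is an edge is at most $\varepsilon n/(100\varepsilon n - \text{revealed})\le 0.02$ while few edges have been revealed. This last step is the main obstacle: making a conditional edge probability of $O(0.01)$ rigorous under the regularity constraint.

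\emph{Summing and the union bound.} Summing over independent cluster pairs gives
$$\Pr\bigl[\textstyle\sum X_{\alpha\beta}\ge 0.02|F|+4\varepsilon n^2\bigr]\le \exp(-\Omega(\varepsilon n^2)).$$
With $\varepsilon=\omega(\log n/n)$ this is $\exp(-\omega(n\log n))$, which beats the $n^n$ clusterings $\mcalc'$ in a union bound. On the complementary event, every non-close $\mcalc'$ satisfies
$$\costg(\mcalc')-\costg(\mcalc)\ge |\Delta(\mcalc,\mcalc')|-2\cdot 0.02|F|-8\varepsilon n^2>\varepsilon n^2,$$
which gives $\mcale_{\textnormal{close}}$ with probability $1-o(1)$.
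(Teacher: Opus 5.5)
\textbf{There is a genuine gap in the concentration step.} Your reduction is fine as far as it goes: $\costg(\mcalc')-\costg(\mcalc)=|\Delta(\mcalc,\mcalc')|-2X$, where $X$ counts $G$-edges among the merged inter-cluster pairs $F$; the graphs $G[C_\alpha,C_\beta]$ are independent; and an $\exp(-\Omega(\varepsilon n^2))$ tail survives the $n^n$ union bound. The missing step is the one that actually carries the lemma, and you flag it yourself as the main obstacle: an upper tail for the number of edges of a uniformly random $(\varepsilon n)$-regular bipartite graph inside $F_{\alpha\beta}$. Neither route you sketch delivers it.

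Negative association of edge indicators is a standard fact for a single uniform perfect matching. It is not something you can cite for a uniform $d$-regular bipartite graph. ``Approximately decomposing into matchings'' changes the distribution; as you note, conditioning a union of $d=\varepsilon n$ matchings on simplicity costs $e^{-\Theta(d^2)}$.

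The row-by-row scheme fails outright. Lemma~\ref{prop:adaptive} needs the conditional bound for \emph{every} history with few ones, but revealed non-edges push the conditional edge probability up. Suppose $\mcalc'$ has a cluster $\{u\}\cup C_\beta$ with $u\in C_\alpha$. After $99\varepsilon n$ revealed non-edges in $u$'s row, which is a history with zero ones, the remaining $\varepsilon n$ pairs are edges with probability $1$. So no uniform $O(0.01)$ conditional bound holds for a general $F_{\alpha\beta}$. Your global crude bound $X_{\alpha\beta}\le|E(G[C_\alpha,C_\beta])|$ does not repair this.

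\textbf{What the paper uses instead.} The missing idea is to exploit the shape of $F_{\alpha\beta}$: it is a vertex-disjoint union of complete bipartite blocks $A_i\times B_i$, one per cluster of $\mcalc'$.
\begin{itemize}
\item \emph{Large blocks.} If $\max(|A_i|,|B_i|)\ge 5\varepsilon n$, no probability is needed. By regularity the block contains at most $\min(|A_i|,|B_i|)\cdot\varepsilon n\le |A_i||B_i|/5$ edges.
\item \emph{Small blocks.} Reveal pairs block by block and bound the conditional edge probability by a switching argument. Send a consistent graph with $(u,v)\in E$ to one with $(u,v)\notin E$ by flipping an alternating $4$-cycle $(u,v),(u',v')\mapsto(u,v'),(u',v)$ with all four pairs unrevealed. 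Because blocks are vertex-disjoint, the only obstructions are: fewer than $0.1s_{\alpha,\beta}<50\varepsilon^2n^2$ revealed edges; fewer than $10\varepsilon^2n^2$ edges touching the current small block; and at most $2\varepsilon^2n^2$ edges adjacent to $u$ or $v$. That leaves more than $38\varepsilon^2n^2$ forward switches against at most $\varepsilon^2n^2$ backward ones, so the conditional probability is below $1/39$.
\end{itemize}
Lemma~\ref{prop:adaptive}, plus an MGF combination over the independent pairs $(\alpha,\beta)$, then gives $X\le\max(0.4|F|,4.5\varepsilon n^2)$ with probability $1-\exp(-\Omega(\varepsilon n^2))$. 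This suffices, since $|\Delta|-2X>\varepsilon n^2$ tolerates $X$ up to nearly $0.45|\Delta|$.

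\textbf{Your target is also needlessly tight.} Aiming for $X\le 0.02|F|+4\varepsilon n^2$, twice the mean, creates trouble of its own. At that level, blocks whose larger side lies between $5\varepsilon n$ and $50\varepsilon n$ escape the regularity bound, which only gives a fraction up to $0.2$. For the larger of these blocks the switching count also collapses, because the edges touching the block can exhaust all $100\varepsilon^2n^2$ edges of $G[C_\alpha,C_\beta]$.
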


\begin{lemma}
\label{lem:loose_bound_conditionalbd}
    Fix a parameter $\varepsilon\in\left(\omega\left(\frac{\log n}n\right),10^{-6}\right)$. Let $\mcale_\sigma$ denote the event that ``$(\mcalc,G)$ is consistent with the query history $\sigma$''.
    For every good query history $\sigma$ of length $\le t$, and for every clustering $\mcalc'$, we have
    $$\Pr_{(G,\mcalc)\leftarrow \mu}[\costg(\mcalc') \le \costg(\mcalc)+\varepsilon n^2~|~\mcale_\sigma\textnormal{ and }\mcale_{\textnormal{close}}]<1/4$$
\end{lemma}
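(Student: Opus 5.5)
The plan is to reduce Lemma~\ref{lem:loose_bound_conditionalbd} to a statement about the symmetric difference $\Delta(\mcalc,\mcalc')$. Conditioned on $\mcale_{\textnormal{close}}$, any $\mcalc'$ with $\costg(\mcalc')\le\costg(\mcalc)+\varepsilon n^2$ must satisfy $|\Delta(\mcalc,\mcalc')|\le 10\varepsilon n^2$. It therefore suffices to show
$$\Pr[|\Delta(\mcalc,\mcalc')|\le 10\varepsilon n^2 \mid \mcale_\sigma,\mcale_{\textnormal{close}}]<1/4 .$$
Since $\Pr[\mcale_{\textnormal{close}}]\ge 1-o(1)$ by Lemma~\ref{lem:loose_bound_case_i}, I will bound the probability under conditioning on $\mcale_\sigma$ alone, by a constant well below $1/4$. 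I then divide by $\Pr[\mcale_{\textnormal{close}}\mid\mcale_\sigma]$. The weak point here is that this conditional probability need not be close to $1$ for every $\sigma$. If needed, I will restrict to $\sigma$ for which it is at least $1/2$, noting that the exceptional $\sigma$ carry only $o(1)$ total mass, which is all the main theorem actually needs. Alternatively, I will argue directly that the estimates below survive the extra conditioning.

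Next I describe the posterior of $\mcalc$ given a good $\sigma$. At most $0.001n$ vertices are revealed. Each unrevealed vertex $u$ has been involved in at most $0.001k$ queries, since otherwise the relaxation would reveal it. Clusters involved in more than $0.005n$ queries are fully revealed, so at least $(1-o(1))$ of the clusters are not. I will reuse the core probability estimate (Lemma~\ref{lem:loose_bound_core_prob}): for every unrevealed $u$ and every vertex $v$, $\Pr[u\simc v\mid\mcale_\sigma]\le O(1/k)=O(\varepsilon)$, and the assignments of unrevealed vertices are nearly exchangeable among the many unrevealed clusters.

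Now fix $\mcalc'$ and lower bound $|\Delta|$. Consider the at least $0.999n$ unrevealed vertices $u$. Each such $u$ has $|C(u)|-1\approx 100\varepsilon n$ pairs $(u,v)$ with $u\simc v$, and its cluster $C'(u)$ in $\mcalc'$ is fixed. The quantity $|C(u)\triangle C'(u)|$ contributes to $|\Delta|$. By the posterior bound, the expected overlap $\mbbe|C(u)\cap C'(u)|$ is at most the revealed part, which is at most $\min(|C'(u)|, O(\varepsilon)\cdot|C'(u)|+\ldots)$, plus $O(\varepsilon)|C'(u)|$. This gives $\mbbe|C(u)\triangle C'(u)|\ge 100\varepsilon n+|C'(u)|-2\,\mbbe|C(u)\cap C'(u)|$. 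Splitting on $|C'(u)|\le 50\varepsilon n$ versus larger, each unrevealed $u$ contributes at least roughly $40\varepsilon n$ in expectation, so $\mbbe|\Delta|\ge \tfrac12\cdot 0.999n\cdot 40\varepsilon n\approx 20\varepsilon n^2$.

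The main obstacle is concentration: I need $|\Delta|>10\varepsilon n^2$ with probability at least $3/4$, not just in expectation. I will try the adaptive Chernoff bound (Lemma~\ref{prop:adaptive}) applied to the indicators $\mbbone[u\simc v]$ for $v\in C'(u)$, exposing unrevealed vertices one at a time. The conditional bound $O(\varepsilon)$ persists while few vertices have been exposed into any cluster. Since $\varepsilon=\omega(\log n/n)$, the means are large enough for a union bound over vertices. As a fallback, I will use Markov's inequality on the "overlap" $\sum_u|C(u)\cap C'(u)|$: its expectation is at most $O(\varepsilon)\sum_u|C'(u)|$ plus a revealed part of at most $0.001n\cdot 100\varepsilon n$. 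Markov then gives constant probability $<1/4$ that the overlap is large enough to force $|\Delta|\le 10\varepsilon n^2$, after which the constants need to be tuned.
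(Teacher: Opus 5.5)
Your Markov fallback is the argument that works, and it is essentially the paper's proof. The paper applies Markov's inequality to $X=110\varepsilon n^2-|\Delta(\mcalc,\mcalc')|$, which is just $2|\mcalp(\mcalc)\cap\mcalp(\mcalc')|$ plus a deterministic shift. It bounds $\mbbe[|\mcalp(\mcalc)\cap\mcalp(\mcalc')|\mid\mcale_\sigma]\le 0.05\varepsilon n^2+\frac{334}{k}|\mcalp(\mcalc')|$ in two parts:
\begin{itemize}
\item every pair with an unrevealed endpoint is handled by Lemma~\ref{lem:loose_bound_core_prob};
\item the remaining pairs contribute at most the $0.05\varepsilon n^2$ same-cluster pairs among revealed vertices.
\end{itemize}
For an unrevealed $u$ there is therefore no ``revealed part'' in $\mbbe|C(u)\cap C'(u)|$, since that lemma needs only one endpoint unrevealed.

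The constants need no tuning. Since $|\mcalp(\mcalc)|=50\varepsilon n^2-n/2$, the event $|\Delta|\le 10\varepsilon n^2$ forces the overlap up to about $20\varepsilon n^2+|\mcalp(\mcalc')|/2$. With $\frac{334}{k}=33400\varepsilon<0.034$, Markov gives probability at most $\max(0.0025,\,0.067)<0.07$, uniformly in $\mcalc'$. Applying Markov to the overlap directly also spares you the paper's restriction $|\mcalp(\mcalc')|\in[39.9,60]\varepsilon n^2$, which it needs only to make $X\ge 0$.

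Drop the adaptive-Chernoff plan. It would need bounds on $\Pr[u\simc v]$ conditioned on further exposed cluster memberships. Lemma~\ref{lem:loose_bound_core_prob} does not give these: it conditions only on query histories that obey the relaxation's invariants. No concentration is needed anyway.

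Your concern about conditioning on $\mcale_{\textnormal{close}}$ is justified. The paper's step $\mbbe[X\mid\mcale_\sigma,\mcale_{\textnormal{close}}]\le o(1)\cdot 110\varepsilon n^2+\mbbe[X\mid\mcale_\sigma]$ tacitly assumes $\Pr[\mcale_{\textnormal{close}}\mid\mcale_\sigma]=1-o(1)$ for every good $\sigma$, and this is never established. Your fix works: divide by this conditional probability and discard the $o(1)$-mass of histories where it falls below $1/2$. Equivalently, bound $\Pr[|\Delta|\le 10\varepsilon n^2\mid\mcale_\sigma]$ alone and pay $\Pr[\neg\mcale_{\textnormal{close}}]$ once in the proof of Theorem~\ref{thm:loose_bound_main}. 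The price is that the lemma holds for all but an $o(1)$-mass of good $\sigma$ rather than literally for every one, which is all the main theorem uses.
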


The query lower bound directly follows the above lemmas.

\begin{proof}[Proof of Theorem~\ref{thm:loose_bound_main}]
    By Yao's minimax principle, if there is a randomized algorithm that approximates correlation clustering partition with probability $\ge 1/3$ over the input distribution $\mu$, there is also a deterministic algorithm of the same query complexity that approximates it with probability $\ge 1/3$ given $\mu$. Therefore, we only need to lower bound deterministic algorithms.

    We are going to show a $0.01t$ average query lower bound to deterministic algorithms. Fix an algorithm $\Pi$ with $\le 0.01t$ average query complexity. We will show that $\Pi$ cannot approximate the correlation clustering with bounded additive error with $\ge 1/3$ probability. By Markov's inequality, the probability that $\Pi$ makes more than $t$ queries is less than $0.01$. Therefore, if we can prove that every deterministic algorithm $\Pi'$ with worst-case complexity $t$ cannot compute correctly with probability $\ge 1/3-0.01$, we also obtain a lower bound to algorithms with bounded average query complexity.

    Given Lemma~\ref{lem:loose_bound_prune}, Lemma~\ref{lem:loose_bound_case_i} and Lemma~\ref{lem:loose_bound_conditionalbd}, we bound the probability that $\Pi'$ approximates the correlation clustering. We use $\mcale_{\Pi'}$ to denote the event ``$\costg(\Pi'(G,\mcalc))\le \costg(\mcalc)+\varepsilon n^2$''. And we use $\sigma_{\Pi'}$ to denote the query history of $\Pi'$.
    $$\begin{aligned}
        &\Pr_{(G,\mcalc)\leftarrow\mu}[\mcale_{\Pi'}]\\
        \le&\Pr_{(G,\mcalc)\leftarrow\mu}[\mcale_{\Pi'}~|~\mcale_{\textnormal{close}}\wedge\sigma_{\Pi'}\textnormal{ is good}]+o(1)\\
        \le&o(1)+\sum_{\sigma}\Pr_{(G,\mcalc)\leftarrow\mu}[\sigma_{\Pi'}=\sigma~|~\mcale_{\textnormal{close}}\wedge\sigma\textnormal{ is good}]\times\\
        &\Pr_{(G,\mcalc)\leftarrow\mu}[\mcale_{\Pi'}~|~\mcale_{\textnormal{close}}\wedge (G,\mcalc)\textnormal{ is consistent with }\sigma\wedge \sigma\textnormal{ is good}]\\
        \le&o(1)+0.25\times \sum_{\sigma}\Pr_{(G,\mcalc)\leftarrow\mu}[\sigma_{\Pi'}=\sigma~|~\mcale_{\textnormal{close}}\wedge\sigma\textnormal{ is good}]\\
        \le&o(1)+0.25\\
        <&1/3-0.01
    \end{aligned}$$

    Lemma~\ref{lem:loose_bound_prune} and Lemma~\ref{lem:loose_bound_case_i} are used in the first inequality. And Lemma~\ref{lem:loose_bound_conditionalbd} is used in the third inequality.

    For the second inequality, the condition ``$(G,\mcalc)$ is consistent with $\sigma$'' is equivalent to the condition ``$\sigma_{\Pi'}=\sigma$'' because for every fixed deterministic algorithm $\Pi'$ in the query model (which is a decision tree), there does not exist two different query history $\sigma,\sigma'$ (which corresponds to two root-to-leaf execution paths of the decision tree) such that $\sigma'$ is a subsequence of $\sigma$. There always exist a query that $\sigma$ and $\sigma'$ disagree on (e.g., the query on the LCA of the two leaves in the decision tree).
\end{proof}

\subsection{Proof to Lemma~\ref{lem:loose_bound_prune}}

\paragraph{Outline of the proof.}

To show that every algorithm cannot reveal a large proportion of vertices with high probability, our idea is to show that ``at any point of the execution of the algorithm, if the query history $\sigma$ has length $\le t$ and is good, with $O(1/k)$ probability the next pair/neighbor query finds a pair of vertices $u,v$ such that $u\simc v$ and at least one of $u,v$ is unrevealed in $\sigma$''. Thus, the number of vertices revealed by direct query answers is a sum of $t=O(nk)$ random bits each with $O(1/k)$ probability to be $1$. And we apply a Chernoff-style bound to show their sum is bounded with high probability. The number of other revealed vertices is also bounded by our relaxation. Below, we provide a more detailed breakdown of this approach and address potential problems.

The probability bound on each query holds because we have ruled out the extreme cases by revealing information about the input $(G,\mcalc)$ before the probability of $u\simc v$ gets large (see the last two terms of our relaxation to the computational model at page~\pageref{pg:loose_bound_relaxation}). For example, if there is an algorithm that knows from the query history that $n-s+1$ vertices are not in the same cluster as a vertex $u$, without querying the remaining vertices one can know that they are in the same cluster. But with the extra information provided (more precisely, the fifth term of the relaxation), the whole cluster $C_\alpha\ni u$ will be revealed before making queries between $u$ and all these $n-s+1$ vertices.

To show the above probability bound, we do not need to deal with pair queries and neighbor queries separately. Observe that the probability that a neighbor query reveals a new vertex can be expressed as a weighted average of the probability that each fixed pair of vertices are belonging to the same cluster. The probability bound to neighbor queries reduces to the probability bound to pair queries.

Lastly, to bound the probability, we use a standard counting argument. Specifically, let $\mcals_{\sigma,u,v}$ be the set of pairs of graphs and clusterings $(G,\mcalc)$ such that (i) $(G,\mcalc)$ is consistent with the query history $\sigma$; and (ii) $u\simc v$. Let $\mcals_{\sigma,u,v}'$ be the set of pairs $(G,\mcalc)$ where $(u,v)\in E$ but $u\nsimc v$. We will show that $\frac{|\ssuv|}{|\ssuvp|}\le O(1/k)$ by constructing a many-to-many relation between $\ssuv$ and $\ssuvp$ such that (i) each element of $\ssuv$ is related to $\Omega(n)$ elements of $\ssuvp$; and (ii) each element of $\ssuvp$ is related to $O(n/k)$ elements from $\ssuv$.

Formally, we have the following key lemma, which will be used in proving both Lemma~\ref{lem:loose_bound_prune} and Lemma~\ref{lem:loose_bound_conditionalbd}.

\begin{lemma}
    \label{lem:loose_bound_core_prob}
    For every good query history $\sigma$ of length $\le t=10^{-9}n/\varepsilon$, and for every $u,v\in V$ such that $v$ is not revealed in $\sigma$,
    $$\Pr_{(G,\mcalc)\leftarrow\mu}[u\simc v ~|~(G,\mcalc)\textnormal{ is consistent with }\sigma]\le 334/k$$
    In addition, this is true even conditioned on knowing that $u$ is connected to $v$ in $G$:
    $$\Pr_{(G,\mcalc)\leftarrow\mu}[u\simc v ~|~(G,\mcalc)\textnormal{ is consistent with }\sigma \wedge (u,v)\in E]\le 334/k$$
\end{lemma}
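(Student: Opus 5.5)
We follow the counting strategy sketched in the outline. Since $\mu$ is uniform over pairs $(G,\mcalc)$ with $\mcalc$ a balanced clustering and $G$ built from cliques plus independent uniform $(\varepsilon n)$-regular bipartite graphs, the conditional probabilities are ratios of cardinalities of consistent pairs. We therefore define $\ssuv$ as the set of pairs $(G,\mcalc)$ consistent with $\sigma$ with $u\simc v$ (so $(u,v)\in E$ automatically), and $\ssuvp$ as the consistent pairs with $(u,v)\in E$ and $u\nsimc v$. Both claimed inequalities follow from $|\ssuv|/|\ssuvp|\le 333/k$. For the first claim, the event $u\simc v$ lies inside $\ssuv\cup\ssuvp\cup\{(u,v)\notin E\}$, and dropping the last set only increases the ratio. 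For the second claim, the ratio equals $|\ssuv|/(|\ssuv|+|\ssuvp|)$. Some care is needed with the uniform weighting, because the number of regular bipartite graphs does not depend on $\mcalc$. Each $(G,\mcalc)$ then has equal mass, and this is what lets us compare set sizes directly.

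The relation goes as follows. Take $(G,\mcalc)\in\ssuv$ with $u,v\in C_\alpha$. Pick a vertex $w$ in some other cluster $C_\beta$ such that $w$ is unrevealed, $w$ is untouched by $\sigma$ (it appears in no query or answer), and $(v,w)$ together with the edges of $w$ and $v$ are unconstrained by $\sigma$. Swap $v$ and $w$ between $C_\alpha$ and $C_\beta$. Rewiring then repairs $G$: $v$ and $w$ exchange their roles. Concretely, $v$ takes over $w$'s intra-cluster clique edges and $w$'s bipartite neighbourhoods in $G[C_\beta,\cdot]$, except that we must keep the edge $(u,v)$ present, now as an inter-cluster edge between $C_\beta$ and $C_\alpha$. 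That requires a small local swap inside the regular bipartite graph $G[C_\alpha,C_\beta]$ to keep degrees equal to $\varepsilon n$; a standard 2-switch $(v x),(y z)\to(v z),(y x)$ on edges not appearing in $\sigma$ should suffice. We need to check that the image is consistent with $\sigma$. This holds because $v$ is unrevealed: it was never returned with a cluster label and has at most $0.001k$ queries involving it, and $w$ was untouched. So every queried pair's adjacency and every returned label or full cluster is preserved. Full-cluster reveals only happen for clusters with $>0.005n$ queries, and we must choose $\alpha,\beta$-compatible $w$ outside revealed clusters. Since $\sigma$ has length $\le t=10^{-9}n/\varepsilon$ and is good, at most $O(t)+0.001n\le 0.01n$ vertices are touched or revealed. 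Hence $\Omega(n)$ choices of $w$ remain, say $\ge 0.9n$, each yielding $\ge 1$ (in fact a comparable number $D$ of) rewirings. This gives out-degree $\ge 0.9n\cdot D$.

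For the in-degree, fix $(G',\mcalc')\in\ssuvp$ and count preimages. A preimage is determined by choosing $w$ in the cluster of $u$ under $\mcalc'$ (at most $100\varepsilon n=n/k$ choices) and inverting the 2-switch (at most $D'$ choices). Double counting then gives $|\ssuv|\cdot 0.9nD\le|\ssuvp|\cdot (n/k)D'$. Provided the 2-switch multiplicities satisfy $D'/D\le 300$, this yields the bound $\le 334/k$ after absorbing constants.

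\textbf{Main obstacle.} The hard part is making the rewiring map well-defined and measure-balanced. We need the swap to keep every bipartite block exactly $(\varepsilon n)$-regular, to keep consistency with all neighbor-query answers, and to have comparable forward and backward multiplicities; the neighbor queries are delicate here because they fix the \emph{order} positions of neighbors. The first thing to try is to exploit the uniformly random neighbor ordering. We transport the ordering of $w$'s adjacency list to $v$, so any answer $N(x,i)$ with $x\ne v,w$ is unchanged, and answers concerning $v$ are limited to its $\le 0.001k$ queries. These queries are avoided by requiring the 2-switch edges to be among the $\ge \varepsilon n-0.001k$ unqueried neighbors, which keeps $D$ and $D'$ within a constant factor of each other.
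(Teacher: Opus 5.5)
Your skeleton matches the paper's: equal mass on pairs $(G,\mcalc)$, the ratio $|\ssuv|/|\ssuvp|$, and a relation whose in-degree is at most $n/k$, because a preimage is fixed by choosing $w$ in the new cluster of $u$. The gap is in the exchange itself, in three places.

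(1) The count ``at most $O(t)+0.001n\le 0.01n$ vertices are touched'' is false. We have $t=10^{-9}n/\varepsilon\ge n$ once $\varepsilon\le 10^{-9}$, and the lemma must hold down to $\varepsilon=\omega(\log n/n)$. For instance, $n/2$ pair queries along a perfect matching form a history that is typically good yet touches every vertex, so an untouched $w$ need not exist.

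(2) Handing $v$ the whole neighbourhood of $w$ changes the answer to every query involving $v$. Unrevealedness only caps the number of such queries at $0.001k$; it does not make them consistent. A pair query $(v,z)$ answered ``non-edge'' breaks if $(w,z)\in E$, and an answered neighbour query $N(v,i)=z$ breaks unless $(w,z)\in E$. So your claim that all answers are preserved is unsupported.

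(3) The 2-switch forcing $(u,v)\in E(G')$ and the bound $D'/D\le 300$ are only asserted. The switch must delete an unqueried edge at $u$, but $u$ may be revealed and lie in up to $0.005n\gg\varepsilon n$ queries. Moreover, your count $\varepsilon n-0.001k=\varepsilon n-10^{-5}/\varepsilon$ is negative whenever $\varepsilon^2 n<10^{-5}$, which the allowed range of $\varepsilon$ permits.

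The paper sidesteps all three problems with two choices.

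First, it adds the condition $(u,w)\in E(G)$, which costs only the $\le 0.99n$ non-neighbours of $u$. Since $v$ inherits the adjacency of $w$ to $C_\alpha$, this gives $(u,v)\in E(G')$ directly, with no switch and no multiplicities.

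Second, it alters only pairs in $\{v,w\}\times(C_\alpha\cup C_\beta)$, and forbids queries between $w$ and $C_\alpha$ and between $v$ and $C_\beta$. A query between $v$ and $C_\alpha$, or between $w$ and $C_\beta$, would already have revealed $v$ or $w$. The excluded $w$ are then bounded by three quantities, all independent of $t$:
the per-cluster cap $0.005n$;
the $\le 0.001k$ clusters met by the queries of $v$, i.e.\ $\le 0.001n$ vertices;
the $\le 0.001n$ revealed vertices.
This leaves $\ge 0.003n$ images against $\le n/k$ preimages.

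One caveat in your favour concerns a third cluster $C_\gamma$. The paper leaves the edges from $v$ and $w$ to $C_\gamma$ unchanged. That shifts the number of neighbours in $C_\alpha'$ of each $z\in C_\gamma$ by $\mbbone_{(z,w)\in E}-\mbbone_{(z,v)\in E}$. So $G'[C_\alpha',C_\gamma]$ is generally not regular, and $(G',\mcalc')$ can fall outside the support of $\mu$. Your instinct to exchange the full neighbourhoods is right on that point. But then the constraints from the queries of $v$ and $w$ into third clusters must be controlled, and neither argument does this.
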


The above lemma directly implies a proportional probability bound for neighbor queries.

\begin{lemma}
    \label{lem:loose_bound_core_prob_neighbor}
    Let $\mcale_{\sigma}$ denote the event that ``$(G,\mcalc)$ is consistent to the query history $\sigma$''. Let $\mcale_{u,i,\sigma}$ denote the event that ``the direct query answer of the neighbor query $N(u,i)$ reveals any vertex (either $u$ or the $i$-th neighbor of $u$) given query history $\sigma$''.
    For every good query history $\sigma$ of length $\le t=10^{-9}n/\varepsilon$, for every $u\in V$ and $i\le \deg(u)$ such that it is yet unknown from $\sigma$ which vertex $v$ the $i$-th neighbor of $u$ is,
    $$\Pr_{(G,\mcalc)\leftarrow \mu}[\mcale_{u,i,\sigma}~|~\mcale_{\sigma}]\le 444/k$$
\end{lemma}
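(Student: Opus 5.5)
The plan is to reduce the neighbor-query bound to the pair-query bound of Lemma~\ref{lem:loose_bound_core_prob} by conditioning on the identity of the returned neighbor. Fix a good history $\sigma$ of length $\le t$, a vertex $u$, and an index $i\le\deg(u)$ such that $N(u,i)$ is not yet determined by $\sigma$. Write the event $\mcale_{u,i,\sigma}$ as a disjoint union over the possible identities $w$ of the $i$-th neighbor:
\[
\Pr[\mcale_{u,i,\sigma}\mid\mcale_\sigma]=\sum_{w}\Pr[N(u,i)=w\mid\mcale_\sigma]\cdot\Pr[\mcale_{u,i,\sigma}\mid \mcale_\sigma\wedge N(u,i)=w].
\]
A direct answer reveals a vertex only if $u\simc w$ and at least one of $u,w$ is unrevealed in $\sigma$. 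So the summands with both $u$ and $w$ revealed contribute zero. It then suffices to bound each remaining conditional probability by $O(1/k)$ and sum, using that the weights $\Pr[N(u,i)=w\mid\mcale_\sigma]$ sum to at most $1$.

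To bound the conditional probability for a fixed $w$, I will use the random-ordering assumption on adjacency lists. Conditioned on $\mcale_\sigma$, on $G$, and on $\mcalc$, the $i$-th position of $u$'s list is uniform over the neighbors of $u$ not already placed by $\sigma$. This set depends on $\sigma$ only through which positions and neighbors have been exposed, and it is symmetric among the unexposed neighbors. Hence conditioning on $N(u,i)=w$ is, up to this symmetric factor, the same as conditioning on $(u,w)\in E$ together with $w$ not being among the already-listed neighbors of $u$. For $w$ whose adjacency to $u$ is already recorded in $\sigma$, the condition adds nothing beyond $\sigma$.

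With this reduction I apply the second part of Lemma~\ref{lem:loose_bound_core_prob}, with $v$ the unrevealed vertex among $u,w$ (the lemma is symmetric in its roles). This gives $\Pr[u\simc w\mid\mcale_\sigma\wedge(u,w)\in E]\le 334/k$.

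The main obstacle is that the event $N(u,i)=w$ is not literally the event $(u,w)\in E$. Its probability under $\mcale_\sigma$ also carries the factor $1/(\deg(u)-j)$, where $j$ counts the already exposed list positions of $u$. This factor is the same for all graphs, since every graph is regular of degree $100\varepsilon n-1+(k-1)\varepsilon n$. The excluded set of exposed neighbors, however, can differ between graphs in which $u\simc w$ and graphs in which it does not. The correction factor between the two conditionings is a ratio of the form $\frac{\deg(u)-j}{\deg(u)-j'}$. I would bound it using that $\sigma$ is good: by the relaxation, fewer than $0.001k$ queries involve $u$ while $u$ is unrevealed, and $0.001k$ is tiny compared to $\deg(u)=\Theta(n)$, so the ratio is at most about $4/3$. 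Combining this constant with $334/k$, and accounting for the case where $w$ is unrevealed but $u$ is revealed by symmetry, yields the stated $444/k$.
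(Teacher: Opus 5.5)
Your reduction is correct and in fact gives $334/k$, but the ``main obstacle'' paragraph should be dropped. It rests on a false claim, and the fix you propose there would not close. You already observed that the factor $1/(\deg(u)-j)$ is the same for all graphs, and that settles it. The exposed positions of $u$'s list, and the vertices occupying them, are determined by $\sigma$: the relaxation never exposes list positions, and every graph in the support is regular of a common degree $D$. So they cannot differ between consistent graphs with $u\simc w$ and without. Given any $(G,\mcalc)$ consistent with $\sigma$, the $D-j$ unexposed positions hold a uniform arrangement of the remaining neighbors. Moreover, the number of list orderings consistent with $\sigma$ is the same for every consistent $(G,\mcalc)$. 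Hence, for every $w$ not already listed,
\[
\Pr[N(u,i)=w\mid G,\mcalc,\mcale_\sigma]=\frac{\mbbone_{(u,w)\in E(G)}}{D-j},
\]
and averaging gives $\Pr[u\simc w\mid\mcale_\sigma\wedge N(u,i)=w]=\Pr[u\simc w\mid\mcale_\sigma\wedge(u,w)\in E]$ exactly. There is no ratio $\frac{\deg(u)-j}{\deg(u)-j'}$ to control. Had such a ratio existed, your treatment would have failed for two reasons. First, $\frac43\cdot 334>444$. Second, your $0.001k$ cap applies only to an unrevealed $u$; for a revealed $u$ whose cluster is not yet revealed, up to $0.005n$ of roughly $0.01n$ positions may be exposed, allowing a ratio of about $2$.

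With that fixed, your route differs from the paper's. The paper splits on whether $u$ is revealed. For revealed $u$ it does not invoke Lemma~\ref{lem:loose_bound_core_prob} at all; it counts at most $100\varepsilon n$ same-cluster neighbors among at least $0.005n$ unexposed list positions, giving $200/k$. For unrevealed $u$ it adds an analogous counting bound of $110/k$ (for revealing the neighbor) to exactly your decomposition with Lemma~\ref{lem:loose_bound_core_prob}, which gives $334/k$ (for revealing $u$), reaching $444/k$. You instead run the decomposition in every case, putting whichever endpoint is unrevealed in the role of $v$. That is legitimate, since Lemma~\ref{lem:loose_bound_core_prob} places no condition on $u$ and the events $u\simc w$ and $(u,w)\in E$ are symmetric. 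Your version is uniform, gives a better constant, and never needs the caps on exposed list positions directly, but it relies on the exchange argument of Lemma~\ref{lem:loose_bound_core_prob} throughout. The paper's counting bound is more elementary. Because it controls $\Pr[u\simc N(u,i)\mid\mcale_\sigma]$ directly, it would by itself give at most $200/k$ in both cases, which makes the paper's $334/k$ term redundant.
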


\begin{proof}
    Let $\alpha$ denote the label of the cluster $C_\alpha\ni u$.
    There are two different cases: $u$ is revealed or unrevealed given $\sigma$.
    
    When $u$ is revealed. We may assume that at most $0.005n$ queries in $\sigma$ have involved $u$. Otherwise by the fifth term of our relaxation in page~\pageref{pg:loose_bound_relaxation}, the whole $C_\alpha$ would have been revealed, and new queries will not reveal either $u$ or the neighbor of $u$. In addition, $u$ has $\ge 0.01n$ neighbors, in which $100\varepsilon n-1$ of them belong to $C_\alpha$. Since the order of neighbors of $u$ is uniformly random, the $i$-th neighbor of $u$ belong to $C_\alpha$ with probability
    $$\Pr_{(G,\mcalc)\leftarrow \mu}[\mcale_{u,i,\sigma}~|~\mcale_{\sigma}]\le \frac{100\varepsilon n}{0.005n}=2\cdot 10^4\varepsilon=200/k $$

    When $u$ is unrevealed, at most $0.001k=10^{-5}/\varepsilon$ queries in $\sigma$ have involved $u$, by the the fourth term of our relaxation in page~\pageref{pg:loose_bound_relaxation}. By a similar analysis, the probability the neighbor query reveals the $i$-th neighbor of $u$ is at most $1.1\cdot 10^4\varepsilon=110/k$. Next, we show that the probability $u$ is revealed by this neighbor query is also small.
    $$\begin{aligned}
        &\Pr_{(G,\mcalc)\leftarrow \mu}[\textnormal{the direct query anwer of the neighbor query }N(u,i)\textnormal{ reveals }u\textnormal{ given }\sigma~|~\mcale_\sigma]\\
        \le&\sum_{v\in V}\Pr_{(G,\mcalc)\leftarrow \mu}[\textnormal{the }i\textnormal{-th neighbor of }u\textnormal{ is }v~|~\mcale_\sigma]\times\Pr_{(G,\mcalc)\leftarrow \mu}[u\simc v~|~\mcale_\sigma\wedge(u,v)\in E]\\
        \le &334/k
    \end{aligned}$$
    The first inequality is obtained by the chain rule.
    Since the order of the adjacency list of $u$ is independent of the randomness of the graph, we are able to replace the condition of the second probability by ``$(u,v)\in E$''. Therefore,
    $$\Pr_{(G,\mcalc)\leftarrow \mu}[\mcale_{u,i,\sigma}~|~\mcale_{\sigma}]\le 110/k+334/k<444/k$$
\end{proof}

Next, we prove Lemma~\ref{lem:loose_bound_core_prob}, which is the key lemma for obtaining our lower bound.

\begin{proof}[Proof to Lemma~\ref{lem:loose_bound_core_prob}]

    We only need to prove the second inequality since it implies the first inequality. Because $u\simc v$ implies $(u,v)\in E$.

    By the fifth term of our relaxation (page~\pageref{pg:loose_bound_relaxation}), we can assume that there are no more than $0.005n$ queries involving the cluster $u$ belong to, since otherwise the whole cluster will be revealed. Given that $v$ is unrevealed in $\sigma$, $v$ must belong to a different cluster to $u$ in $\mcalc$.
    
    Denote by $\mcals_{\sigma,u,v}=\{(G,\mcalc)\}$ the set of pairs of graphs and underlying clusterings such that $(G,\mcalc)$ is consistent with $\sigma$ and $u\simc v$. Denote by $\mcals_{\sigma,u,v}'$ the set of pairs where $(G,\mcalc)$ is consistent with $\sigma$, $(u,v)\in E$ but $u\nsimc v$. We have
    $$\Pr_{(G,\mcalc)\leftarrow\mu}[u\simc v ~|~(G,\mcalc)\textnormal{ is consistent with }\sigma \wedge (u,v)\in E]=\frac{|\mcals_{\sigma,u,v}|}{|\mcals'_{\sigma,u,v}|+|\ssuv|}\le \frac{|\ssuv|}{|\ssuvp|}$$
    Because $\mu$ is a uniform distribution over the pairs.

    To show that $|\mcals_{\sigma,u,v}|$ is at most $O(1/k)$ of $|\mcals'_{\sigma,u,v}|$, 
    we will construct a relation between $\mcals_{\sigma,u,v}$ and $\mcals'_{\sigma,u,v}$ such that (i) each element in $\mcals_{\sigma,u,v}$ relates to at least $\Omega(n)$ elements in $\mcals'_{\sigma,u,v}$; (ii) each element in $\mcals'_{\sigma,u,v}$ relates to at most $O(n/k)$ elements in $\mcals_{\sigma,u,v}$.

    The relation is constructed by relating every pair $(G,\mcalc)$ to another pair $(G',\mcalc')$ obtained by locally adjusting two clusters. For every $(G,\mcalc)\in \mcals_{\sigma,u,v}$, we will show that there are $\Omega(n)$ many vertices $w\nsimc v$, such that exchanging the clusters $v$ and $w$ belong to will give another pair $(G',\mcalc')\in\mcals'_{\sigma,u,v}$.

\begin{figure}[t]
  \centering
  \begin{subfigure}[t]{0.48\textwidth}
    \centering
    \includegraphics[width=\linewidth]{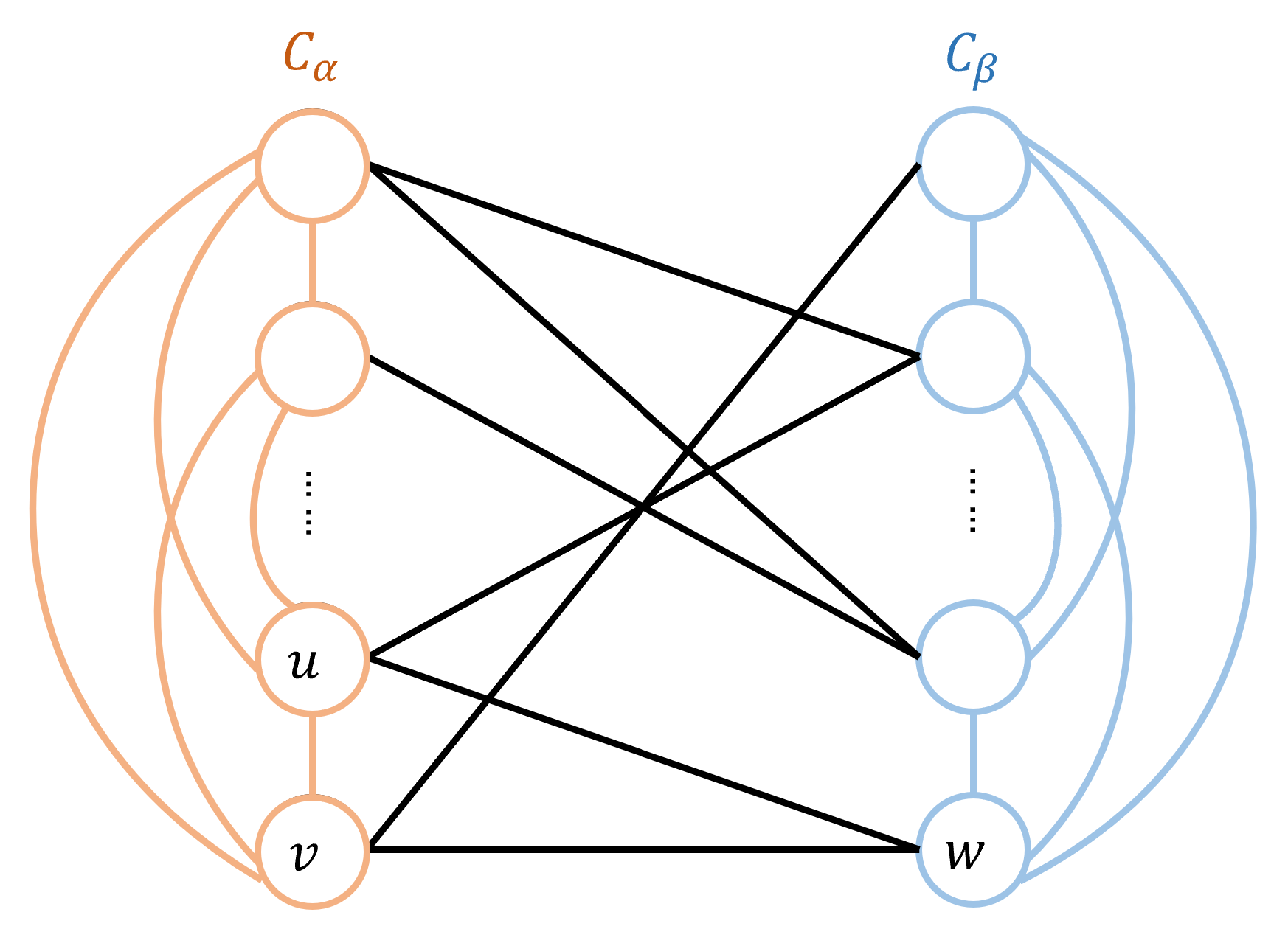} % Replace with your filename
    \caption{The subgraph over $C_\alpha$ and $C_\beta$ of $(G,\mcalc)$} % Optional subcaption
  \end{subfigure}
  \hfill
  \begin{subfigure}[t]{0.48\textwidth}
    \centering
    \includegraphics[width=\linewidth]{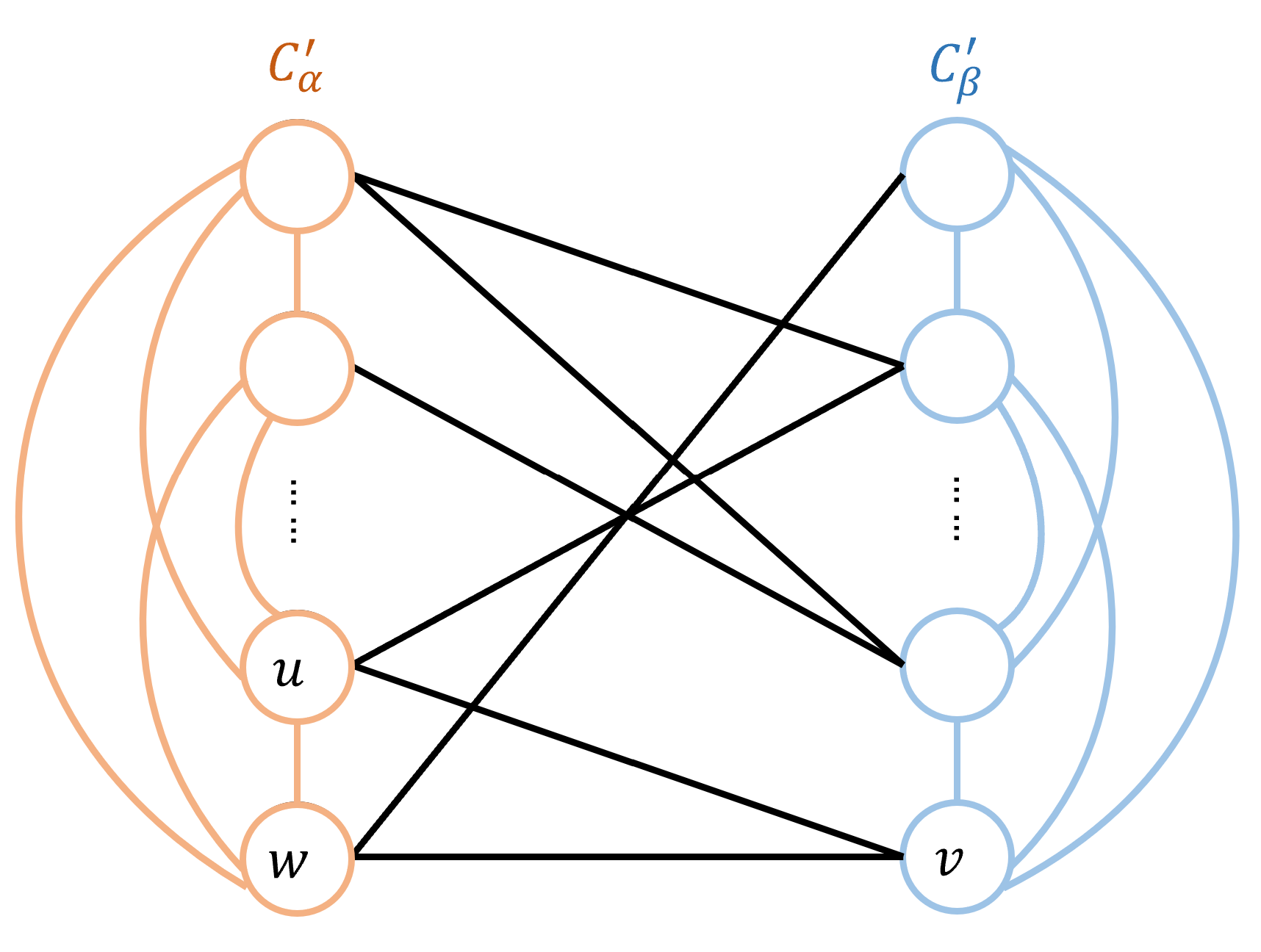} % Replace with your filename
    \caption{The subgraph over $C_\alpha'$ and $C_\beta'$ of $(G',\mcalc')$} % Optional subcaption
  \end{subfigure}
  \caption{To obtain $(G',\mcalc')$ from $(G,\mcalc)$, $v$ and $w$ exchange their clusters, and swap their edges connected to the two clusters $C_\alpha$ and $C_\beta$. The induced bipartite subgraph $G'[C_\alpha',C_\beta']$ remains a regular bipartite graph.}
    \label{figure:coupling_argument}
\end{figure}

    The following is the construction.
    Suppose $u$ and $v$ are at cluster $C_\alpha\in \mcalc$ for $\alpha\in[k]$, and $w$ is at cluster $C_\beta\in\mcalc$ for $\beta\ne \alpha\in[k]$. Such a vertex $w$ should satisfy the following conditions:
    \begin{enumerate}[label=(\roman*)]
        \item In the query history $\sigma$, there are no queries between $w$ and vertices in cluster $C_\alpha$;
        \item In the query history $\sigma$, there are no queries between $v$ and vertices in cluster $C_\beta$;
        \item $w$ is unrevealed in $\sigma$;
        \item $(u,w)\in E(G)$.
    \end{enumerate}
    For each of such vertex $w$, we construct $(G',\mcalc')$ by
    \begin{enumerate}[label=(\roman*)]
        \item Move $v$ to cluster $C_\beta$, and move $w$ to cluster $C_\alpha$;
        \item For every vertex $x\ne v\in C_\alpha$, $(x,w)\in E(G')$. Besides, $(x,v)\in E(G')$ if and only if $(x,w)\in E(G)$;
        \item For every vertex $y\ne w\in C_\beta$, $(y,v)\in E(G')$. Besides, $(y,w)\in E(G')$ if and only if $(y,v)\in E(G)$.
    \end{enumerate}
    We relate every pair $(G,\mcalc)$ with every pair $(G',\mcalc')$ constructed above. See Figure~\ref{figure:coupling_argument} for the illustrated construction.
    
    If $w$ follows the above conditions, the pair $(G',\mcalc')$, compared to $(G,\mcalc)$, will only swap the two vertices $v$ with $w$ in $C_\alpha$ and $C_\beta$, as well as the edges between $w$ and clusters $C_\alpha,C_\beta$, with the edges between $v$ and clusters $C_\alpha,C_\beta$. Since the number of edges $v$ connects to cluster $C_\beta$ is the same as the number of edges $w$ connects to cluster $C_\alpha$ (recall that the subgraphs induced by every pair of clusters $G[C_\alpha,C_\beta]$ are regular bipartite graphs), the degrees of the vertices remain unchanged. In addition, by the conditions, none of the pairs between $v,w$ and clusters $C_\alpha,C_\beta$ are queried in $\sigma$. Therefore, $(G',\mcalc')$ is consistent with the query history $\sigma$.

    Given the relation above, we show that every $(G,\mcalc)\in \ssuv$ is related to $\Omega(n)$ pairs $(G',\mcalc')\in\ssuvp$. Equivalently, we show for every $(G,\mcalc)$, one can select as many as $\Omega(n)$ vertices $w$ satisfying the above conditions. We achieve this by showing that the sum of the total number of vertices that respectively do not satisfy the above four conditions is at most $0.997n$. 

    \begin{enumerate}[label=(\roman*)]
        \item As assumed earlier, there are at most $0.005n$ queries made on cluster $C_\alpha$ in $\sigma$. Condition (i) rules out $\le 0.005n$ vertices from being $w$.
        \item Since $v$ is unrevealed, there are at most $0.001k$ queries involving $v$ in $\sigma$. It means that there are at most $0.001k$ clusters that cannot contain $w$. Condition (ii) rules out $\le 0.001k\cdot \frac nk=0.001n$ vertices from being $w$.
        \item Given that $\sigma$ is good, there are at most $0.001n$ vertices that are revealed. Condition (iii) rules out $\le 0.001n$ vertices.
        \item $u$ has more than $0.01n$ neighbors in $G$. Condition (iv) rules out $\le 0.99n$ vertices.
    \end{enumerate}

    The above sum up together to $0.997n$, which means that there are at least $0.003n$ vertices that can be selected as $w$.

    Now, we just need to show that each pair $(G',\mcalc')\in\mcals_{\sigma,u,v}'$ relates to at most $O(\frac nk)$ elements $(G,\mcalc)$ in $\mcals_{\sigma,u,v}$. Note that, by our construction, given $u,v$ and $(G',\mcalc')$, $(G,\mcalc)$ is obtained by exchanging the vertex $v$ with another vertex $w$ where $u\simcp w$. Since there are at most $\frac nk$ vertices in the cluster $C'_\alpha \ni u$, there are at most $\frac nk$ many such vertices $w$.

    Therefore, the probability is at most
    $$\frac{|\ssuv|}{|\ssuvp|}\le \frac{n/k}{0.003n}<334/k$$
\end{proof}

Given Lemma~\ref{lem:loose_bound_core_prob} and Lemma~\ref{lem:loose_bound_core_prob_neighbor}, we know that the direct query answer of each query reveals vertices with at most $444/k$ probability whenever $\sigma$ is good. We apply concentration bounds on this part of revealed vertices. And we upper bound the rest of revealed vertices by $2.4\cdot 10^{-4}n$ using our relaxation.

\begin{proof}[Proof to Lemma~\ref{lem:loose_bound_prune}]

Let random variables $X_1,\dots, X_t$ respectively denote the number of vertices revealed by the direct query answer of each query; and $X$ to be their sum.

Recall from our relaxation on page~\pageref{pg:loose_bound_relaxation} that:
\begin{itemize}
    \item We will reveal a vertex $u$ if more than $0.001k$ queries involves $u$;
    \item We will reveal the whole cluster $C_\alpha$ in $\mcalc$ if more than $0.005n$ queries were made on vertices in $C_\alpha$.
\end{itemize}
Since there are at most $t=10^{-9}\cdot n/\varepsilon$ queries; each involves at most $2$ vertices. The first case will only occur $\le 2t/0.001k=2\cdot 10^{-4}\cdot n$ times. The second case will only occur $\le 2t/0.005n=4\cdot 10^{-7}/\varepsilon$ times, revealing $2t\frac nk/0.005n=4\cdot 10^{-5}\cdot n$ vertices in total. Hence we can assume that the above cases reveal $\le 2.4\cdot 10^{-4}\cdot n$ vertices. The total number of revealed vertices given $\sigma$ is at most $X+2.4\cdot 10^{-4}n$.

$X_1,\dots,X_t$ are not independent random bits. Nevertheless, from Lemma~\ref{lem:loose_bound_core_prob} and Lemma~\ref{lem:loose_bound_core_prob_neighbor} we know that for each $X_i$, whenever the query history $\sigma_{i-1}$ of the first $i-1$ queries is good,
$$\Pr_{(G,\mcalc)\leftarrow\mu}[X_i=1\vee X_i=2~|~(G,\mcalc)\textnormal{ is consistent with }\sigma_{i-1}]\le 444/k$$
Recall that we want to upper bound the probability that the total number of revealed vertices is larger than $10^{-3}\cdot n$. By applying the Chernoff bound for the sum of adaptive but conditionally bounded random bits (Proposition~\ref{prop:adaptive}), we have
$$\Pr_{(G,\mcalc)\leftarrow\mu}[X\ge 10^{-3}n-2.4\cdot 10^{-4}n]\le 2\cdot \exp\left(-6.3\cdot 10^{-4}n\right)=o(1)$$

\end{proof}

\subsection{Proof to Lemma~\ref{lem:loose_bound_case_i}}

We prove in Lemma~\ref{lem:loose_bound_case_i} that with high probability every clustering $\mcalc'$ of low cost must be close to the underlying clustering $\mcalc$, where we capture the closeness by their symmetric difference: $|\Delta(\mcalc,\mcalc')|\le 10\varepsilon n^2$. Our proof is based on a careful discussion on different parts of their symmetric difference, and uses different proof strategies and concentration bounds on each part.

\begin{proof}[Proof to Lemma~\ref{lem:loose_bound_case_i}]
    We show that with high probability every clustering $\mcalc'$ that is not close to $\mcalc$ also has a high cost.
    
    The symmetric difference $\Delta(\mcalc,\mcalc')$ consists of three types of pairs of vertices.
    \begin{enumerate}[label=(\alph*)]
        \item $u\simc v$, $u\nsimcp v$ 
        \item $u\nsimc v$, $u\simcp v$, $(u,v)\in E$ 
        \item $u\nsimc v$, $u\simcp v$, $(u,v)\not\in E$ 
    \end{enumerate}
    We denote by $\Delta_a(\mcalc,\mcalc'), \Delta_b(\mcalc,\mcalc'), \Delta_c(\mcalc,\mcalc')$ the set of pairs of vertices of the three cases.
    
    Among these cases, only case (b) contributes $-1$ to $\costg(\mcalc')-\costg(\mcalc)$, and both case (a) and case (c) contributes $1$. Note that $(u,v)\in E$ whenever $u\simc v$, by our construction. We rewrite the difference of their costs as
    $$\costg(\mcalc')-\costg(\mcalc)=|\Delta(\mcalc,\mcalc' )|-2|\Delta_b(\mcalc,\mcalc')|$$

    Fix the underlying clustering $\mcalc$ and an arbitrary clustering $\mcalc'$.
    We will prove a concentration bound on $|\Delta_b(\mcalc,\mcalc')|$, which is a sum of edge indicators $X:=|\Delta_b(\mcalc,\mcalc')|=\sum_{(u,v):u\nsimc v, u\simcp v}X_{(u,v)}$ where $X_{(u,v)}:=\mbbone_{(u,v)\in E}$. That is, with high probability only a small proportion of pairs in the symmetric difference are connected. 

    We further decompose the symmetric difference into independent parts.
    Let $\Delta_{bc}(\mcalc,\mcalc')=\Delta_b(\mcalc,\mcalc')\cup \Delta_c(\mcalc,\mcalc')$ the set of pairs where $u\nsimc v, u\simcp v$. For every $1\le \alpha<\beta\le k$, its intersection with $C_\alpha\times C_\beta$ is a list of vertex-disjoint complete bipartite subgraphs. We use $(a_{\alpha,\beta,1},b_{\alpha,\beta,1})\dots, (a_{\alpha,\beta,k'},b_{\alpha,\beta, k'})$ to denote the number of vertices on each side of each bipartite subgraph, where $k'$ is the number of clusters in $\mcalc'$. Note that $\sum_{i=1}^{k'}a_{\alpha,\beta,i}=\sum_{i=1}^{k'}b_{\alpha,\beta,i}=100\varepsilon n$. And we suppose that the list $(a_{\alpha,\beta,i},b_{\alpha,\beta,i})$ is sorted in increasing order of $\max(a_{\alpha,\beta,i},b_{\alpha,\beta,i})$.
    
    We discuss two types of bipartite subgraphs $(a_{\alpha,\beta,i},b_{\alpha,\beta,i})$ by their size. For the part $\max(a_{\alpha,\beta,i},b_{\alpha,\beta,i})$ is small, we show that when the sum of previous indicators in the first $i$ subgraphs are concentrated, the probability that the next indicator is $1$ is always small; for the part $\max(a_{\alpha,\beta,i},b_{\alpha,\beta,i})$ is large, by the $(\varepsilon n)$-uniformity of the bipartite subgraph $G[\mcalc_{\alpha},\mcalc_\beta]$, only a small proportion of its indicators are $1$ since every vertex is connected to at most $\varepsilon n$ vertices.

    More formally, when $\max(a_{\alpha,\beta,i},b_{\alpha,\beta,i})\ge 5\varepsilon n$, by the fact that the subgraph $G[C_\alpha,C_\beta]$ is an $(\varepsilon n)$-regular bipartite graph, at most $1/5$ of its edges indicators are $1$. Let $r_{\alpha,\beta}$ denote the maximum index $i$ such that $\max(a_{\alpha,\beta,i},b_{\alpha,\beta,i})< 5\varepsilon n$. We use $s_0$ to denote the total number of indicators in $\Delta_{bc}(\mcalc,\mcalc')$ that belong to large bipartite subgraphs, and $s_{\alpha,\beta}=\sum_{i=1}^{r_{\alpha,\beta}}a_{\alpha,\beta,i}b_{\alpha,\beta,i}$ the total number of indicators in $G[C_\alpha,C_\beta]$ in small bipartite subgraphs. In addition, we use $X_{\alpha,\beta}:=\sum_{i=1}^{r_{\alpha,\beta}}\sum_{(u,v)\in \binom{C_i'}{2}\cap \mcalc_\alpha\times \mcalc_\beta}X_{(u,v)}$ to denote the sum of indicators in small bipartite subgraphs in $G[C_\alpha,C_\beta]$. Then, we have $|\Delta_{bc}(\mcalc,\mcalc')|=s_0+\sum_{\alpha,\beta} s_{\alpha,\beta}$, $X\le 0.2s_0+\sum_{\alpha,\beta}X_{\alpha,\beta}$. We show below that each $X_{\alpha,\beta}$ follows a concentration bound.

    \begin{claim}
        \label{claim:loose_bound_concentration_xab}
        For every $1\le \alpha<\beta\le k$,
        $$\Pr_{G\leftarrow \mu_\mcalc}\left[X_{\alpha,\beta}\ge 0.1 s_{\alpha,\beta}\right]\le \exp\left(-0.044s_{\alpha,\beta}\right)$$
    \end{claim}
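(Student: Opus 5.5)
We plan to reveal the indicators $X_{(u,v)}$ inside the small bipartite blocks one at a time and apply the Chernoff bound for conditionally bounded bits (Lemma~\ref{prop:adaptive}). Fix $\alpha<\beta$ and list the pairs of the small blocks $i=1,\dots,r_{\alpha,\beta}$ in some order. The key intermediate statement is: conditioned on any outcome of the previously revealed indicators whose sum is still below $0.1s_{\alpha,\beta}$, the next pair $(x,y)$ with $x\in C_\alpha$, $y\in C_\beta$ is an edge of the uniformly random $(\varepsilon n)$-regular bipartite graph $G[C_\alpha,C_\beta]$ with probability at most some $p<0.1$, say $p\approx 1/20$. Lemma~\ref{prop:adaptive} with $pn$ replaced by $p\,s_{\alpha,\beta}$ and $(1+\delta)p=0.1$ then gives $\Pr[X_{\alpha,\beta}\ge 0.1s_{\alpha,\beta}]\le 2\exp(-\tfrac{\delta^2}{2+\delta}p\,s_{\alpha,\beta})$. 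We will tune $p$ and $\delta$ so that the constant is at least $0.044$. The factor $2$ will be absorbed, or handled by a slightly sharper constant, when $s_{\alpha,\beta}$ is large; the claim is only used in that regime.

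To bound the conditional edge probability we use a switching argument in the spirit of Lemma~\ref{lem:loose_bound_core_prob}. Fix $x,y$ and a partial assignment $\tau$ of revealed pairs. Compare regular bipartite graphs consistent with $\tau$ that contain $(x,y)$ against those that do not. Given a graph containing $(x,y)$, choose an edge $(x',y')$ with $x'\ne x$, $y'\ne y$, $(x,y')\notin E$, $(x',y)\notin E$, such that none of $(x',y')$, $(x,y')$, $(x',y)$ is among the revealed pairs. Then perform the standard 2-switch, replacing $\{(x,y),(x',y')\}$ by $\{(x,y'),(x',y)\}$. This preserves regularity and consistency with $\tau$.

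The obstacle is guaranteeing many valid switches. The revealed pairs are all inside small blocks, which have fewer than $5\varepsilon n$ vertices per side. So the revealed pairs touching $x$ lie within one block of $\mcalc'$, hence at most $5\varepsilon n$ of them, and likewise for $y$. The count then goes as follows:
\begin{itemize}
\item $x$ has $100\varepsilon n-\varepsilon n$ non-neighbours $y'$ in $C_\beta$, of which at most $5\varepsilon n$ are revealed, leaving at least $94\varepsilon n$ choices.
\item Each such $y'$ has $\varepsilon n$ neighbours $x'$.
\item Excluding $x'$ adjacent to $y$ (at most $\varepsilon n$ choices) or revealed-paired with $y$ or $y'$ removes only $O(\varepsilon n)$ per $y'$.
\end{itemize}
This leaves roughly $\Omega(\varepsilon n)\cdot\varepsilon n\approx 90(\varepsilon n)^2$ forward switches. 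In the reverse direction, a graph without $(x,y)$ arises from at most $\deg(x)\deg(y)=(\varepsilon n)^2$ graphs with $(x,y)$. Hence
\[
\frac{\#\{\text{graphs with }(x,y)\}}{\#\{\text{graphs without }(x,y)\}}\lesssim \frac{1}{85},
\]
which gives the probability bound $p\le 1/80$. This is comfortably below $0.1$ and yields a Chernoff exponent above $0.044$.

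The delicate step is the exclusion count for $x'$, because revealed pairs involving $y'$ cannot be bounded solely by the block size. We would handle this by restricting $y'$ to lie outside the current block of $y$ as well, and then reuse the per-block bound on revealed pairs per vertex.
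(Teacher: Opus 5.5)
Your architecture matches the paper's: reveal the small-block indicators one at a time, bound each conditional edge probability by a 2-switch on the uniform $(\varepsilon n)$-regular bipartite graph $G[C_\alpha,C_\beta]$, and feed the bound into Lemma~\ref{prop:adaptive}. Your reverse bound of $\deg(x)\deg(y)=(\varepsilon n)^2$ is also correct.

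\textbf{The gap is in the forward count.} You fix an admissible $y'$, take its $\varepsilon n$ neighbours $x'$ as the pool, and remove ``only $O(\varepsilon n)$'' of them. But the pool itself has size $\varepsilon n$, so an $O(\varepsilon n)$ removal can empty it, and it really can. The revealed pairs at $y'$ are the pairs $(x'',y')$ with $x''$ in $y'$'s own block, and there can be almost $5\varepsilon n$ of them. If $y'$ lies in an already processed block that contains all of its neighbours, every candidate $(x',y')$ is pinned. The same per-$y'$ failure occurs for the exclusions ``$x'$ adjacent to $y$'' and ``$x'$ in the current block''. Your proposed repair, moving $y'$ outside the block of $y$, only guarantees that $(x,y')$ is unrevealed. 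The per-vertex bound $5\varepsilon n$ you intend to reuse exceeds the degree $\varepsilon n$, so it cannot protect $(x',y')$. Consequently neither the $\approx 90(\varepsilon n)^2$ switches nor $p\le 1/80$ is established.

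\textbf{The missing ingredient is the hypothesis you quote but never use.} Lemma~\ref{prop:adaptive} only requires the conditional bound when the previously revealed indicators sum to less than $0.1s_{\alpha,\beta}$. Since $s_{\alpha,\beta}<5\varepsilon n\cdot 100\varepsilon n=500\varepsilon^2n^2$, this means fewer than $50\varepsilon^2n^2$ of the exactly $100\varepsilon^2n^2$ edges of $G[C_\alpha,C_\beta]$ are pinned.

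Some global hypothesis of this kind is unavoidable. Suppose all blocks are small and only one remains, and the processed blocks have revealed that all their vertices have all their neighbours inside their own blocks. Then the last block is forced to carry an $(\varepsilon n)$-regular graph on fewer than $5\varepsilon n$ vertices per side. Its edge density therefore exceeds $1/5$, and the conditional edge probability can reach $1$ after further revelations.

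With the hypothesis, count the switching edges $(x',y')$ globally, as the paper does. From the $100\varepsilon^2n^2$ edges, discard:
\begin{itemize}
\item the $<50\varepsilon^2n^2$ pinned edges;
\item the $<10\varepsilon^2n^2$ edges incident to the current block, which gives $x'\neq x$, $y'\neq y$ and makes $(x,y')$ and $(x',y)$ unrevealed (a pair through $x$ or $y$ can only be revealed if its other endpoint is in the current block);
\item the $\le 2\varepsilon^2n^2$ edges with $y'$ adjacent to $x$ or $x'$ adjacent to $y$.
\end{itemize}
This leaves $>38\varepsilon^2n^2$ forward switches against $\le\varepsilon^2n^2$ backward ones, so $p<1/39$. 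With $(1+\delta)p=0.1$, i.e.\ $\delta=2.9$, the exponent is $\frac{\delta^2}{2+\delta}\cdot\frac{1}{39}\approx 0.04401$. So there is no comfortable slack. The factor $2$ is best removed by using the one-sided Chernoff tail in the last step of the proof of Lemma~\ref{prop:adaptive}.
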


    \begin{proof}
        For brevity, given fixed $\alpha,\beta$, we use $r$ to denote $r_{\alpha,\beta}$, $s$ to denote $s_{\alpha,\beta}$, and $(a_1,b_1),\dots,(a_{r},b_{r})$ to denote $(a_{\alpha,\beta,1},b_{\alpha,\beta,1}),\dots,(a_{\alpha,\beta,r},a_{\alpha,\beta,r})$. Since $s=\sum_{i=1}^{r} a_ib_i$ and $\max(a_i,b_i)<5\varepsilon n$ for every $i\in[r]$, $s<5\varepsilon n\cdot 100\varepsilon n=500\varepsilon^2 n^2$.

        To bound the tail probability of $X_{\alpha,\beta}$, we will use our Chernoff bound for conditionally bounded random bits (Proposition~\ref{prop:adaptive}). To that end, we bound the probability each $(u,v)\in E$ given any possible assignment to its previous edge indicators where their summation is $<0.1s$. Let $\sigma_{i-1}$ denote an assignment to all the indicators of the first $i-1$ bipartite subgraphs, and $|\sigma_{i-1}|$ the number of indicators that evaluate to $1$. Let $Y_1,\dots, Y_{a_ib_i}$ respectively denote the edge indicators of the $i$-th bipartite subgraph. For every $j\in[a_ib_i]$, we will upper bound
        $$\max_{y_1,\dots,y_{j-1}\in \{0,1\}: |\sigma_{i-1}|+\sum_{t=1}^{j-1}y_{t}<0.1s}\Pr_{G\leftarrow \mu_\mcalc}[Y_j=1|\sigma_{i-1},Y_1=y_1,\dots,Y_{j-1}=y_{j-1}]$$

        We use a similar exchange argument as in Lemma~\ref{lem:loose_bound_core_prob}. Formally, let $\sigma$ be an assignment to the first $i-1$ bipartite subgraphs and $Y_{1\dots j-1}$ such that $|\sigma|<0.1s$. Let $\mcals_{\sigma,u,v}$ be set of bipartite subgraphs $G[C_\alpha,C_\beta]$ that is consistent with $\sigma$ and $(u,v)\in E$. And let $\ssuvp$ be the set of bipartite subgraphs consistent with $\sigma$ but $(u,v)\not\in E$. We bound the probability by constructing a relation between $\ssuv$ and $\ssuvp$ where each graph in $\ssuv$ is related to at least $>38\varepsilon^2n^2$ graphs in $\ssuvp$, and each graph in $\ssuvp$ is related to at most $\le \varepsilon^2n^2$ graphs in $\ssuv$.

        The construction is simple. Given a graph $G[C_\alpha,C_\beta]$ in $\ssuv$. For every pair of vertices $(u',v')$ such that $u\ne u'$, $v\ne v'$, $(u',v')\in E$ and $(u,v'),(u',v)\not\in E$ and none of the four edges are assigned in $\sigma$, construct a graph $G'$ obtained by flipping the four edges. We relate $G$ to every such $G'$. Since none of the edges are fixed by $\sigma$ and the degrees of vertices are unchanged, $G'[C_\alpha,C_\beta]\in \ssuvp$.

        For every graph $G$, we lower bound the number of vertices $(u',v')$. There are exactly $100\varepsilon^2 n^2$ edges in $G[C_\alpha,C_\beta]$. Among them, at most $0.1s<50\varepsilon^2 n^2$ edges are fixed in $\sigma$; at most $10\varepsilon^2 n^2$ edges are incident to the $i$-th bipartite subgraph $(a_i,b_i)$, where $(u,v'),(u',v)$ may have been fixed if either of $u',v'$ belongs to the $i$-th bipartite subgraph; and at most $2\varepsilon^2n^2$ edges have an endpoint $u'/v'$ that is connected to $v/u$. The remaining $>38\varepsilon^2n^2$ edges $(u',v')$ are valid edges for our exchange argument.

        For every graph $G'$, since $(u',v),(u,v')\in E(G')$, there are at most $\varepsilon^2n^2$ such pairs of vertices $(u',v')$ given $u,v$ fixed.

        Therefore, for every $\sigma$ where $|\sigma|<0.1s$,
        $$\Pr_{G\leftarrow \mu_\mcalc}[Y_j=1|\sigma]= \frac{|\ssuv|}{|\ssuvp|+|\ssuv|}<1/39$$

        By our Chernoff bound for conditionally bounded random variables (Proposition~\ref{prop:adaptive}) and by setting $\delta =2.9$
        $$\Pr\left[\sum_{i=1}^{r}\sum_{(u,v)\in \binom{C_i'}{2}\cap \mcalc_\alpha\times \mcalc_\beta}X_{(u,v)}\ge 0.1s\right]\le 2\cdot \exp\left(-\frac{\delta^2}{2+\delta}\cdot \frac s{39}\right)<\exp(-0.044s)$$
        
    \end{proof}

    Define $B_{\alpha,\beta}=\mbbone_{X_{\alpha,\beta}\ge 0.1s_{\alpha,\beta}}$ independent random variables where each $\Pr[B_{\alpha,\beta}=1]\le \exp(-0.044s_{\alpha,\beta})$. Then $X\le 0.2s_0+\sum_{\alpha,\beta}(0.1+0.9B_{\alpha,\beta})s_{\alpha,\beta}$.
    To combine the concentration bounds on each $X_{\alpha,\beta}$, we apply the generic Chernoff bound to obtain a concentration bound on the weighted sum of $B_{\alpha,\beta}$.

    By the generic Chernoff bound (Proposition~\ref{prop:generic_chernoff}) and by setting $t=0.01$
    $$\begin{aligned}
        &\Pr\left[\sum_{\alpha,\beta}s_{\alpha,\beta}B_{\alpha,\beta}\ge 0.1\sum_{\alpha,\beta}s_{\alpha,\beta}\right]\\
        \le&\exp\left(-0.1t\sum_{\alpha,\beta}s_{\alpha,\beta}\right)\cdot \mbbe[\exp(t\sum_{\alpha,\beta}s_{\alpha,\beta}B_{\alpha,\beta})]\\
        \le&\exp\left(-0.1t\sum_{\alpha,\beta}s_{\alpha,\beta}\right)\cdot \prod_{\alpha,\beta}\mbbe[\exp(ts_{\alpha,\beta}B_{\alpha,\beta})]\\
        \le&\exp\left(-0.1t\sum_{\alpha,\beta}s_{\alpha,\beta}\right)\cdot \prod_{\alpha,\beta}(1+\exp(-0.034s_{\alpha,\beta})-\exp(-0.044s_{\alpha,\beta}))\\
        \le&\exp\left(-0.001\sum_{\alpha,\beta}s_{\alpha,\beta}+\sum_{\alpha,\beta}\exp(-0.034s_{\alpha,\beta})\right)
    \end{aligned}$$
    where we used the fact that $1+x\le e^x$.

    Notice that $\exp(-0.034s_{\alpha,\beta})<10^{-4}s_{\alpha,\beta}$ for every large enough constant $s_{\alpha,\beta}$. We consider small $s_{\alpha,\beta}$ later and obtain
    \begin{equation}
        \label{eq:loose_bound_bigsab}
        \begin{aligned}
        \Pr\left[\sum_{\alpha,\beta:s_{\alpha,\beta}\ge\log n}s_{\alpha,\beta}B_{\alpha,\beta}\ge 0.1\sum_{\alpha,\beta:s_{\alpha,\beta}\ge\log n}s_{\alpha,\beta}\right]\le \exp\left(-0.0009\sum_{\alpha,\beta: s_{\alpha,\beta}\ge\log n}s_{\alpha,\beta}\right)
    \end{aligned}
    \end{equation}

    For small $s_{\alpha,\beta}$, we reuse the proof framework from Claim~\ref{claim:loose_bound_concentration_xab} and show the following

    \begin{claim}
        
    \begin{equation}
        \label{eq:loose_bound_smallsab}
        \Pr_{G\leftarrow \mu_{\mcalc}}\left[\sum_{\alpha,\beta:s_{\alpha,\beta}<\log n}X_{\alpha,\beta}\ge 0.1\sum_{\alpha,\beta: s_{\alpha,\beta}<\log n}s_{\alpha,\beta}\right]\le \exp \left(-0.044\sum_{\alpha,\beta: s_{\alpha,\beta}<\log n}s_{\alpha,\beta}\right)
    \end{equation}
    \end{claim}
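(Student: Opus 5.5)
The plan is to treat all small bipartite subgraphs, from every pair $(\alpha,\beta)$ with $s_{\alpha,\beta}<\log n$, as one long sequence of edge indicators. We then apply the conditionally bounded Chernoff bound (Lemma~\ref{prop:adaptive}) to the whole sequence at once, exactly as in Claim~\ref{claim:loose_bound_concentration_xab}. Let $S:=\sum_{\alpha,\beta:s_{\alpha,\beta}<\log n}s_{\alpha,\beta}$. Order the indicators as follows: first by the pair $(\alpha,\beta)$, with the pairs in any fixed order; then, inside each pair, by the sorted list of bipartite subgraphs $(a_{\alpha,\beta,i},b_{\alpha,\beta,i})$ with $i\le r_{\alpha,\beta}$; then arbitrarily within a subgraph. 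The goal is to show that, for every prefix assignment $\sigma$ whose number of ones is less than $0.1S$, the next indicator $Y=\mbbone_{(u,v)\in E}$ satisfies $\Pr[Y=1\mid\sigma]<1/39$. With $p=1/39$ and $\delta=2.9$, so that $(1+\delta)p\cdot S=0.1S$, Lemma~\ref{prop:adaptive} then gives the bound $2\exp(-\frac{\delta^2}{2+\delta}\cdot\frac{S}{39})<\exp(-0.044S)$.

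The key structural observation is that $G[C_\alpha,C_\beta]$ is generated independently for different pairs $\{\alpha,\beta\}$. So, conditioned on $\sigma$, the distribution of the current bipartite graph $G[C_\alpha,C_\beta]$ depends only on the part of $\sigma$ that lies inside $C_\alpha\times C_\beta$. Earlier pairs are irrelevant, except that they consume part of the global budget of fewer than $0.1S$ ones.

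I then rerun the switching argument of Claim~\ref{claim:loose_bound_concentration_xab} inside $G[C_\alpha,C_\beta]$. Take $\mcals_{\sigma,u,v}$ to be the consistent bipartite graphs with $(u,v)\in E$, and $\ssuvp$ the consistent ones with $(u,v)\notin E$. Relate $G\in\mcals_{\sigma,u,v}$ to every $G'$ obtained by flipping a $4$-cycle $u,v,u',v'$, where $(u',v')\in E$, $(u,v'),(u',v)\notin E$, and none of the four pairs is assigned in $\sigma$. Such a flip keeps the graph $\varepsilon n$-regular, so $G'\in\ssuvp$.

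The counting differs from before only in the term for edges fixed by $\sigma$. The graph $G[C_\alpha,C_\beta]$ has $100\varepsilon^2n^2$ edges. The number of ones of $\sigma$ inside this pair is at most $s_{\alpha,\beta}<\log n$ (even better than the earlier $0.1s<50\varepsilon^2n^2$), and is also below $0.1S$. Edges incident to the current small subgraph number at most $10\varepsilon^2n^2$, since its sides have size below $5\varepsilon n$ and each vertex has degree $\varepsilon n$. Edges with an endpoint adjacent to $u$ or $v$ number at most $2\varepsilon^2n^2$. Since $\log n=o(\varepsilon^2n^2)$ when $\varepsilon=\omega(\log n/n)$, more than $38\varepsilon^2n^2$ valid $(u',v')$ remain. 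Conversely, each $G'$ is hit by at most $\varepsilon^2n^2$ choices, since $u'$ must be a neighbour of $v$ and $v'$ a neighbour of $u$ in $G'$. This gives $\Pr[Y=1\mid\sigma]\le\frac{|\mcals_{\sigma,u,v}|}{|\ssuvp|+|\mcals_{\sigma,u,v}|}<1/39$.

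The main point to verify is that the prefix constraint used by Lemma~\ref{prop:adaptive} (fewer than $0.1S$ ones overall) really yields enough free edges inside the current pair. For small $s_{\alpha,\beta}$ this is easier than in Claim~\ref{claim:loose_bound_concentration_xab}: the pair's own indicators number fewer than $\log n$, so at most $\log n$ edges are fixed regardless of the global count. I also need to check that conditioning on zeros (assigned non-edges) does not break the switching. It does not, because we only require that the four switched pairs are unassigned, and assigned zeros elsewhere are preserved by the flip. With these checks the claim follows.
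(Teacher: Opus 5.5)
Your proposal is correct and matches the paper's proof. You concatenate the small-subgraph indicators, rerun the $4$-cycle switching of Claim~\ref{claim:loose_bound_concentration_xab} using that fewer than $\log n=o(\varepsilon^2n^2)$ pairs are fixed inside the current $G[C_\alpha,C_\beta]$ (so the $<1/39$ bound holds for every prefix), and apply Lemma~\ref{prop:adaptive} with $p=1/39$, $\delta=2.9$. Your explicit appeal to the independence of the graphs $G[C_\alpha,C_\beta]$ across pairs is what justifies the paper's terse remark that the bound holds ``without constraints on $\sigma$''. The one blemish, inherited from the paper, is that the final step $2\exp\bigl(-\tfrac{\delta^2}{2+\delta}\cdot\tfrac{S}{39}\bigr)<\exp(-0.044S)$ holds only for large $S$; using the one-sided Chernoff bound (no factor $2$) at the end of the proof of Lemma~\ref{prop:adaptive} removes this.
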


    \begin{proof}
        Following almost the same analysis as in Claim~\ref{claim:loose_bound_concentration_xab}, for every $\alpha,\beta$ where $s_{\alpha,\beta}<\log n$ and every $j\in[s_{\alpha,\beta}]$, given every $\sigma$ an assignment to the previous $j-1$ edge indicators of $X_{\alpha,\beta}$, the $j$-th edge indicator always have a $<1/39$ probability to be $1$ because much less than $50\varepsilon^2n^2$ edges are fixed in $\sigma$. By applying the same Chernoff bound for conditionally bounded random bits (Proposition~\ref{prop:adaptive}), we obtain the same concentration bound. Besides, for each edge indicator, its probability is always bounded without constraints on $\sigma$, hence the lower bound also applies to the sum of all the indicators for every $\alpha,\beta$ where $s_{\alpha,\beta}<\log n$.
    \end{proof}

    Given the above, we are ready to give a concentration bound on $X$:
    $$\Pr_{G\leftarrow \mu_\mcalc}[X\ge \max(0.4|\Delta_{bc}(\mcalc,\mcalc')|, 4.5\varepsilon n^2)]\le \exp(-\Omega(\varepsilon n^2))$$
    Recall that $$X\le 0.2s_0+\left(\sum_{\alpha,\beta:s_{\alpha,\beta}<\log n}X_{\alpha,\beta}\right)+\left(\sum_{\alpha,\beta:s_{\alpha,\beta}\ge \log n}(0.1+0.9B_{\alpha,\beta})s_{\alpha,\beta}\right)$$
    For the latter two terms, by inequalities (\ref{eq:loose_bound_bigsab}) and (\ref{eq:loose_bound_smallsab}), both exceeds $0.4\sum_{\alpha,\beta}s_{\alpha,\beta}$ with probability $\le \exp(-\Omega(\sum_{\alpha,\beta}s_{\alpha,\beta}))$.
    When any of $\sum_{\alpha,\beta:s_{\alpha,\beta}<\log n}s_{\alpha,\beta}$, $\sum_{\alpha,\beta:s_{\alpha,\beta}\ge \log n}s_{\alpha,\beta}$ is $o(\varepsilon n^2)$, the corresponding summation cannot be $\Omega(\varepsilon n^2)$ since it is upper bounded by $o(\varepsilon n^2)$ unconditionally. Therefore, we have
    $$\Pr\left[\sum_{\alpha,\beta:s_{\alpha,\beta}\ge\log n}s_{\alpha,\beta}B_{\alpha,\beta}\ge \max\left(0.01\varepsilon n^2, 0.1\sum_{\alpha,\beta:s_{\alpha,\beta}\ge\log n}s_{\alpha,\beta}\right)\right]\le \exp\left(-\Omega(\varepsilon n^2)\right)$$
    and
    $$\Pr_{G\leftarrow \mu_{\mcalc}}\left[\sum_{\alpha,\beta:s_{\alpha,\beta}<\log n}X_{\alpha,\beta}\ge \max\left(0.01\varepsilon n^2,0.1\sum_{\alpha,\beta: s_{\alpha,\beta}<\log n}s_{\alpha,\beta}\right)\right]\le \exp (-\Omega(\varepsilon n^2))$$
    In addition, $$0.2s_0+\sum_{\alpha,\beta:s_{\alpha,\beta}\ge \log n}0.1s_{\alpha,\beta}\le 0.2|\Delta_{bc}(\mcalc,\mcalc')|\le \max(0.2|\Delta_{bc}(\mcalc,\mcalc')|,2.25\varepsilon n^2)$$ Combining the above inequalities together, we obtain
    $$\begin{aligned}
        &\Pr_{G\leftarrow \mu_\mcalc}[X\ge \max(0.4|\Delta_{bc}(\mcalc,\mcalc')|, 4.5\varepsilon n^2)]\\
        \le&\Pr\left[0.2s_0+\left(\sum_{\alpha,\beta:s_{\alpha,\beta}<\log n}X_{\alpha,\beta}\right)+\left(\sum_{\alpha,\beta:s_{\alpha,\beta}\ge \log n}(0.1+0.9B_{\alpha,\beta})s_{\alpha,\beta}\right)\ge \max(0.4|\Delta_{bc}(\mcalc,\mcalc')|, 4.5\varepsilon n^2)\right]\\
        \le&\Pr\left[\left(\sum_{\alpha,\beta:s_{\alpha,\beta}<\log n}X_{\alpha,\beta}\right)+0.9\left(\sum_{\alpha,\beta:s_{\alpha,\beta}\ge \log n}B_{\alpha,\beta}s_{\alpha,\beta}\right)\ge \max(0.2|\Delta_{bc}(\mcalc,\mcalc')|, 2.25\varepsilon n^2)\right]\\
        \le&\exp(-\Omega(\varepsilon n^2))
    \end{aligned}$$

    Therefore, for every $\mcalc'$ where $|\Delta(\mcalc,\mcalc')|\ge 10\varepsilon n^2$
    $$\begin{aligned}
    &\Pr_{G\leftarrow \mu_{\mcalc}}[\costg(\mcalc')-\costg(\mcalc)\le \varepsilon n^2]\\
    =&\Pr_{G\leftarrow \mu_{\mcalc}}[|\Delta(\mcalc,\mcalc')|-2X\le \varepsilon n^2]\\
    =&\Pr_{G\leftarrow \mu_{\mcalc}}[X\ge \frac{|\Delta(\mcalc,\mcalc')|-\varepsilon n^2}2]\\
    \le&\Pr_{G\leftarrow \mu_\mcalc}[X\ge \max(0.4|\Delta(\mcalc,\mcalc')|, 4.5\varepsilon n^2)]\\
    \le&\Pr_{G\leftarrow \mu_\mcalc}[X\ge \max(0.4|\Delta_{bc}(\mcalc,\mcalc')|, 4.5\varepsilon n^2)]\\
    \le&\exp(-\Omega(\varepsilon n^2))\\
    \le&\exp(-\omega(n\log n))
    \end{aligned}$$
    since $\varepsilon =\omega(\frac{\log n}{n})$. By a union bound over all possible clusterings $\mcalc'$, with $\ge 1-o(1)$ probability every clustering that is not close to $\mcalc$ must have a high cost.

\end{proof}

\subsection{Proof to Lemma~\ref{lem:loose_bound_conditionalbd}}

Lemma~\ref{lem:loose_bound_conditionalbd} states that, given the closeness condition, given any fixed parameter $\varepsilon\in\left(\omega\left(\frac{\log n}{n}\right), 10^{-6}\right)$, fixed clustering $\mcalc'$ and any query history $\sigma$ of length $\le t=10^{-9}n/\varepsilon$ that is good, the cost of $\mcalc'$ is much larger than the cost of the underlying clustering $\mcalc$ with high probability over the input distribution $\mu$:
$$\Pr_{(G,\mcalc)\leftarrow\mu}[\costg(\mcalc')\le \costg(\mcalc)+\varepsilon n^2~|~(G,\mcalc)\textnormal{ is consistent with }\sigma]<1/4$$

The key idea is to use Lemma~\ref{lem:loose_bound_core_prob} to show that for every possible clustering $\mcalc'$, its symmetric difference to $\mcalc$ is expected to be high. Therefore it has a high cost with high probability.

\begin{proof}[Proof to Lemma~\ref{lem:loose_bound_conditionalbd}]
    We show that given a good query history of length $\le t$, every clustering $\mcalc'$ will have a high symmetric difference with $\mcalc$ with high probability.

    By the definition of a good query history, at most $0.001n$ vertices are revealed. Hence $\le \frac120.001n\cdot 100\varepsilon n=0.05\varepsilon n^2$ unordered pairs of vertices $(u,v)$ are known to have $u\simc v$.

    Recall that we use $\mcalp(\mcalc)=\bigcup_{i=1}^k\binom{C_i}{2}$ to denote the set of all unordered pairs that belong to the same cluster in $\mcalc$. And we use $\Delta(\mcalc,\mcalc')$ to denote the symmetric difference of $\mcalp(\mcalc)$ and $\mcalp(\mcalc')$.
    
    Notice that $|\mcalp(\mcalc)|= \frac{n\cdot (100\varepsilon n -1)}{2}$ by the construction. This implies $|\mcalp(\mcalc')|\le 60\varepsilon n^2$, or otherwise $\Delta(\mcalc, \mcalc')>10\varepsilon n^2$. Similarly, we also have $|\mcalp(\mcalc')|\ge 39.9\varepsilon n^2$.

    Let $X=110\varepsilon n^2-\Delta(\mcalc,\mcalc')$. By Lemma~\ref{lem:loose_bound_core_prob}, we upper bound the expectation of $X$
    $$\begin{aligned}
        &\mbbe_{(G,\mcalc)\leftarrow \mu}[X~|~(G,\mcalc)\textnormal{ is consistent with }\sigma\wedge \mcale_{\textnormal{close}}]\\
        \le &o(1)\cdot 110\varepsilon n^2+\mbbe_{(G,\mcalc)\leftarrow \mu}[X~|~(G,\mcalc)\textnormal{ is consistent with }\sigma]\\
        =&(110+o(1))\varepsilon n^2-|\mcalp(\mcalc)|-|\mcalp(\mcalc')|+2\mbbe[\mcalp(\mcalc)\cap \mcalp(\mcalc')~|~(G,\mcalc)\textnormal{ is consistent with }\sigma]\\
        \le &(60.01+o(1))\varepsilon n^2-|\mcalp(\mcalc')|+2\cdot (0.05\varepsilon n^2+\frac{334}{k}\cdot 60\varepsilon n^2)\\
        \le&(64.118+o(1))\varepsilon n^2-|\mcalp(\mcalc')|\\
        \le&(24.218+o(1))\varepsilon n^2
    \end{aligned}$$
    where we used the fact that $\varepsilon <10^{-6}$.

    By Markov's inequality,
    $$\Pr[\Delta(\mcalc,\mcalc')\le 10\varepsilon n^2]=\Pr[X\ge 100\varepsilon n^2]\le 1/4$$
    Therefore, given the closeness condition and a good query history, any fixed clustering $\mcalc'$ has a cost $\le \costg(\mcalc)+\varepsilon n^2$ with probability $\le 1/4$.
\end{proof}

\section{Acknowledgement}

The authors would like to thank anonymous reviewers for their valuable suggestions. We thank David Garc{\'\i}a Soriano for helpful comments on the query complexity literature of correlation clustering. Songhua He would also like to thank Karthik C. S. for generously providing partial research fellowship.

\newpage
\bibliographystyle{alpha}
\bibliography{ref}

\newpage

\appendix

\section{Memory-query tradoffs for the max cut and minimum bisection in the random query model}
\label{appendix:streaming_distinguishability}

For completeness, we give memory-query lower bounds to the max cut and the minimum bisection in the random query model. We show that the max cut size and the minimum bisection size of graphs from the pair of hard distributions are also $2\varepsilon n^2$ apart with high probability. Consequently, our indistinguishability result for the pair of distributions directly implies the lower bounds for both problems.

\begin{theorem}
    \label{thm:streaming_mcut_mbis}
    Let $G=(V,E)$ be an undirected simple graph with $n$ vertices. Let $\mathcal{P}$ be either of the following problems: maximum cut size, or minimal bisection size. Let $\Pi$ be any randomized algorithm that, in the random query model, approximates the size of $\mathcal{P}$ on $G$ to within an additive error of $\varepsilon n^2$ with probability at least $99/100$. For this algorithm, if the worst-case query complexity is $q$ and the space used is at most $\gamma \sqrt n$ bits, then the following lower bound holds:
    $$q=\begin{cases}
        \Omega\left(\min\left(\frac{n}{\varepsilon^2\sqrt{\gamma}}, \frac{n\sqrt n}{\gamma}\right)\right)&\textnormal{if }\gamma<1\\
        \Omega\left(\frac{n}{\varepsilon^2}\right)&\textnormal{if }\gamma\ge 1
    \end{cases}$$
    for parameters $\varepsilon\in \left(\omega\left(\frac{1}{\sqrt n}\right),0.05\right)$ and $\gamma>\omega\left(\frac{\log n}{\sqrt n}\right)$.
\end{theorem}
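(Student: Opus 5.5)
The plan is to reuse the entire indistinguishability machinery built for Theorem~\ref{thm:streaming_main}. Nothing in that proof depended on correlation clustering except Lemma~\ref{lem:streaming_cc_cost}, which separates the optimal values on $\mcalg^Y$ and $\mcalg^N$. The rest — Corollary~\ref{cor:lb_randomquery_benchmark}, Lemma~\ref{lem:streaming_graph_main}, Lemma~\ref{lem:streaming_pdbhp}, and the hybrid argument over $\mcald_{t,0},\dots,\mcald_{t,k}$ — only shows that no algorithm with the stated memory and query budget can distinguish $\mcald^Y_t$ from $\mcald^N_t$ with constant advantage. So it suffices to prove an analogue of Lemma~\ref{lem:streaming_cc_cost} for max cut size and minimum bisection size. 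The analogue should show that, with probability $1-n^{-\omega(1)}$, the optimal values under the two distributions differ by more than $2\varepsilon n^2 + o(\varepsilon n^2)$. Then thresholding the algorithm's output at the midpoint yields a distinguisher $\Pi'$, exactly as in the proof of Theorem~\ref{thm:streaming_main}, and the same contradiction follows in both regimes of $\gamma$.

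\textbf{NO distribution.} For $G\leftarrow\mcalg^N=\mcalg_{n,1/2}$, every fixed cut $P$ has $\cutg(P)$ equal to a sum of $|P|(n-|P|)\le n^2/4$ independent $\mathrm{Bern}(1/2)$ variables. Chernoff plus a union bound over the $2^n$ cuts gives, with probability $1-e^{-\Omega(n\log n)}$,
\[
\max_P \cutg(P)\le n^2/8+O(n^{3/2}\log n), \qquad \min_{|P|=n/2}\cutg(P)\ge n^2/8-O(n^{3/2}\log n).
\]

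\textbf{YES distribution.} Let $P$ be the hidden partition. By Chernoff, $\big||P|-n/2\big|\le\sqrt n\log n$ with high probability. The cut $P$ itself then contains $|P|(n-|P|)= n^2/4-O(n\log^2 n)$ pairs, each an edge with probability $1/2-\rho$. Hence $\cutg(P)\approx n^2/8-\rho n^2/4 = n^2/8-2.5\varepsilon n^2$, up to $\tilde O(n)$ fluctuation. For minimum bisection, I rebalance $P$ by moving at most $\sqrt n\log n$ vertices; this changes the cut by at most $O(n^{1.5}\log n)$, which is $o(\varepsilon n^2)$ since $\varepsilon=\omega(1/\sqrt n)$. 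So the min bisection is at most $n^2/8-2.5\varepsilon n^2+o(\varepsilon n^2)$, separated from the NO value by about $2.5\varepsilon n^2>2\varepsilon n^2$.

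For max cut I use the complementary bipartition. Pairs inside the same part are edges with probability $1/2+\rho$, so I need a cut that separates many same-part pairs. Split $V_0$ into halves $V_0',V_0''$ and $V_1$ into halves $V_1',V_1''$, and take the cut $(V_0'\cup V_1',\,V_0''\cup V_1'')$. Its crossing pairs are a quarter same-part-$0$, a quarter same-part-$1$, and half cross-part pairs. The expected gain is then $\rho\,(n^2/8-n^2/8)=0$, which does not work. The fix is instead to compare against $n^2/8$ by taking the max cut in YES to be at least the value of the \emph{complement} of $P$ in a suitably reflected sense. The route I would actually try is cleaner: modify the YES distribution for max cut by flipping the sign, i.e.\ connecting cross-part pairs with probability $1/2+\rho$. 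This is the same PD-BHP instance with the noise bias reversed, and Lemma~\ref{lem:streaming_pdbhp} is symmetric in the sign of $\rho$ because only $(2\rho)^{2|s|}$ appears. Under this flipped distribution $\cutg(P)\ge n^2/8+2.5\varepsilon n^2-\tilde O(n)$, while the NO maximum is $n^2/8+o(\varepsilon n^2)$.

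\textbf{Main obstacle.} The step I expect to be delicate is exactly this max cut sign issue. The original $\mcalg^Y$ is assortative, so its max cut need not exceed the $\mcalg^N$ value by $\Omega(\varepsilon n^2)$; a spectral estimate suggests it does not. I will therefore use the disassortative YES distribution for max cut and the original one for bisection. I also need to check that the same-vector hybrid (Lemma~\ref{lem:streaming_graph_main}, with general Bernoulli parameter $p$) and the PD-BHP step go through verbatim under the flipped sign. Both only use that edge bits are $\mathrm{Bern}(1/2\pm\rho)$ conditioned on $P$, so they should.
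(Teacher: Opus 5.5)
Your proposal is correct and is essentially the paper's proof. Your sign-flipped YES distribution for max cut is exactly the complement of $\mcalg^Y$, which is what the paper feeds the max-cut algorithm (the complement of $\mcalg^N$ is again $\mcalg^N$), and min bisection uses $\mcalg^Y$ with $P$ rebalanced. You also need not re-verify the hybrid and PD-BHP lemmas, since running on the complement only means flipping each answer bit $e_i$, which costs no memory or queries.

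One small repair is needed to cover the full range $\varepsilon=\omega(1/\sqrt n)$. Your error terms of order $n^{3/2}\log n$ (the NO union bound and the crude $s\cdot n$ rebalancing bound) are $o(\varepsilon n^2)$ only if $\varepsilon=\omega(\log n/\sqrt n)$. So use deviation $O(n^{3/2})$ in the union bound over $2^n$ cuts. For the rebalanced bisection, bound its expected cut directly: only its at most $sn/2$ newly separated same-part pairs change probability, costing $O(\rho s n)$. Then apply Chernoff to that single bisection, as the paper does.
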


The lower bound statement and parameters above are the same as in Theorem~\ref{thm:streaming_main}. For the max cut and minimum bisection, the seminal work \cite{goldreich1998property} showed property testing algorithms of query complexity $O(\poly(1/\varepsilon))$ that test the max cut size and the minimum bisection size of a graph. We note that our work does not contradict their results, as we only allow random queries in this model.

We first show that the cost of (the complement of) graphs from $\mcalg^Y$ and $\mcalg^N$ are $2\varepsilon n^2$ apart with high probability.

\begin{lemma}
\label{lem:streaming_mcut_size}
    Let $\varepsilon\in \left(\omega\left(\frac{1}{\sqrt n}\right),0.05\right)$. Then with probability $\ge 1-n^{-\omega(1)}$ a random graph $G$ whose complement graph $\overline G$ is drawn from $\mcalg^Y$ has a max cut size at least $\frac{n^2}{8}+2.5\varepsilon n^2-\tilde O(n)$; and a random graph $G$ whose complement graph is drawn from $\mcalg^N$ has a max cut size at most $\frac{n^2}{8}+O(n\sqrt n)$.
\end{lemma}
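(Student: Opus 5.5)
We will mirror the proof of Lemma~\ref{lem:streaming_cc_cost}, now working with $G$ itself (the complement of a graph from $\mcalg^Y$ or $\mcalg^N$). Complementing flips the edge probabilities. If $\overline G\leftarrow\mcalg^Y$ with hidden partition $P$, then each pair $(v_i,v_j)$ is an edge of $G$ independently with probability $1/2-(-1)^{P_i\oplus P_j}\rho$. So pairs on opposite sides of $P$ are edges of $G$ with probability $1/2+\rho$, and pairs on the same side with probability $1/2-\rho$. If $\overline G\leftarrow\mcalg^N$, then $G$ is again distributed as $\mcalg_{n,1/2}$.

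\textbf{YES case.} We take the cut defined by $P$ itself as a witness. First, by a Chernoff bound on the uniform $P$, with probability $1-n^{-\omega(1)}$ we have $|P|\in[n/2-\sqrt n\log n,\,n/2+\sqrt n\log n]$. Then $|P|(n-|P|)\ge n^2/4-n\log^2 n$. Conditioned on such $P$, $\cutg(P)$ is a sum of $|P|(n-|P|)$ independent $\mathrm{Bern}(1/2+\rho)$ indicators. Its mean is $(1/2+10\varepsilon)(n^2/4-\tilde O(n))=n^2/8+2.5\varepsilon n^2-\tilde O(n)$. A Chernoff bound with deviation $n\log n$ gives failure probability $\exp(-\Omega(\log^2 n))=n^{-\omega(1)}$, exactly as in Lemma~\ref{lem:streaming_cc_cost}. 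Hence $\max_{P'}\cutg(P')\ge\cutg(P)\ge n^2/8+2.5\varepsilon n^2-\tilde O(n)$.

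\textbf{NO case.} Fix any $P'\in\{0,1\}^n$ with $|P'|=s$. Then $\cutg(P')$ is a sum of $s(n-s)\le n^2/4$ independent $\mathrm{Bern}(1/2)$ variables, with mean at most $n^2/8$. Hoeffding's inequality with deviation $\lambda=n^{3/2}$ gives
$$\Pr\bigl[\cutg(P')\ge n^2/8+n^{3/2}\bigr]\le\exp\bigl(-2n^3/(n^2/4)\bigr)=\exp(-8n).$$
A union bound over all $2^n$ cuts gives total failure probability at most $2^n e^{-8n}=e^{-\Omega(n)}$. So the max cut is at most $n^2/8+O(n\sqrt n)$ with probability $1-n^{-\omega(1)}$.

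The only care needed is in the NO case: the union bound must be over $2^n$ cuts, so the deviation must be of order $n^{3/2}$ rather than $n\log n$. A deviation of $n^{3/2}$ (or even $Cn^{3/2}$) comfortably beats the $2^n$ factor. The YES case requires no union bound, because it uses the single witness cut $P$. This makes it routine.
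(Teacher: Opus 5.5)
Your proof is correct and follows the paper's argument. In the YES case, the planted partition $P$ serves as the witness cut: $|P|$ is balanced, and a concentration bound with deviation $n\log n$ controls $\cutg(P)$. In the NO case, a per-cut tail bound with deviation $n^{3/2}$ is union-bounded over all $2^n$ cuts. Your use of Hoeffding there is a slight streamlining, since it avoids the paper's separate handling of cuts whose expected size is below $n^{3/2}$, which the paper needs only because it uses a multiplicative Chernoff bound.
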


\begin{proof}
    We bound $\mcalg^N$ first. Fix an arbitrary cut $P\in\{0,1\}^n$. Let $X$ denote the cut size of $P$ where $\mbbe[X]=|P|\cdot (n-|P|)/2$. By Chernoff bound, when $\mbbe[X]\ge n\sqrt n$,
    $$\Pr_{\overline G\leftarrow \mcalg^N}[|X-\mbbe[X]|\ge n\sqrt n]\le 2\cdot \exp\left(-\frac{n^3}{3\cdot \mbbe[X]}\right)\le 2\cdot \exp(-8n/3)$$
    We may ignore cuts with $\mbbe[X]<n^{3/2}$ since the cut size of those cuts will not exceed $2n\sqrt n$.
    
    By union bound over all cuts $P$, the probability that the cut size of every cut is at most $\le \mbbe[X]+n\sqrt n\le n^2/8+n\sqrt n$ is $\ge 1-\exp(-\Omega(n))$.

    Now we turn to the case $\mcalg^Y$. We show that the underlying cut $P$ itself has a large cut size with high probability. First of all, by Chernoff bound, the probability for a random partition $P$ that $$n/2-\sqrt{n}\log n\le |P|\le n/2+\sqrt n\log n$$ is at least $1-2\exp(\frac23\log^2n)=1-n^{-\omega(1)}$.
    Still let $X$ denote the cut size of $P$, where 
    $$\mbbe[X]=|P|\cdot (n-|P|)\cdot (\frac12+\rho)\ge (\frac{n^2}{4}-n\log^2n)\cdot (\frac12+\rho)=(\frac18+\frac 52\varepsilon)n^2-\tilde O(n)$$
    and
    $$\mbbe[X]\le(\frac18+\frac 52\varepsilon)n^2\le \frac14 n^2$$
    Again, by Chernoff bound,
    $$\Pr_{\overline G\leftarrow \mcalg^Y}[|X-\mbbe[X]|\ge n\log n]\le 2\cdot \exp\left(-\frac{n^2\log ^2 n}{3\cdot \mbbe[X]}\right)\le 2\cdot \exp(-\frac 43\log ^2 n)$$
    The max cut has cut size at least $n^2/8+2.5\varepsilon n^2- \tilde O(n)$ with probability $\ge 1-n^{-\omega(1)}$.
\end{proof}

\begin{lemma}
\label{lem:streaming_mbisec_size}
    Let $\varepsilon\in \left(\omega\left(\frac{1}{\sqrt n}\right),0.05\right)$. Then with probability $\ge 1-n^{-\omega(1)}$ a random graph $G$ drawn from $\mcalg^Y$ has minimum bisection size at most $\frac{n^2}{8}-2.5\varepsilon n^2+\tilde O(n\sqrt n)$; and a random graph $G$ drawn from $\mcalg^N$ has minimum bisection size at least $\frac{n^2}{8}-O(n\sqrt n)$.
\end{lemma}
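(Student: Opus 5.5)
We mirror the proof of Lemma~\ref{lem:streaming_mcut_size}, but now on $G$ itself rather than its complement, using the minimum bisection instead of the maximum cut. There are two parts: an upper bound for $\mcalg^Y$ obtained by exhibiting one good bisection near the planted partition $P$, and a lower bound for $\mcalg^N$ obtained by a union bound over all bisections.

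\emph{The YES case.} Draw $P$ uniformly. By the Chernoff bound, with probability $1-n^{-\omega(1)}$ we have $\bigl||P|-n/2\bigr|\le \sqrt n\log n$.
\begin{enumerate}
\item Turn $P$ into a bisection $P'$ by moving at most $\sqrt n\log n$ vertices from the larger side to the smaller. This choice depends only on $P$, not on $G$, so the edges remain independent given $P'$.
\item Pairs crossing $P'$ that were also crossing $P$ are edges independently with probability $1/2-\rho$. There are at least $n^2/4-O(n^{1.5}\log n)$ of them.
\item The remaining crossing pairs are those involving a moved vertex. There are at most $\sqrt n\log n\cdot n=\tilde O(n^{1.5})$ of them, and each contributes at most $1$.
\end{enumerate}
This gives $\mbbe[\cutg(P')]\le (n^2/4)(1/2-10\varepsilon)+\tilde O(n^{1.5})=n^2/8-2.5\varepsilon n^2+\tilde O(n^{1.5})$. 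A Chernoff bound with deviation $n\log n$ gives failure probability $\exp(-\Omega(\log^2 n))=n^{-\omega(1)}$. Hence the minimum bisection size is at most $n^2/8-2.5\varepsilon n^2+\tilde O(n\sqrt n)$. The $\tilde O(n\sqrt n)$ slack in the statement is exactly what absorbs the rebalancing error.

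\emph{The NO case.} Fix any bisection $P$. Its cut size is a sum of $n^2/4$ independent $\mathrm{Bern}(1/2)$ variables, so its mean is $n^2/8$.
\begin{enumerate}
\item Hoeffding's inequality with deviation $c\,n^{1.5}$ gives failure probability $2\exp(-2c^2 n^3/(n^2/4))=2\exp(-8c^2 n)$.
\item There are at most $2^n$ bisections. Taking $c$ a sufficiently large constant, a union bound shows that with probability $1-\exp(-\Omega(n))$ every bisection has cut size at least $n^2/8-O(n\sqrt n)$.
\end{enumerate}

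The only delicate step is the rebalancing in the YES case. One must check that it costs only $\tilde O(n^{1.5})$, which is $o(\varepsilon n^2)$ because $\varepsilon=\omega(1/\sqrt n)$, and that it preserves the independence needed for concentration. Both follow because $P'$ is a deterministic function of $P$.
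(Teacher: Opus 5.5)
Your proposal is correct and follows the paper's proof essentially step for step: you rebalance the planted partition $P$ into a bisection $P'$ at Hamming distance at most $\sqrt n\log n$, bound the expected cut of $P'$, and apply a concentration bound, and in the NO case you combine a tail bound with a union bound over at most $2^n$ cuts. One small refinement is worth noting. The paper charges each re-sided crossing pair only the excess probability $2\rho$, obtaining $\frac{n^2}{4}(\frac12-\rho)+2\rho\cdot d_H(P,P')\cdot\frac n2$, so its rebalancing error is $O(\varepsilon n^{1.5}\log n)=o(\varepsilon n^2)$ unconditionally. Your charge of $1$ per pair gives $O(n^{1.5}\log n)$, which suffices for the lemma as stated, but your closing claim that this error is $o(\varepsilon n^2)$ whenever $\varepsilon=\omega(1/\sqrt n)$ is off by a logarithmic factor.
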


\begin{proof}
    Bounding the probability that random graphs $G$ drawn from $\mcalg^N$ has minimum bisection size at least $\frac{n^2}{8}-O(n\sqrt n)$ has the same proof as in Lemma~\ref{lem:streaming_mcut_size}.

    Given a graph $G$ drawn from $\mcalg^Y$. By Chernoff bound, the probability that $$n/2-\sqrt{n}\log n\le |P|\le n/2+\sqrt n\log n$$ is at least $1-2\exp(\frac23\log^2n)=1-n^{-\omega(1)}$. Assume $P$ to be such a balanced partition, there exists a bisection $P'$ where $|P'|=n/2$ and $d_H(P,P')\le \sqrt n\log n$. Let $X$ denote the bisection size of $P'$. Then
    $$\mbbe[X]\le \frac n2\cdot \frac n2\cdot (1/2-\rho)+2\rho\cdot d_H(P,P')\cdot \frac n2\le \frac{n^2}{8}-\frac 52\varepsilon n^2+20\varepsilon n\sqrt n\log n$$
    where $d_H(P,P')\cdot \frac n2$ is the maximum number of pairs of vertices across the bisection $P'$ that connects with probability $1/2+\rho$.
    
    By Chernoff bound,
    $$\Pr_{G\leftarrow \mcalg^Y}[|X-\mbbe[X]|\ge n\log n]\le 2\cdot \exp\left(-\frac{n^2\log ^2 n}{3\cdot \mbbe[X]}\right)\le 2\cdot \exp\left(-\frac{8}{3}\log ^2 n\right)$$
\end{proof}

\begin{proof}[Proof to Theorem~\ref{thm:streaming_mcut_mbis}]
    The proof is almost the same as in Theorem~\ref{thm:streaming_main}. Given an approximation algorithm for the max cut or the minimum bisection with additive error $\le \varepsilon n^2$, by Lemma~\ref{lem:streaming_mcut_size} and Lemma~\ref{lem:streaming_mbisec_size}, one can construct a distinguisher for $\mcalg^Y$ and $\mcalg^N$ with the same space complexity and query complexity. Following the lower bound to the distinguishability between $\mcalg^Y$ and $\mcalg^N$ as in Theorem~\ref{thm:streaming_main}, we obtain the same memory-query lower bound. Analogously, the case of $\gamma\ge 1$ directly follows from Corollary~\ref{cor:lb_randomquery_benchmark}.
\end{proof}

\section{Proof to the generalized Fano's inequality}
\label{appendix:strong_fano}

\begin{namedtheorem}[Lemma~\ref{lem:strong_fano}]
    Let $X\rightarrow Y\rightarrow \tilde X$ be a Markov chain, where $X,\tilde X\in \mathcal X$. Let $\mcala_X\subseteq \mcalx$ be the set of approximations of $X$ where (i) $X\in \mcala_X$, for each $X\in \mcalx$; and (ii) $\tilde X\in \mcala_X$ iff $X\in \mcala_{\tilde X}$, for every $X,\tilde X\in \mcalx$. Let $p_e=\Pr[\tilde X\not\in \mcala_X]$, and $H(p_e)$ be the binary entropy function evaluated at $p_e$. Then
    $$p_e\cdot \log(|\mcalx|)+(1-p_e)\cdot \sup_{X}\log(|\mcala_X|)+H(p_e)\ge H(X|Y)$$

    Specifically, let $\mcalx'\subseteq \mcalx$ denote the support of $\mcalx$. Then
    $$p_e\cdot \log(|\mcalx'|)+(1-p_e)\cdot \sup_{X\in \mcalx}\log(|\mcala_X\cap \mcalx'|)+H(p_e)\ge H(X|Y)$$
\end{namedtheorem}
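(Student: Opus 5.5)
We follow the standard proof of Fano's inequality, introducing an error indicator and then bounding the conditional entropy of $X$ separately on the two values of that indicator. Let $E=\mbbone[\tilde X\notin \mcala_X]$, so $\Pr[E=1]=p_e$.

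\textbf{Chain rule.} Expanding $H(E,X\mid Y)$ in two ways gives
\[
H(X\mid Y)+H(E\mid X,Y)=H(E\mid Y)+H(X\mid E,Y).
\]
Since $\tilde X$ is a function of $Y$ together with independent randomness, we first handle the case where $\tilde X$ is a deterministic function of $Y$. In that case $E$ is determined by $(X,Y)$, so $H(E\mid X,Y)=0$. For randomized $\tilde X$, we instead condition on $(Y,\tilde X)$. This is legitimate because $H(X\mid Y)=H(X\mid Y,\tilde X)$ by the Markov property, and then $E$ is a function of $(X,\tilde X)$. Hence
\[
H(X\mid Y)\le H(E)+H(X\mid E,Y,\tilde X)\le H(p_e)+p_e\,H(X\mid E=1,Y,\tilde X)+(1-p_e)\,H(X\mid E=0,Y,\tilde X).
\]

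\textbf{The error branch.} When $E=1$, $X$ still lies in the support $\mcalx'\subseteq\mcalx$. So this term is at most $\log|\mcalx'|\le\log|\mcalx|$.

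\textbf{The correct branch.} When $E=0$ and $\tilde X=\tilde x$ is fixed, we have $\tilde x\in\mcala_X$. By the symmetry hypothesis (ii), this is equivalent to $X\in\mcala_{\tilde x}$. Combined with $X\in\mcalx'$, the conditional support of $X$ is contained in $\mcala_{\tilde x}\cap\mcalx'$. Its entropy is therefore at most $\log|\mcala_{\tilde x}\cap\mcalx'|$, which is at most $\sup_{x\in\mcalx}\log|\mcala_x\cap\mcalx'|$. Substituting both branch bounds gives the second inequality; dropping the intersection with $\mcalx'$ gives the first.

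The step needing the most care is conditioning on $\tilde X$ rather than only on $Y$. This is exactly where symmetry (ii) is used: it converts ``the output is an approximation of $X$'' into ``$X$ lies in a set determined by the output,'' whose size is controlled uniformly over outputs. It also explains why the supremum ranges over all $x\in\mcalx$, since $\tilde X$ may fall outside the support $\mcalx'$. Hypothesis (i) is not needed for the inequality; it only ensures the notion is sensible. No other obstacles are expected.
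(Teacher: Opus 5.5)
Your proof is correct and follows the paper's argument: you use the error indicator $E$, the chain-rule split into $H(p_e)+p_e\log|\mathcal{X}'|+(1-p_e)\sup_{x}\log|\mathcal{A}_x\cap\mathcal{X}'|$, and symmetry (ii) to confine $X$ to $\mathcal{A}_{\tilde x}\cap\mathcal{X}'$ on the correct branch. The only cosmetic difference is that you condition on $(Y,\tilde X)$ and get the equality $H(X\mid Y)=H(X\mid Y,\tilde X)$ from the Markov property, whereas the paper conditions on $\tilde X$ alone and uses the data-processing bound $H(X\mid \tilde X)\ge H(X\mid Y)$.
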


\begin{proof}[Proof to Lemma~\ref{lem:strong_fano}]
    Define an indicator random variable $E$ as
    $$E:=\begin{cases}
    1 & \text{if }\tilde X\not\in \mcala_X \\
    0 & \text{if }\tilde X\in \mcala_X
    \end{cases}$$

    Consider the conditional entropy $H(X,E|\tilde X)$ and expand it using the chain rule.
    $$\begin{aligned}
    &H(X,E|\tilde X)\\
    =&H(E|X,\tilde X)+H(X|\tilde X)\\
    =&0+H(X|\tilde X)\\
    =&H(Y,X,\tilde X)-H(\tilde X)-H(Y|X,\tilde X)\\
    =&(H(Y)+H(X|Y)+H(\tilde X|X,Y))-H(\tilde X)-H(Y|X,\tilde X)\\
    =&(H(Y)+H(X|Y)+H(\tilde X|Y))-H(\tilde X)-H(Y|X,\tilde X)\\
    =&H(X|Y)+H(Y|\tilde X)-H(Y|X,\tilde X)\\
    \ge &H(X|Y)
    \end{aligned}$$
    The inequality is because conditioning reduces entropy.
    
    Using the chain rule in a different way
    $$\begin{aligned}
    &H(X|Y)\\
    \le&H(X,E|\tilde X)\\=&H(X|E,\tilde X)+H(E|\tilde X)\\
    =&H(X|E=1,\tilde X)\cdot p_e+H(X|E=0,\tilde X)\cdot (1-p_e)+H(E|\tilde X)\\
    \le&\log(|\mcalx|)\cdot p_e+(1-p_e)\cdot \sup_X\log(|\mcala_X|)+H(p_e)
    \end{aligned}$$
    In the special case where $\textnormal{supp}(X)=\mcalx'$, we have $H(X|E=1,\tilde X)\le \log(|\mcalx'|)$ and $H(X|E=0,\tilde X)\le \sup_X \log(|\mcala_X\cap \mcalx'|)$. The above inequality can be rewritten as
    $$H(X|Y)\le p_e\cdot \log(|\mcalx'|)+(1-p_e)\cdot \sup_{X\in \mcalx}\log(|\mcala_X\cap \mcalx'|)+H(p_e)$$
\end{proof}

\section{Tight query lower bound to the max cut partition and the minimum bisection partition}
\label{appendix:strong_mcut_mbis}

We also give an $\Omega(n/\varepsilon^{2})$ query lower bound to the max cut partition and the minimum bisection partition problems. The proof structure is similar to Theorem~\ref{thm:strong_tight}. This lower bound is tight up to polylogarithmic factors for these two problems. By Chernoff bound, it is known that $\tilde O(n/\varepsilon^{2})$ random edges are sufficient to estimate the cut size of \emph{every} cut to within $\varepsilon n^2$ additive error.

%\sgnote{Multiplicative approximation seems to strong of a guarantee for cuts with constant number of edges. Does sparsifier through random sampling give a multiplicative + additive  error?} 

\begin{theorem}[tight query lower bounds for the max cut and the minimum bisection]
    \label{thm:strong_tight_mcut_mbis}
    Let $\varepsilon\in\left(\omega\left(\sqrt{\frac{\log n}{n}}\right),0.001\right)$, for every randomized algorithm $\Pi$ in the adjacency-matrix query model, if the algorithm outputs a cut (resp., a bisection) with additive error $\le \varepsilon n^2$ compared to the max cut size (resp., the minimum bisection size) with $> 1/100$ probability, then the worst-case query complexity of $\Pi$ is $\Omega(n/\varepsilon^2)$.
\end{theorem}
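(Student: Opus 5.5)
We follow the proof of Theorem~\ref{thm:strong_tight} almost verbatim, reusing the distribution $\mu$ from Definition~\ref{def:strong_input_distribution}. For max cut we apply it to the complement graph, i.e.\ cross pairs are edges with probability $1/2+\rho$, so the hidden partition $P$ is a near-maximum cut. For minimum bisection we use $\mu$ directly, conditioned on (or restricted to) balanced $P$. Lemma~\ref{lem:strong_mutual_information} transfers unchanged, since each query is still a Bernoulli mixture of $\mathrm{Bern}(1/2\pm\rho)$. It gives $I(P;\sigma_q)<0.0025n$ for $q=10^{-8}n/\varepsilon^2$. The Yao/Markov reduction to worst-case deterministic algorithms is also identical. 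The only new ingredient is a structural lemma replacing Lemma~\ref{lem:strong_hamming_dis_cc}.

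\textbf{Structural lemma.} With probability $1-n^{-\Omega(1)}$ over $(P,G)\leftarrow\mu$, every cut (resp.\ bisection) $P'$ with $d_H(P,P')\in[0.1n,0.9n]$ satisfies $\cutg(P')\le\cutg(P)-\varepsilon n^2$ (resp.\ $\ge$ the bisection value near $P$ plus $\varepsilon n^2$). To prove it, fix $P$ with $|P|=n/2\pm\sqrt n\log n$ and a cut $P'$. Write $a_{xy}$ for the number of vertices with $P_v=x$ and $P'_v=y$.

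In the max-cut instance, a pair crossing $P'$ contributes $1/2+\rho$ to the expected cut if it also crosses $P$, and $1/2-\rho$ otherwise. Hence
\[
\mbbe[\cutg(P')]=\tfrac12|P'|(n-|P'|)+\rho\,(\#\text{cross both}-\#\text{cross }P'\text{ only}).
\]
Comparing with $P$, a short computation gives
\[
\mbbe[\cutg(P)-\cutg(P')]\ge 2\rho\bigl(a_{00}a_{11}+a_{01}a_{10}\bigr)\cdot c,
\]
up to lower-order terms and the $|P'|(n-|P'|)\le n^2/4$ slack. When $d_H\in[0.1n,0.9n]$ and $P$ is balanced, both $a_{00}a_{11}+a_{01}a_{10}$ terms are $\Omega(n^2)$ with an explicit constant (about $0.04n^2$). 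With $\rho=100\varepsilon$ this makes the gap at least $2\varepsilon n^2$. Hoeffding over the $\binom n2$ independent pair indicators gives failure probability $\exp(-\Omega(\varepsilon^2n^2))=\exp(-\omega(n\log n))$, and a union bound over $2^n$ cuts finishes. For minimum bisection, the optimum is at most the cost of a bisection within $\sqrt n\log n$ of $P$, as in Lemma~\ref{lem:streaming_mbisec_size}. The same expectation computation, with signs flipped, shows every bisection with $d_H\in[0.1n,0.9n]$ costs at least $2\varepsilon n^2-\tilde O(n^{1.5})$ more.

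\textbf{Applying Fano.} Take $\mcala_P=\{P':d_H(P,P')<0.1n\text{ or }>0.9n\}$. This relation is symmetric and reflexive, and $|\mcala_P|\le 2\sum_{i<0.1n}\binom ni\le 2^{0.47n+1}$. The generalized Fano inequality (Lemma~\ref{lem:strong_fano}) then gives
\[
p_e n+(1-p_e)\,0.47n+H(p_e)\ge 0.9975n,
\]
which forces $p_e\ge 0.99$. By the structural lemma, a successful output lies in $\mcala_P$ except with probability $n^{-\Omega(1)}$, contradicting success probability $>1/100$.

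\textbf{Main obstacle.} The only delicate step is the exact constant bookkeeping in the expected-gap computation. The gap must be uniformly at least $2\varepsilon n^2$ over the whole window $d_H\in[0.1n,0.9n]$, including when $|P'|$ is unbalanced, and unbalanced $P'$ only lose further cut size. If the constant $0.1$ turns out too weak, the fallback is to choose the window threshold as large as still gives $|\mcala_P|\le 2^{(1-\Omega(1))n}$.
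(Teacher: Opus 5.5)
Your structure is the paper's: the same $\mu$ (complemented for max cut), Lemma~\ref{lem:strong_mutual_information} unchanged, the generalized Fano inequality with the same $\mcala_P$, a window lemma for $d_H(P,P')\in[0.1n,0.9n]$, and a near-$P$ bisection to bound the minimum-bisection optimum. However, the one new computation is wrong as written.

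The pairs counted by $a_{00}a_{11}+a_{01}a_{10}$ are those separated by \emph{both} $P$ and $P'$. They have the same edge probability in both cut values and cancel. Two checks show the formula cannot be right. It would give $P'=P$ a gap of order $\rho n^2$. And your claim that this count is $\Omega(n^2)$ on the window fails for $P'=0^n$: there $d_H=|P|\approx n/2$ but $a_{01}=a_{11}=0$.

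Subtracting your own (correct) formula for $\mbbe[\cutg(P')]$ from $\mbbe[\cutg(P)]=(\tfrac12+\rho)|P|(n-|P|)$ gives instead
\[
\mbbe[\cutg(P)-\cutg(P')]=\tfrac12\bigl(|P|(n-|P|)-|P'|(n-|P'|)\bigr)+\rho\,(a_{00}+a_{11})(a_{01}+a_{10}).
\]
The last product equals $d_H(n-d_H)$; the relevant pairs are those separated by \emph{exactly one} of the two cuts. For balanced $P$, the first term is at least $-n\log^2 n$ for every $P'$. The second term is at least $0.09\rho n^2=9\varepsilon n^2$ on the whole window. For bisections under $\mu$, the same bookkeeping gives $\mbbe[\cutg(P')-\cutg(P)]=\tfrac12\bigl(n^2/4-|P|(n-|P|)\bigr)+\rho\,d_H(n-d_H)\ge 9\varepsilon n^2$. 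So the threshold $0.1$ works with room to spare, no fallback is needed, and this is exactly the paper's computation in Lemma~\ref{lem:strong_hamming_dis_mcut}.

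Smaller points:
\begin{itemize}
\item The near-$P$ bisection exceeds $\cutg(P)$ in expectation by only $O(\varepsilon n^{1.5}\log n+n\log^2 n)$. It is this form, not a bare $\tilde O(n^{1.5})$, that is $o(\varepsilon n^2)$ throughout the range $\varepsilon=\omega(\sqrt{\log n/n})$.
\item Your Fano inequality actually yields $p_e\ge 0.995$, and you need that margin. With only $p_e\ge 0.99$, adding the $n^{-\Omega(1)}$ failure probability of the structural lemma does not contradict success probability $>1/100$.
\end{itemize}

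Finally, comparing only bisections in the minimum-bisection case is the right choice. The paper states Lemma~\ref{lem:strong_hamming_dis_mcut} for all $P'$ under $\mu$, which overreaches. Its expansion contains $-\tfrac12(|S_{1,1}|-|S_{0,0}|)^2$, which is not lower order, and $P'=0^n$ has an empty cut. What is actually needed, and true, is the bisection version together with the complemented max-cut version — exactly your split.
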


We reuse the same input distribution defined in Definition~\ref{def:strong_input_distribution}. By Lemma~\ref{lem:strong_mutual_information}, we know that without paying $\Omega(n/\varepsilon^{2})$ queries, one cannot learn $\ge 0.0025n$ information about the underlying partition $P$. We prove below analogs to Lemma~\ref{lem:strong_hamming_dis_cc}, that only cuts/bisections close to the underlying partition $P$ have low additive error.

\begin{lemma}
    \label{lem:strong_hamming_dis_mcut}
    Let $(P,G)\leftarrow \mu$. With probability $\ge 1-n^{-\Omega(1)}$ every partition $P'\in\{0,1\}^n$ such that $d_H(P,P')\in[0.1n,0.9n]$ has
    $$\cutg(P')> \cutg(P)+1.01\varepsilon n^2$$
\end{lemma}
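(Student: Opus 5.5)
The plan is a first-moment computation followed by Hoeffding's inequality (Proposition~\ref{prop:hoeffding}) and a union bound over all $2^n$ partitions $P'$, in the spirit of Lemma~\ref{lem:strong_hamming_dis_cc}. First I condition on $P$ being balanced, $|s_0-s_1|\le 2\sqrt n\log n$ where $s_a=|\{i:P_i=a\}|$. By the Chernoff bound this fails with probability $n^{-\omega(1)}$.

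Fix $P'$ and let $D=\{i:P_i\ne P'_i\}$, with $d=|D|\in[0.1n,0.9n]$. Split $D$ into $D_0=D\cap\{P=0\}$ and $D_1=D\cap\{P=1\}$, of sizes $x$ and $y$. A pair changes its crossing status between $P$ and $P'$ exactly when one endpoint lies in $D$ and the other does not. So
\[
\cutg(P')-\cutg(P)=\sum_{\text{pairs in }(D_0\times S_0)\cup(D_1\times S_1)}\mbbone_{E}-\sum_{\text{pairs in }(D_0\times S_1)\cup(D_1\times S_0)}\mbbone_{E},
\]
where $S_a=\{P=a\}\setminus D$. Pairs in the first sum are same-side in $P$ and appear with probability $1/2+\rho$; pairs in the second sum appear with probability $1/2-\rho$. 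Write $A=x(s_0-x)+y(s_1-y)$ and $B=x(s_1-y)+y(s_0-x)$ for the sizes of the two sums. Then
\[
\mbbe[\cutg(P')-\cutg(P)]=\tfrac12(A-B)+\rho(A+B),\qquad A+B=d(n-d)\ge 0.09n^2,
\]
\[
A-B=(x-y)(s_0-s_1)-(x-y)^2 .
\]
The $\rho$-term contributes $0.09\rho n^2=9\varepsilon n^2$, and $(x-y)(s_0-s_1)=O(n^{1.5}\log n)=o(\varepsilon n^2)$ since $\varepsilon=\omega(\sqrt{\log n/n})$.

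The main obstacle is the term $-(x-y)^2/2$. It measures the imbalance $|P'|-|P|=x-y$, and it can dominate. For instance, $P'=1^n$ has $d_H(P,P')\approx n/2$ but cut size $0$. So the displayed inequality cannot hold for all $P'$ as literally stated, and the proof must exploit balance of $P'$. I would take the hypothesis to carry $\big||P'|-n/2\big|\le \sqrt{\varepsilon}\,n$, which covers bisections and is the regime in which the lemma is applied. Under it, $|x-y|\le |P'|-|P|$ in absolute value is at most $2\sqrt\varepsilon n$, so $(x-y)^2/2\le 2\varepsilon n^2$, and the expectation is at least $(9-2-o(1))\varepsilon n^2\ge 6.9\varepsilon n^2$.

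For concentration, $\cutg(P')-\cutg(P)$ is a sum of at most $\binom n2$ independent terms in $\{-1,0,1\}$. Hoeffding gives
\[
\Pr\big[\cutg(P')-\cutg(P)\le 1.01\varepsilon n^2\big]\le 2\exp(-\Omega(\varepsilon^2n^2))=\exp(-\omega(n\log n)).
\]
A union bound over all $2^n$ choices of $P'$ then finishes the proof, with the failure probability dominated by the $n^{-\Omega(1)}$ balance event for $P$.
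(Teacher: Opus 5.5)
Your diagnosis is correct. Your argument is the paper's own argument: split the vertices by $P$ and $P'$, compute the first moment, apply Hoeffding, and take a union bound over $2^n$ cuts. The difference is that you handle correctly the one step the paper gets wrong.

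The paper writes the expectation as $\tfrac12(|S_{1,1}|-|S_{0,0}|)(n-2s+|S_{0,0}|-|S_{1,1}|)+\rho(|S_{0,0}|+|S_{1,1}|)(n-|S_{0,0}|-|S_{1,1}|)$. It then bounds the first product below by $-n\sqrt n\log n$, using only $|n-2s|\le 2\sqrt n\log n$. This silently discards $-\tfrac12(|S_{1,1}|-|S_{0,0}|)^2$. In your notation $|S_{1,1}|-|S_{0,0}|=(x-y)+(s_1-s_0)$, so the discarded term is your imbalance term $-(x-y)^2/2$ up to lower-order corrections. The lemma is therefore false as stated, and your example $P'=1^n$ (at distance $s_0\approx n/2$ from $P$, with cut size $0$) is a valid counterexample.

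Two remarks on your repair.

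First, your balance hypothesis covers the minimum-bisection application. There $P'$ is a bisection, so $x-y=n/2-|P|=O(\sqrt n\log n)$ and the extra term is only $O(n\log^2 n)$. It does not cover the max-cut application, contrary to your remark, because there the paper invokes the lemma through the complement for arbitrary cuts $P'$. In that case the right statement lives in $\overline G$ and needs no balance assumption:
\begin{itemize}
\item $\textnormal{cut}_{\overline G}(P')=|P'|(n-|P'|)-\cutg(P')$ and $|P'|(n-|P'|)-s_0s_1=A-B$.
\item Hence $\mbbe[\textnormal{cut}_{\overline G}(P')-\textnormal{cut}_{\overline G}(P)]=\tfrac12(A-B)-\rho(A+B)\le (s_0-s_1)^2/8-9\varepsilon n^2$.
\end{itemize}
Here the square term has the favourable sign. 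The same Hoeffding and union-bound step then works for every $P'$ with $d_H(P,P')\in[0.1n,0.9n]$.

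Second, you justify $(x-y)(s_0-s_1)=O(n^{1.5}\log n)=o(\varepsilon n^2)$, but that needs $\varepsilon=\omega(\log n/\sqrt n)$, which the assumption $\varepsilon=\omega(\sqrt{\log n/n})$ does not give. Either fix works:
\begin{itemize}
\item Use your own hypothesis $|x-y|\le 2\sqrt\varepsilon\, n$, which gives $O(\sqrt\varepsilon\, n^{1.5}\log n)=o(\varepsilon n^2)$.
\item Tighten the balance event for $P$ to $|s_0-s_1|=O(\sqrt{n\log n})$, which still holds with probability $1-n^{-\Omega(1)}$.
\end{itemize}
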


We will use $\mu$ as the input distribution of the minimum bisection, and the complement of graphs from $\mu$ as the input distribution of the max cut. Therefore, this lemma works for both problems.

\begin{proof}
    By Chernoff bound, the Hamming weight of $P$ falls in $[n/2-\sqrt n\log n,n/2+\sqrt n\log n]$ with probability $\ge 1-n^{-\Omega(1)}$. We fix such a $P$ with bounded Hamming weight.

    Fix an arbitrary partition $P'$ where $d_H(P,P')\in[0.5n,0.9n]$. We may ignore those partitions $P'$ for which $d_H(P,P')< 0.5n$ because $P'$ and $P'\oplus 1^n$ denote the same partition, and $d_H(P,P')=n-d_H(P,P'\oplus 1^n)$.

    For every $a,b\in\{0,1\}$, we denote $S_{a,b}$ as the subset of vertices $\{v_i:P_i=a,P'_i=b\}$. Let $s=|S_{0,0}|+|S_{0,1}|$ which is exactly $n$ minus the Hamming weight of $P$. We have
    $$\begin{aligned}
        &\mbbe_{G}[\cutg(P')-\cutg(P)]\\
        =&\mbbe_{G}\left[\sum_{a=0}^1|\{(v_i,v_j)\in E: v_i\in S_{a,0}, v_j\in S_{a,1}\}|-\sum_{b=0}^1|\{(v_i,v_j)\in E:v_i\in S_{0,b},v_j\in S_{1,b}\}|\right]\\
        =&(|S_{0,0}||S_{0,1}|+|S_{1,0}||S_{1,1}|)(1/2+\rho)-(|S_{0,0}||S_{1,0}|+|S_{0,1}||S_{1,1}|)(1/2-\rho)\\
        =&\frac12(|S_{1,1}|-|S_{0,0}|)(|S_{1,0}|-|S_{0,1}|)+\rho(|S_{0,0}|+|S_{1,1}|)(|S_{1,0}|+|S_{0,1}|)\\
        =&\frac12(|S_{1,1}|-|S_{0,0}|)(n-2s+|S_{0,0}|-|S_{1,1}|)+\rho(|S_{0,0}|+|S_{1,1}|)(n-|S_{0,0}|-|S_{1,1}|)\\
        \ge& -n\sqrt n\log n+0.09\rho n^2\\
        =&9\varepsilon n^2+\tilde O(n\sqrt n)
    \end{aligned}$$
    The inequality is derived from the fact that $n-2s\in[-2\sqrt n\log n, 2\sqrt n\log n]$ and $|S_{0,0}|+|S_{1,1}|=n-d_H(P,P')\in [0.1n,0.5n]$.

    Since the edge set of the graph $G$, given a fixed $P$, follows a product Bernoulli distribution on its edges, we are able to apply Hoeffding's inequality on the sum of edges:
    $$\Pr_{G}[\cutg(P')-\cutg(P)\le  1.01\varepsilon n^2]\le2\exp\left(-\frac{2\cdot 7.99^2\varepsilon^2 n^4}{\binom n2}\right)=  \exp(-\omega(n))$$
    where we used the fact that $\varepsilon=\omega(1/\sqrt n)$.
    
    Applying union bound over all $\le 2^n$ partitions $P$, the probability that a random graph $G$ from $\mu$ has a partition $P'$ such that $d_H(P,P')\in[ 0.1n,0.9n]$ and $\cutg(P')\le \cutg(P)-1.01\varepsilon n^2$ is at most
    $$n^{-\Omega(1)}+\exp(n)\cdot \exp(-\omega(n))=n^{-\Omega(1)}$$
\end{proof}

Regarding the minimum bisection, the previous lemma only compares the cut size of the underlying partition $P$ with other cuts. However, the partition $P$ is not necessarily a bisection. We address this by showing that the minimum bisection is close to the cut size of $P$ with high probability.

\begin{fact}
    \label{fact:strong_mbis_gap}
    Let $(P,G)\leftarrow \mu$. With probability $\ge 1-n^{-\Omega(1)}$ the minimum bisection of $G$ is at most $\cutg(P)+0.001\varepsilon n^{2}$
\end{fact}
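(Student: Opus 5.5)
The plan is to exhibit one explicit bisection $P'$ close to $P$ and show that, with high probability, its cut exceeds $\cutg(P)$ by only $\tilde O(n^{1.5})$. Since $\varepsilon=\omega(\sqrt{\log n/n})$, we have $0.001\varepsilon n^2=\omega(n^{1.5}\sqrt{\log n})$, so a bound of the form $O(n^{1.5}\log n)$ is not quite enough. Instead I will first get $\big||P|-n/2\big|\le \sqrt{n\log n}$ and then aim for a difference of $O(n^{1.5}\sqrt{\log n})$, which is $o(\varepsilon n^2)$.

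\textbf{Step 1 (balanced $P$).} By Chernoff, with probability $1-n^{-\Omega(1)}$ we have $\big||P|-n/2\big|\le d:=\sqrt{n\log n}$. (Note that $2\exp(-2d^2/n)=2n^{-2}$.)

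\textbf{Step 2 (fix $P'$ before seeing $G$).} Fix such a $P$. Define $P'$ by flipping exactly $\big||P|-n/2\big|$ coordinates of the larger side, namely the lowest-indexed ones. Then $P'$ is a bisection with $d_H(P,P')\le d$, and $P'$ depends only on $P$, not on the edges.

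\textbf{Step 3 (bound the cut difference).} Let $F$ be the set of flipped vertices. The only pairs whose crossing status changes are those with exactly one endpoint in $F$, so at most $|F|\cdot n\le dn$ pairs are affected. Each affected pair changes the cut by at most $1$. Hence, deterministically,
$$\cutg(P')\le \cutg(P)+dn=\cutg(P)+n^{1.5}\sqrt{\log n}.$$
This bound needs no concentration at all.

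\textbf{Step 4 (conclude).} Since $\varepsilon n^2=\omega(n^{1.5}\sqrt{\log n})$, we get $n^{1.5}\sqrt{\log n}\le 0.001\varepsilon n^2$ for large $n$. The minimum bisection is at most $\cutg(P')$, which gives the Fact with probability $1-n^{-\Omega(1)}$.

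The only delicate point is the quantitative balance. The crude deviation $\sqrt n\log n$ used elsewhere in the paper would give $n^{1.5}\log n$, which is not $o(\varepsilon n^2)$ at the lower end of the range of $\varepsilon$. This is why Step 1 tightens the deviation to $\sqrt{n\log n}$.

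If one instead wanted an expectation-based bound as in Lemma~\ref{lem:streaming_mbisec_size}, the flipped pairs contribute in expectation a change of about $2\rho\cdot dn/2$ plus a mean-zero part. Hoeffding over the $\le dn$ affected pairs would give the same order. The deterministic bound suffices, so I would use it.
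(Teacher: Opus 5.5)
Your proof is correct, but it takes a different route from the paper's. The paper uses the same kind of bisection $P'$: flip $s=\bigl||P|-n/2\bigr|$ vertices of the larger side. Instead of a worst-case count, however, it computes $\mbbe_G[\cutg(P')-\cutg(P)]=\rho ns+s^2/2-\rho s^2$ and then applies Hoeffding's inequality over the $s(n-s)$ affected pairs. The point of that computation is a near-cancellation. Pairs that start crossing have edge probability $1/2+\rho$, and pairs that stop crossing have edge probability $1/2-\rho$. So the main term $\rho ns$ is only a $\rho$-fraction of your deterministic bound $sn$, and the change is $\tilde O(\varepsilon n^{1.5})$. This slack is what lets the paper work with the looser window $s\le\sqrt n\log n$.

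Your argument buys simplicity: once $P$ is balanced, the inequality holds for every graph $G$, with no concentration step over the edges. The price is that it has essentially no room to spare. As you correctly observe, it needs both the sharper window $\sqrt{n\log n}$ and the full hypothesis $\varepsilon=\omega(\sqrt{\log n/n})$, and it would fail with the paper's window.

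One small correction to your closing remark: the expectation-plus-Hoeffding route does not give ``the same order'' as the deterministic bound. With your $d=\sqrt{n\log n}$ it gives $O(\varepsilon n^{1.5}\sqrt{\log n}+n\log n)$ with high probability, and that improvement is exactly what the paper exploits.
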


\begin{proof}
    As mentioned above, with high probability the Hamming weight of $P$ falls in $[n/2-\sqrt n\log n,n/2+\sqrt n\log n]$. Without loss of generality suppose the Hamming weight of $P$ is $\ge n/2$. Let $s:=|P|-n/2$. We let $P'$ be the bisection obtained by flipping exactly $s$ $1$s in $P$. Then we have
    $$\begin{aligned}
        \mbbe_G[\cutg(P')-\cutg(P)]= &s\cdot \frac n2\cdot (1/2+\rho)-s\cdot (\frac n2-s)(1/2-\rho)\\
        =&ns\rho+s^2/2-\rho s^2\\
        =&\Theta(\varepsilon n\sqrt n\log n+n\log^2 n)\\
        =&\tilde \Theta(\varepsilon n\sqrt n)
    \end{aligned}$$

    Notice that $\cutg(P')-\cutg(P)$ is a sum of independent random variables from $\{-1,0,1\}$.
    By Hoeffding's inequality, we get that the probability $\cutg(P')>\cutg(P)+0.001\varepsilon n^2$ is at most $\exp(-\Omega(\varepsilon^2 n^{5/2}\log^{-1}n))=\exp(-\omega(n))$.
    Applying union bound over all $2^n$ partitions $P$, the probability $\cutg(P')$ is much larger than $\cutg(P)$ is at most $n^{-\Omega(1)}$. Hence the minimum bisection is also small with high probability.
\end{proof}

\begin{proof}[Proof to Theorem~\ref{thm:strong_tight_mcut_mbis}]
    We respectively prove the lower bounds for the max cut and the minimum bisection.

    \paragraph{Lower bound to max cut.} We will prove an $\Omega(n/\varepsilon^{2})$ bound for the max cut when the complement of the input graph is drawn from $\mu$. Let $(P,\overline G)\leftarrow \mu$.
    Let $\mcala\subseteq \{0,1\}^n$ be the set of partitions with cut size at least $\cutg(P)-\varepsilon n^2$.
    Let $\mcala_P=\{P'\in\{0,1\}^n:d_H(P,P')\in[0,0.1n)\cup (0.9n,n]\}$. Then for every $P'$ and $P''$, we have
    $$P'\in \mcala_{P''}\Longleftrightarrow P''\in \mcala_{P'}$$
    In addition, by Lemma~\ref{lem:strong_hamming_dis_mcut}, with high probability $\mcala\subseteq \mcala_P$. If one cannot output an element of $\mcala$ with high probability, it also cannot output an element of $\mcala_P$ with high probability.

    By Yao's minimax principle and by Markov's inequality, we instead prove a $q:=10^{-8}\cdot n/\varepsilon^{2}$ worst-case query lower bound for deterministically approximating the max cut to within error $\varepsilon n^2$ with error probability $\le 0.991$.

    We apply our generalized Fano's inequality. Specifically, we let $\sigma_q$ to denote the query history made by $\Pi'$. We use $Y$ to denote the output of $\Pi'$. Then $P\rightarrow \sigma_q\rightarrow Y$ is a Markov chain. By Lemma~\ref{lem:strong_fano} and the fact that $\binom ab\le (\frac {ea}b)^b$, we have
    $$
    \begin{aligned}
        &p_e\cdot n+(1-p_e)(0.1n\log (10e)+\log n)+H(p_e)\\
        \ge&p_e\cdot n+(1-p_e)\sup_{P'}\log(|\mcala_{P'}|)+H(p_e)\\
        \ge& H(P|\sigma_q)=H(P)-I(P;\sigma_q)=n-I(P;\sigma_q)\\
        \ge& 0.9975n
    \end{aligned}
    $$
    where $p_e$ denotes the error probability of $\Pi$. By solving the inequality, we get 
    $$p_e\ge 0.995>0.991$$
    Hence the query complexity of approximating the max cut to within error $\varepsilon n^2$ is $\Omega(n/\varepsilon^{2})$.

    \paragraph{Lower bound to minimum bisection.} The lower bound to minimum bisection follows a similar proof outline. We let $(P,G)\leftarrow \mu$. Let $\mcala$ be the set of bisections with cut size at most $\varepsilon n^2$ far from the minimum bisection. Still let $\mcala_P=\{P'\in \{0,1\}^n:d_H(P,P')\in [0,0.1n)\cup (0.9n,n]\}$. By Lemma~\ref{lem:strong_hamming_dis_mcut} and Fact~\ref{fact:strong_mbis_gap}, $\mcala\subseteq \mcala_P$ with high probability. The following part of the proof is the same as the proof for max cut. We omit it here.
\end{proof}

\section{Chernoff bound for conditionally bounded bits}
\label{appendix:chernoff_bound}

\begin{lemma}[Chernoff bound for conditionally bounded bits]
    Fix $p\in(0,1)$ and $\delta>0$. Let $X_1,\dots, X_n\in \{0,1\}$ be random variables such that for every $i\in[n]$ and every $a_1,\dots, a_{i-1}\in\{0,1\}$ such that $\sum_{t=1}^{i-1}a_{t}<(1+\delta)\cdot pn$,
    $$\Pr[X_i=1|X_1=a_1,\dots,X_{i-1}=a_{i-1}]\le p$$
    Let $X=\sum_{i=1}^n X_i$. Then
    $$\Pr[X\ge (1+\delta)\cdot pn]\le 2\cdot \exp\left(-\frac{\delta^2}{2+\delta}\cdot pn\right)$$
\end{lemma}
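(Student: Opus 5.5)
We follow the standard exponential-moment proof of the Chernoff bound, after a truncation that turns the conditional hypothesis into a supermartingale statement. Set $m:=(1+\delta)pn$. Define truncated variables $Y_1,\dots,Y_n$ by $Y_i:=X_i$ if $\sum_{t<i}X_t<m$, and $Y_i:=0$ otherwise. Let $Y=\sum_i Y_i$.

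\textbf{Step 1: the event is unchanged by truncation.} We claim that $\{X\ge m\}=\{Y\ge m\}$. Indeed, if $X\ge m$, let $j$ be the first index with $\sum_{t\le j}X_t\ge m$. Then every $i\le j$ has prefix sum $\sum_{t<i}X_t<m$. Hence $Y_i=X_i$ for all $i\le j$, and so $Y\ge \sum_{t\le j}X_t\ge m$. Conversely, $Y\le X$ always. Therefore it suffices to bound $\Pr[Y\ge m]$.

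\textbf{Step 2: the truncated variables are conditionally bounded.} For every history $a_1,\dots,a_{i-1}$ we have
$$\Pr[Y_i=1\mid X_{<i}=a_{<i}]\le p.$$
If $\sum_{t<i} a_t<m$, this is exactly the hypothesis. Otherwise $Y_i=0$ deterministically. Since $Y_{<i}$ is a function of $X_{<i}$, for every $\lambda>0$ we get
$$\mbbe\!\left[e^{\lambda Y_i}\mid X_{<i}\right]\le 1+p(e^\lambda-1)\le \exp\!\big(p(e^\lambda-1)\big).$$

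\textbf{Step 3: exponential moment bound.} Iterating the tower property from $i=n$ down to $1$ gives
$$\mbbe[e^{\lambda Y}]\le \exp\!\big(pn(e^\lambda-1)\big).$$
Apply the generic Chernoff bound (Proposition~\ref{prop:generic_chernoff}) with $\lambda=\ln(1+\delta)$. This yields
$$\Pr[Y\ge m]\le \left(\frac{e^\delta}{(1+\delta)^{1+\delta}}\right)^{pn}.$$
The elementary inequality $(1+\delta)\ln(1+\delta)-\delta\ge \delta^2/(2+\delta)$ for $\delta>0$ then gives the bound $\exp(-\frac{\delta^2}{2+\delta}pn)$, which is even without the factor $2$. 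To verify the inequality, note that both sides vanish at $0$. Comparing derivatives reduces it to $\ln(1+\delta)\ge \frac{\delta(4+\delta)}{(2+\delta)^2}$, which holds by another derivative comparison.

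\textbf{Main obstacle.} The only delicate point is Step 1 together with the conditioning in Step 2. The hypothesis is only available on histories whose prefix sum is below $m$, so we must check two things. First, truncation never hides a crossing of the threshold. Second, conditioning on $X_{<i}$ rather than on $Y_{<i}$ is legitimate. It is, because $Y_{<i}$ is determined by $X_{<i}$, so the tower property applies along the filtration generated by $X$.
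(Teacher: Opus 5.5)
Your proof is correct, but it takes a different route from the paper's.

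\textbf{The paper's route.} The paper uses an explicit coupling. With i.i.d.\ uniforms $U_i$, it realizes the process as $X'_i=\mbbone[U_i\le \Pr[X_i=1\mid X'_{<i}]]$. It builds a dominating sequence $Y'_i$ that equals $\mbbone[U_i\le p]$ until its own prefix sum reaches $k$, and is forced to $1$ afterwards. It shows $X'_i\le Y'_i$ by induction. A first-crossing-time argument then gives $\Pr[Y'\ge k]=\Pr[Y\ge k]$ with $Y\sim\mathrm{Bin}(n,p)$. Finally it cites the binomial Chernoff bound as a black box.

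\textbf{Your route.} You stop the process from below instead: you force $Y_i=0$ once the $X$-prefix sum reaches $m$. You check that this leaves the event $\{X\ge m\}$ unchanged. You then run the exponential-moment argument directly along the filtration generated by $X$.

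\textbf{Comparison.} Both proofs rest on the same trick: alter the process after the threshold is crossed, so that the conditional bound $p$ holds for every history, without affecting whether the threshold is reached.
\begin{itemize}
\item Your version needs no coupling and proves the tail bound from scratch. It drops the factor $2$, which in the paper comes from invoking a two-sided bound.
\item Your elementary inequality does check out. Its second derivative comparison reduces to $\frac{1}{1+\delta}\ge\frac{8}{(2+\delta)^3}$, i.e.\ $(2+\delta)^3\ge 8(1+\delta)$.
\item The paper's coupling buys a stronger structural conclusion: stochastic domination $\Pr[X\ge k]\le\Pr[\mathrm{Bin}(n,p)\ge k]$. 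Any binomial tail estimate can then be used without redoing the moment computation.
\end{itemize}
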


\begin{proof}
    Let $k=(1+\delta)\cdot pn$. We use a coupling argument. Let $U_1, \dots, U_n$ be a sequence of i.i.d. random variables from the uniform distribution on $[0,1]$. We define three corresponding sequences of binary random variables, $\{X'_i\}$, $\{Y'_i\}$, and $\{Y_i\}$, for $i \in [n]$.

    The sequence $\{X'_i\}$ is constructed to have the same joint distribution as $\{X_i\}$:
    $$
        X'_i = 1 \iff U_i \le \Pr[X_i=1|X'_1,\dots,X'_{i-1}].
    $$
    The sequence $\{Y'_i\}$ is defined as:
    $$
        Y'_i = 1 \iff (U_i \le p) \lor \left(\sum_{t=1}^{i-1} Y'_t \ge k\right).
    $$
    Finally, $\{Y_i\}$ is a sequence of i.i.d. Bernoulli bits:
    $$
        Y_i = 1 \iff U_i \le p.
    $$
    Let $X' = \sum_{i=1}^n X'_i$, $Y' = \sum_{i=1}^n Y'_i$, and $Y = \sum_{i=1}^n Y_i$. By construction, $\Pr[X \ge k] = \Pr[X' \ge k]$.

    We first prove by induction that $X'_i \le Y'_i$ for all $i \in [n]$. The base case $i=1$ holds as $\Pr[X_1=1] \le p$. For the inductive step, assume $X'_t \le Y'_t$ for all $t < i$. If $\sum_{t<i} Y'_t \ge k$, then by definition $Y'_i=1$, so the inequality $X'_i \le Y'_i$ holds trivially. If $\sum_{t<i} Y'_t < k$, then by the inductive hypothesis, $\sum_{t<i} X'_t < k$. The given condition guarantees that $\Pr[X_i=1|X'_1,\dots,X'_{i-1}] \le p$. In this case, if $X'_i=1$, it implies $U_i \le p$, which in turn implies $Y'_i=1$. Consequently, $X'$ is stochastically dominated by $Y'$, which gives
    $$
        \Pr[X' \ge k] \le \Pr[Y' \ge k].
    $$
    
    Next, we show that $\Pr[Y' \ge k] = \Pr[Y \ge k]$. The inequality $\Pr[Y' \ge k] \ge \Pr[Y \ge k]$ follows from $Y'_i \ge Y_i$ for all $i$. For the other direction, fix any assignment to $\{U_i\}$ where $Y' \ge k$. Let $\tau = \min\{i \mid \sum_{t=1}^i Y'_t \ge k\}$. For the sum to first reach $k$ at time $\tau$, we must have $Y'_\tau=1$ and $\sum_{t=1}^{\tau-1}Y_t=\sum_{t=1}^{\tau-1} Y'_t = k-1 < k$ by our construction. This implies that $U_\tau \le p$. Thus, $Y_i=1$ and $\sum_{t=1}^i Y_t=k$. Therefore, $\Pr[Y' \ge k] \le \Pr[Y \ge k]$.

    Combining these results, we have $\Pr[X \ge k] \le \Pr[Y \ge k]$. Since $Y$ is a sum of $n$ i.i.d. $\text{Bernoulli}(p)$ random variables, the inequality follows from a standard multiplicative Chernoff bound.
\end{proof}

\end{document}